\newcommand{\pr}{\mathsf{pr}}
\newcommand{\np}{\mathsf{np}}
\newcommand{\otyp}{\mathsf{o}}
\renewcommand{\r}{\mathsf{r}}
\newcommand{\Pp}{\mathcal{P}}
\newcommand{\Gg}{\mathcal{G}}
\newcommand{\Tt}{\mathcal{T}}
\newcommand{\Nn}{\mathcal{N}}
\newcommand{\Rr}{\mathcal{R}}
\newcommand{\restr}{{\restriction}}
\newcommand{\zero}{\mathbf{0}}
\newcommand{\Nat}{\mathbb{N}}
\newcommand{\ord}{\mathit{ord}}
\newcommand{\dom}{\mathit{dom}}
\newcommand{\BT}{\mathit{BT}}
\newcommand{\leta}{\mathsf{a}}
\newcommand{\letb}{\mathsf{b}}
\newcommand{\letc}{\mathsf{c}}
\newcommand{\p}{\mathfrak{p}}
\newcommand{\setP}{\Sigma_\leta}
\newcommand{\arr}{\mathbin{\to}}
\newcommand{\dupl}{\mathit{dupl}}
\newcommand{\redb}{\to_\beta}
\newcommand{\lamdots}{.\cdots{}.}
\newcommand{\sort}{\mathsf{sort}}
\newcommand{\floor}[1]{\lfloor #1\rfloor}
\newcommand{\set}[1]{\{#1\}}
\newcommand{\setof}[2]{\{#1\mid#2\}}
\Crefname{equation}{Equality}{Equalities}
\Crefname{equalities}{Equalities}{Equalities}\creflabelformat{equalities}{(#2#1#3)}
\Crefname{inequality}{Inequality}{Inequalities}\creflabelformat{inequality}{(#2#1#3)}
\Crefname{implication}{Implication}{Implications}\creflabelformat{implication}{(#2#1#3)}
\title{Unboundedness for Recursion Schemes: A~Simpler Type System}
\author{David Barozzini}{Institute of Informatics, University of Warsaw, Poland}{dbarozzini@mimuw.edu.pl}{}{}
\author{Paweł Parys}{Institute of Informatics, University of Warsaw, Poland}{parys@mimuw.edu.pl}{https://orcid.org/0000-0001-7247-1408}{}
\author{Jan Wróblewski}{Institute of Informatics, University of Warsaw, Poland}{xi@mimuw.edu.pl}{https://orcid.org/0000-0002-0229-4239}{}
\authorrunning{D. Barozzini, P. Parys, and J. Wróblewski}
\keywords{Higher-order recursion schemes, boundedness, intersection types, safe schemes}
\begin{document}

\maketitle

\begin{abstract}
	Decidability of the problems of unboundedness and simultaneous unboundedness (aka.\@ the diagonal problem) for higher-order recursion schemes
	was established by Clemente, Parys, Salvati, and Walukiewicz (2016).
	Then a procedure of optimal complexity was presented by Parys (2017); this procedure used a complicated type system, involving multiple flags and markers.
	We present here a simpler and much more intuitive type system serving the same purpose.
	We prove that this type system allows to solve the unboundedness problem for a widely considered subclass of recursion schemes, called safe schemes.
	For unsafe recursion schemes we only have soundness of the type system: if one can establish a type derivation claiming that a recursion scheme is unbounded then it is indeed unbounded.
	Completeness of the type system for unsafe recursion schemes is left as an open question.
	Going further, we discuss an extension of the type system that allows to handle the simultaneous unboundedness problem.

	We also design and implement an algorithm that fully automatically checks unboundedness of a given recursion scheme, completing in a short time for a wide variety of inputs.
\end{abstract}

\section{Introduction}

	\emph{Higher-order recursion schemes} (\emph{recursion schemes} for short) proved to be useful in model-checking programs using higher-order functions,
	see e.g.\@ Kobayashi~\cite{Kobayashi-jacm}
	(recursion schemes are algorithmically manageable abstractions of such programs, faithfully representing the control flow).
	Higher-order functions are widely used in functional programming languages, like Haskell, OCaml, and Lisp;
	additionally, higher-order features are now present in most mainstream languages like Java, JavaScript, Python, or C++.

	The formalism of recursion schemes is equivalent via direct translations to simply-typed $\lambda Y$-calculus~\cite{schemes-lY} and to higher-order OI grammars~\cite{Damm82,KobeleSalvati}.
	Collapsible pushdown systems~\cite{collapsible} and ordered tree-pushdown systems~\cite{DBLP:conf/fsttcs/ClementePSW15} are other equivalent formalisms.
	Recursion schemes cover some other models such as indexed grammars~\cite{Aho68} and ordered multi-pushdown automata~\cite{OrderedMultiPushdown}.

	The usefulness of recursion schemes follows from the fact that trees generated by them have decidable MSO theory~\cite{ong-lics}.
	When the property to be verified is given by a parity automaton (a formalism equivalent to the MSO logic), and when the recursion scheme is of order $m$,
	the model-checking problem is $m$-\textsf{EXPTIME}-complete; already for reachability properties the problem is $(m-1)$-\textsf{EXPTIME}-complete~\cite{complexity-hors}.
	Although this high complexity may be threatening, there exist algorithms that behave well in practice.
	They make use of appropriate systems of intersection types.
	Namely, a Japanese group created model-checkers \textsc{TRecS}~\cite{Kobayashi-jacm} and \textsc{HorSat}~\cite{horsat},
	and prototype verification tools, \textsc{MoCHi}~\cite{mochi} and \textsc{EHMTT} verifier~\cite{ehmtt}, on top of them.
	A group hosted in Oxford created model-checkers \textsc{HORSC}~\cite{HORSC} and \textsc{TravMC2}~\cite{TravMC2}.
	Necessarily these model-checkers are very slow on some worst-case examples,
	but on schemes generated from some real life higher-order programs they usually work in a reasonable time
	(while in the worst-case a huge type derivation may be required, it is often the case that there exists a small type derivation that can be found quickly).

	In recent years, interest has arisen in model checking recursion schemes against properties that are not regular (i.e., not expressible in the MSO logic).
	This primarily concerns the \emph{unboundedness problem} for word languages recognized by recursion schemes~\cite{parys-itrs},
	and its generalization---the \emph{simultaneous unboundedness problem} (aka.\@ the diagonal problem)~\cite{diagonal-safe,diagonal,diagonal-types}.
	Decidability of the latter problem implied computability of downward closures (with respect to the subsequence relation) for these languages~\cite{Zetzsche-downward-closure},
	and decidability of their separability by piecewise testable languages~\cite{piecewise-testable}.
	It was also possible to establish decidability of the WMSO+$\mathsf{U}$ logic (an extension of MSO with a quantifier $\mathsf{U}$, talking about unboundedness)
	over trees generated by recursion schemes~\cite{wmsou-schemes}.

	Moreover, there is also a link to asynchronous shared-memory programs, being a common way to manage concurrent requests in a system.
	In asynchronous programming,
	each asynchronous function is a sequential program.
	When run, it can change the global shared state of the program and run other asynchronous functions.
	A scheduler repeatedly and non-deterministically executes pending asynchronous functions.
	Majumdar, Thinniyam and Zetzsche~\cite{asynchronous} have proven that when asynchronous functions are modeled as recursion schemes, the question
	whether there is an a priori upper bound on the number of pending executions of asynchronous functions
	can be reduced to the problem we consider here, namely the unboundedness problem for a single recursion scheme.

	In this paper we revisit the (simultaneous) unboundedness problem for word languages recognized by recursion schemes.
	This problem asks, for a set of letters $\setP$ and a language of words $L$ (recognized by a recursion scheme),
	whether for every $n\in\Nat$ there is a word in $L$ where every letter from $\setP$ occurs at least $n$ times.
	Equivalently, one can consider a recursion scheme generating an infinite tree $t$
	and ask whether for every $n\in\Nat$ there is a finite branch in $t$ where every letter from $\setP$ occurs at least $n$ times.
	The problem is already interesting when $|\setP|=1$; then we talk about the unboundedness problem.
	Decidability of the simultaneous unboundedness problem was first established by Hague, Kochems, and Ong~\cite{diagonal-safe},
	for a well-recognized subclass of recursion schemes, called \emph{safe} schemes~\cite{Damm82,easy-trees,blum-ong-safe,selection,schemes-lY}.
	The solution was then generalized to all recursion schemes by Clemente, Parys, Salvati, and Walukiewicz~\cite{diagonal}.
	These two algorithms are useless in practice, not only because their complexity is much higher than the optimal one,
	but mainly because they perform some transformations of recursion schemes that are extremely costly in every case, not only in the worst case.
	The complexity of the problem for recursion schemes (namely, $(m-1)$-\textsf{EXPTIME}-completeness for schemes of order $m$) was settled by Parys~\cite{diagonal-types};
	moreover, his solution uses intersection types, and thus it is potentially suitable for implementation
	(by analogy to the regular model-checking case, where algorithms using type systems led to reasonable implementations).
	In the single-letter case ($|\setP|=1$) the problem is still $(m-1)$-\textsf{EXPTIME}-complete;
	a slightly simpler type system for this case was presented by Parys~\cite{parys-itrs}.
	Unfortunately, the type systems of Parys~\cite{parys-itrs,diagonal-types} have two drawbacks.
	First, they are quite complicated: type judgments can be labeled by different kinds of flags and markers, which influence type derivations in a convoluted way.
	Second (and related to first), it seems that the number of choices for these flags and markers is quite large,
	and thus finding type derivations even for quite simple recursions schemes may be very costly.
	Having these drawbacks in mind, we leave experimental evaluation of these algorithms for future work.

	In this paper we rather consider a much simpler type system, proposed by Parys in his survey~\cite{parys-survey},
	based on an earlier work concerning lambda-terms representing functions on numerals~\cite{jfp-numerals}.
	This type system is much easier than the previous one: every type is labeled only by a single ``productivity flag'' having an intuitive meaning.
	Namely, the flag says whether the lambda-term under consideration is responsible for creating occurrences of the letter from $\setP$.
	It was only conjectured that this type system may be used to solve the unboundedness problem.

	Our contributions are as follows:
	\begin{itemize}
	\item	We prove that the type system allows to solve the unboundedness problem for all safe recursion schemes.
	\item	We show that the algorithm using the type system solves the unboundedness problem for safe recursion schemes of order $m$ in $(m-1)$-\textsf{EXPTIME} (being optimal).
	\item	We prove soundness of the type system for all (i.e., also unsafe) recursion schemes,
		saying that if one has found a type derivation claiming that a recursion schemes is unbounded then it is indeed unbounded.
		Completeness of the type system for unsafe recursion schemes is left as an open problem.
	\item	We implement an algorithm solving the unboundedness problem by means of the proposed type system, and we present results of our experiments.
		The outcome of the algorithm is always correct if the recursion scheme given on input is safe.
		When the recursion scheme is not safe, proofs of its unboundedness found by the algorithm are still guaranteed to be correct.
		However, our theorems do not guarantee that the algorithm will find a proof of unboundedness in the unsafe case, so if the algorithm fails to find such a proof,
		it does not mean that the unsafe recursive scheme is necessarily bounded.
	\item	We then generalize the type system to the simultaneous unboundedness problem (i.e., the multiletter case), obtaining the same properties:
		algorithm in $(m-1)$-\textsf{EXPTIME} for safe recursion schemes of order $m$, and soundness for all recursion schemes.
	\item	We implement an optimized version of our algorithm, \textsc{InfSat}, which is fast for a wide variety of inputs. We build upon the implementation of \textsc{HorSat}~\cite{horsat}.
		We modify \textsc{HorSat} benchmarks to conform to \textsc{InfSat} input format,
		and present their results to show that \textsc{InfSat} is able to handle inputs described by Broadbent and Kobayashi~\cite{horsat} as ``practical'' as well as additional
		benchmarks crafted to measure the speed of \textsc{InfSat}.
	\end{itemize}

\section{Preliminaries}\label{sec:prelim}

	The set of \emph{sorts} is constructed from a unique basic sort $\otyp$ using a binary operation $\arr$.
	Thus $\otyp$ is a sort and if $\alpha,\beta$ are sorts, so is $(\alpha\arr\beta)$.
	The order of a sort is defined by: $\ord(\otyp)=0$, and $\ord(\alpha\arr\beta)=\max(1+\ord(\alpha),\ord(\beta))$.

	A \emph{signature} $\Sigma$ is a set of typed constants, that is, symbols with associated sorts.
	We assume that sorts of all constants in the signature are of order at most $1$, that is, of the form $\underbrace{\otyp\arr\dots\arr\otyp\arr{}}_r\otyp$;
	for a constant of such a sort, $r$ is called its \emph{arity}.
	We fix a distinguished constant $\upomega\in\Sigma$ of arity $0$ (it will be used in places where a computation diverges).

	The set of \emph{infinitary simply-typed lambda-terms} is defined coinductively as follows.
	A constant $a\in\Sigma$ of sort $\alpha$ is a lambda-term of sort $\alpha$.
	For each sort $\alpha$ there is a countable set of variables $x^\alpha,y^\alpha,\dots$ that are also lambda-terms of sort $\alpha$.
	If $M$ is a lambda-term of sort $\beta$ and $x^\alpha$ a variable of sort $\alpha$ then $\lambda x^\alpha.M$ is a lambda-term of sort $\alpha\arr\beta$.
	Finally, if $M$ is a lambda-term of sort $\alpha\arr\beta$ and $N$ is a lambda-term of sort $\alpha$ then $M\,N$ is a lambda-term of sort $\beta$.
	As usual, we identify lambda-terms up to alpha-conversion.
	We often omit the sort annotation of variables, but formally every variable has a sort.
	We use the standard notions of free variables, substitution, and beta-reduction
	(of course during substitution and beta-reduction we rename bound variables to avoid name conflicts).
	A lambda-term is called \emph{closed} when it does not have free variables.
	For a lambda-term $M$ of sort $\alpha$, the order of $M$, denoted $\ord(M)$, is defined as $\ord(\alpha)$.

	The \emph{complexity} of a lambda-term $M$ is the maximum of orders of those subterms of $M$ that are not of the form $a\,M_1\,\dots\,M_k$, where $a$ is a constant and $k\geq 0$
	(or $0$ if there are no such subterms).
	Note that for most lambda-terms $M$ the complexity is just the maximum of orders of all subterms of $M$;
	the difference is only at complexity $0$ and $1$: we want lambda-terms built entirely from constants to have complexity $0$, not $1$.
	
	A closed lambda-term of sort $\otyp$ and complexity $0$ is called a \emph{tree}.
	Equivalently, a lambda-term is a tree if it is of the form $a\,M_1\,\dots\,M_r$, where $M_1,\dots,M_r$ are trees, and $r$ is the arity of $a$.
	While talking about a \emph{branch} of a tree, we mean a finite branch that ends in a leaf not being $\upomega$-labeled.
	Formally, a \emph{(finite) branch} of a tree $T=a\,T_1\,\dots\,T_r$ is a sequence of constants $a_1,a_2,\dots,a_k$ such that $a_1=a\neq\upomega$,
	and either $a_2,\dots,a_k$ (with $k\geq 2$) is a finite branch of some $T_i$, or $r=0$ and $k=1$.

	We consider B\"ohm trees only for closed lambda-terms of sort $\otyp$.
	For such a lambda-term $M$, its \emph{B\"ohm tree} is constructed by coinduction, as follows:
	if there is a sequence of beta-reductions from $M$ to a lambda-term of the form $a\,M_1\,\dots\,M_r$,
	and $T_1,\dots,T_r$ are B\"ohm trees of $M_1,\dots,M_r$, respectively, then $a\,T_1\,\dots\,T_r$ is a B\"ohm tree of $M$;
	if there is no such a sequence of beta-reductions from $M$, then the constant $\upomega$ is a B\"ohm tree of $M$.
	It is folklore that every closed lambda-term of sort $\otyp$ has exactly one B\"ohm tree (the order in which beta-reductions are performed does not matter); this tree is denoted by $\BT(M)$.
	Notice that a B\"ohm tree is indeed a tree, and that if $M$ is finite then its B\"ohm tree equals the beta-normal form of $M$.

	A \emph{recursion scheme} $\Gg$ is a finite representation of a closed lambda-term $\Lambda(\Gg)$ that is of sort $\otyp$ and regular,
	that is, has finitely many different subterms.
	We postpone the definition of a recursion scheme until \cref{sec:algorithm}.
	As the \emph{order} of $\Gg$ we understand the complexity of $\Lambda(\Gg)$.
	We do not claim that every regular lambda-term of sort $\otyp$ can be directly represented by a recursion scheme,
	however we remark that every regular lambda-term of sort $\otyp$ is equivalent (in the sense of having the same B\"ohm tree) to some lambda-term represented by a recursion scheme.

\subparagraph{Simultaneous unboundedness problem.}

	Fix a set of \emph{important constants} $\setP$.
	We say that a closed lambda-term $M$ of sort $\otyp$ is \emph{unbounded} (with respect to $\setP$)
	if for every $n\in\Nat$ there exists a finite branch of $\BT(M)$ with at least $n$ occurrences of every constant from $\setP$.
	The \emph{simultaneous unboundedness problem} (SUP) is to decide, given a recursion scheme $\Gg$, whether $\Lambda(\Gg)$ is unbounded.

	In the special case of $|\setP|=1$ we talk about the \emph{unboundedness problem}.

\section{Type system}\label{sec:types}

	In this section we present a type system that allows us to solve the (single-letter) unboundedness problem.
	The type system was first proposed in Parys' survey~\cite{parys-survey}, without any correctness proofs.

	In the remaining part of the paper, except for \cref{sec:multi-letter}, we assume that the signature $\Sigma$ consists of four constants: $\leta$ of arity $1$, $\letb$ of arity $2$, and $\letc$ and $\upomega$ of arity $0$,
	where only $\leta$ is important, that is $\setP=\set{\leta}$.
	This is without loss of generality, since we can replace a constant of arbitrary arity by a combination of these four constants, without changing the answer to the unboundedness problem.
	
	Before defining the type system (and safety), let us state a theorem describing its desired properties:
	
	\begin{theorem}\label{thm:main}
		The following two statements are equivalent for every safe closed lambda-term $M$ of sort $\otyp$:
		\begin{enumerate}[(1)]
		\item	$M$ is unbounded (i.e., for every $n\in\Nat$ there exists a finite branch of $\BT(M)$ with at least $n$ occurrences of the constant $\leta$);
		\item	for every $n\in\Nat$ there exists $v\geq n$ such that one can derive $\emptyset\vdash M:(v,\r)$.
		\end{enumerate}
	\end{theorem}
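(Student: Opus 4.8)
The plan is to prove the two implications separately, in the style standard for intersection type systems: $(2)\Rightarrow(1)$ is a \emph{soundness} statement, which I would obtain from subject reduction, and $(1)\Rightarrow(2)$ is a \emph{completeness} statement, which I would obtain from subject expansion. Safety of $M$ enters only in the second implication, consistently with soundness being claimed for all (also unsafe) recursion schemes.

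For $(2)\Rightarrow(1)$, the key lemma is that a derivation of $\emptyset\vdash M:(v,\r)$ is a finite certificate forcing $M$ to head-normalize and pinning down one branch of $\BT(M)$ along which $\leta$ occurs at least $v$ times. Concretely: if $\emptyset\vdash M:(v,\r)$ then $M\redb^* b\,M_1\dots M_r$ for some constant $b\ne\upomega$ (the flag $\r$ cannot be assigned to $\upomega$), subject reduction gives a still finite derivation of $b\,M_1\dots M_r:(v,\r)$ that must begin with the constant rule for $b$, and this rule exposes, for some $i$, a strictly smaller sub-derivation $\emptyset\vdash M_i:(v',\r)$ with $v'\ge v-1$, and with $v'\ge v$ unless $b=\leta$. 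Iterating, the descent terminates because the derivation size strictly decreases, ending at an $\letc$-leaf; and since the value never increases and drops by at most one, only at $\leta$-nodes, the constants visited form a finite branch of $\BT(M)$ with at least $v$ occurrences of $\leta$. As $(2)$ lets us pick $v\ge n$ for every $n$, this gives $(1)$. Nothing here inspects orders of subterms, consistently with soundness holding for all recursion schemes.

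For $(1)\Rightarrow(2)$, fix $n$ and, using $(1)$, pick a finite branch $a_1,\dots,a_k$ of $\BT(M)$ carrying at least $n$ occurrences of $\leta$. By definition of the B\"ohm tree this branch is exposed by a finite reduction: there is $M'$ with $M\redb^* M'$ such that $M'=a_1\,M^1_1\dots$, some argument of $M'$ equals $a_2\,M^2_1\dots$, and so on down to $a_k$. One directly writes a derivation $\emptyset\vdash M':(v,\r)$ with $v\ge n$ by typing only the constants along this path — every $\leta$ on it adding $1$ to the value, $\letc$ at the bottom providing the base case — and assigning the empty intersection to every off-path argument, which is possible because, for instance, the type token of $\letb$ that descends into one argument imposes no requirement on the other. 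It then remains to transport this derivation backwards along $M\redb^* M'$, i.e.\ to prove subject expansion: if $M_1\redb M_2$ and $\emptyset\vdash M_2:(v,\r)$, then $\emptyset\vdash M_1:(v,\r)$. For a redex $(\lambda x.P)\,Q$ this is the usual manoeuvre with intersection types: gather into one intersection all type tokens received by the various copies of $Q$ inside the derivation of $P[Q/x]$ (each backed by a sub-derivation of $Q$), reinsert the variable $x$ with that intersection to recover a derivation of $P$, and close with the abstraction and application rules, leaving the value unchanged since the derivation is only rearranged.

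The hard part, and the sole place safety is needed, is that this subject-expansion step must respect the \emph{single} productivity flag and the accounting of values, which is exactly what can break for unsafe $M$. In a safe term beta-reduction is safe: substitution never needs to rename free variables, so the copies of $Q$ created by the redex occupy positions in $P[Q/x]$ whose relationship to the tracked branch — in particular whether they lie on it, hence whether their flag is forced to be $\r$ rather than $\np$ — is already determined by $P$; this makes the tokens gathered for $x$ mutually consistent and makes the value carried along the branch split faithfully between the $P$-part and the $Q$-part. For unsafe $M$ this correspondence fails, which is the technical reason completeness is only conjectured there. I would therefore spend most of the effort stating this as an invariant about how the flags and values of subterms relate to the exposed branch, maintained along a reduction of $M$ by safe beta-steps (which suffice to expose any finite prefix of $\BT(M)$), and checking that a single safe beta-step preserves it.
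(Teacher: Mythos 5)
Your soundness direction ($(2)\Rightarrow(1)$) is essentially the paper's argument: cut the term to the finite part touched by the derivation, use subject reduction (value never decreases), and read the branch off the normal form. That half is fine.

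The completeness direction contains a genuine gap, and it sits exactly at the step you dispose of in one clause: ``leaving the value unchanged since the derivation is only rearranged.'' Value-preserving subject expansion is \emph{false} in this type system. The intersection on the left of an arrow is a \emph{set} of type pairs, and likewise the type environment is a set of bindings; in the $(@)$ rule the pairs $(f_i,\tau_i)$ are required to be pairwise distinct. So when you pass from a derivation for $P[Q/x]$ back to one for $(\lambda x.P)\,Q$, and several distinct copies of $Q$ receive the \emph{same} type pair $(\pr,\sigma)$ via subderivations with \emph{different} values, you may keep only one subderivation of $Q$ for that pair. If $n$ subderivations each of value $w$ are collapsed into one, the value contributed by $Q$ drops from $(n+1)w$ to $w$, while the duplication factor of the reinserted binding $x:(\pr,\sigma)$ compensates only by $n$. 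The value therefore genuinely shrinks, and your induction along $M\redb^* M'$ does not close. This is not a bookkeeping detail: it is the reason the theorem is stated with ``there exists $v\geq n$'' rather than ``$v=n$'', the reason the paper's completeness lemma only yields $n\leq H_m(v)$ for a tower function $H_m$, and the reason the multi-letter extension must switch from sets to $s$-multisets of type pairs.

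The paper's actual route around this is also where safety really enters, and it is different from the invariant you sketch. One first cuts $M$ to a finite term, makes it homogeneous, and then reduces in a specific order: at each complexity level $\ell+1$ one contracts only redexes $(\lambda x.K)\,L$ with $\ord(\lambda x.K)=\ell+1$ whose argument $L$ contains no variable of order at least $\ell$. Safety plus homogeneity then force $L$ to be \emph{closed} (all free variables of a superficially safe $L$ of order $\ell$ would have order at least $\ell$, but none are available). Closedness lets one discard subderivations of $L$ without disturbing the environment, and the absence of high-order variables in $L$ means the discarded subderivations carry no ``high-order'' value, so with the split value $u\oplus_\ell w$ one gets $2^u\cdot w\leq 2^{u'}\cdot w'$ from $2^n\geq n+1$; iterating over the $m$ order levels yields $n\leq H_m(v)$, which still suffices for statement (2). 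Your proposal is missing the quantitative loss, the compensation mechanism, the reduction strategy that makes the arguments closed, and the homogenization step that the strategy relies on.
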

	
	In this theorem, type judgments contain a natural number $v$, called a \emph{productivity value}.
	The goal of this value is to approximately count the number of occurrences of the important constant $\leta$ on a selected branch of $\BT(M)$.

	Types in the type system differ from sorts in that on the left side of $\arr$, instead of a single type, we have a set of so-called \emph{type pairs} $(f,\tau)$,
	where $\tau$ is a \emph{type}, and $f\in\set{\pr,\np}$ is a \emph{productivity flag} (where $\pr$ stands for productive, and $\np$ for nonproductive).
	The unique atomic type is denoted $\r$.
	More precisely, for each sort $\alpha$ we define the set $\Tt^\alpha$ of types of sort $\alpha$ as follows:
	\begin{align*}
		\Tt^\otyp=\set{\r},&&
		\Tt^{\alpha\arr\beta}=\Pp(\set{\pr,\np}\times\Tt^\alpha)\times\Tt^\beta,
	\end{align*}
	where $\Pp$ denotes the powerset.
	A type $(T,\tau)\in\Tt^{\alpha\arr\beta}$ is denoted as $\bigwedge T\arr\tau$,
	or $\bigwedge_{i\in I}(f_i,\tau_i)\arr\tau$ when $T=\setof{(f_i,\tau_i)}{i\in I}$.
	In this notation we implicitly assume that all the pairs $(f_i,\tau_i)$ are different.
	The empty intersection is denoted by $\top$.
	Moreover, to our terms we will not only assign a type $\tau$, but also a productivity flag $f\in\set{\pr,\np}$ (which together form a pair $(f,\tau)$).
	Let us emphasize that for every sort $\alpha$ the set $\Tt^\alpha$ is finite.

	Intuitively, a lambda-term has type $\bigwedge T\arr\tau$ when it can return $\tau$, while taking an argument for which we can derive all type pairs from $T$;
	simultaneously, while having such a type, the lambda-term is obligated to use its arguments in all ways described by type pairs from $T$.
	For example, the lambda-term $\lambda x.\letc$ does not use its argument, and hence it is necessarily of type $\top\arr\r$ (i.e., $\bigwedge T\arr\tau$ with $T=\emptyset$).
	
	To determine the productivity flag $f$ assigned to a lambda-term $M$, we should imagine that $M$ is a subterm of a closed term $K$ of sort $\otyp$,
	and we should select some finite branch in $\BT(K)$ (with the intuition that different choices of the branch correspond to different type derivations).
	Then, we assign to $M$ the flag $\pr$ (productive) when the subterm $M$ is responsible for increasing the number of occurrences of the constant $\leta$ on the selected branch.
	To be more precise, a lambda-term is responsible for producing occurrences of a constant $\leta$ in two cases.
	First, when it explicitly contains the constant $\leta$---assuming that this $\leta$ will be placed on the selected branch.
	Second, when it takes a productive argument (i.e., an argument responsible for producing $\leta$) and uses it at least twice.
	The first possibility occurs for example in the lambda-term $M_1=\lambda x.\leta\,x$; the constant $\leta$ is explicitly produced.
	In order to see the second possibility, consider the lambda-term $M_2=\lambda y.\lambda x.y\,(y\,x)$,
	and suppose that the argument received for $y$ is the aforementioned productive lambda-term $\lambda x.\leta\,x$, which outputs the constant $\leta$.
	Then, the lambda-term $M_2$ is itself responsible for increasing the number of occurrences of $\leta$ in the resulting tree,
	compared to the number of occurrences produced by the argument.
	Notice that the same lambda-term $M_2$ has also another type: the argument received for $y$ may be nonproductive (say $\lambda x.x$),
	and then $M_2$ becomes nonproductive as well.
	Next, let us compare $M_2$ with the lambda-term $M_2'=\lambda y.\lambda x.y\,x$, when used with the argument $\lambda x.\leta\,x$.
	Although the argument is used, and one $\leta$ is output, the lambda-term $M_2'$ has nothing to do with increasing the number of occurrences of $\leta$; it is nonproductive.

	A \emph{type judgment} is of the form $\Gamma\vdash M:(v,\tau)$, where we require that the type $\tau$ and the lambda-term $M$ are of the same sort.
	The \emph{type environment} $\Gamma$ is a set of bindings of variables of the form $x^\alpha:(g,\tau)$, where $g\in\set{\pr,\np}$ and $\tau\in\Tt^\alpha$.
	In $\Gamma$ we may have multiple bindings for the same variable.
	By $\dom(\Gamma)$ we denote the set of variables $x$ which are bound by $\Gamma$, and
	by $\Gamma\restr_\pr$ we denote the set of only those bindings $x:(g,\tau)$ from $\Gamma$ in which $g=\pr$.
	We are not only interested in whether some type can be derived, but we also want to assign a value to every derivation;
	to this end, a type judgment contains a number $v\in\Nat$, which is called a \emph{productivity value}.
	Having $v=0$ corresponds to the $\np$ flag, while positive values correspond to the $\pr$ flag.
	Thus, while a productivity flag says only whether any occurrence of the important constant $\leta$ is produced,
	the productivity value approximates (is a lower bound on) the number of produced occurrences of this constant.
	Note that in type judgments we store the productivity value, coming from the infinite set $\Nat$, while in types we abstract this value to the productivity flag,
	which allows us to have finitely many types.

	We now present rules of the type system, starting from rules for constants:
	\begin{mathpar}
	\inferrule{}{
		\emptyset\vdash \leta:(1,(f,\r)\arr\r)
	}\and
	\inferrule{}{
		\emptyset\vdash \letc:(0,\r)
	}\\
	\inferrule{}{
		\emptyset\vdash \letb:(0,(f,\r)\arr\top\arr\r)
	}\and
	\inferrule{}{
		\emptyset\vdash \letb:(0,\top\arr(f,\r)\arr\r)
	}
	\end{mathpar}
	Notice that in the rule for $\letb$ we have a type $(f,\r)$ (with an arbitrary flag $f$) only for one of the two arguments;
	this corresponds to the fact that we are interested in a single branch of the B\"ohm tree, so we want to descend only to a single child.
	Moreover, we do not have a rule for the constant $\upomega$, because by definition a branch cannot contain occurrences of this constant.

	While typing a variable $x$, we take its type from the type environment, and we use $0$ as the productivity value.
	The lambda-term $x$ itself is not responsible for producing any constants, no matter whether the lambda-term substituted for $x$ will produce any constants or not.
	A lambda-term becomes productive when a productive variable $x$ is used twice; we account for that in the application typing rule $(@)$.
	\begin{mathpar}
	\inferrule{}{
		x:(f,\tau)\vdash x:(0,\tau)
	}
	\end{mathpar}

	When we pass through a lambda-binder, we simply move some type pairs between the argument and the type environment:
	\begin{mathpar}
	\inferrule*[right=($\lambda$)]{
		\Gamma\cup\setof{x:(f_i,\tau_i)}{i\in I}\vdash K:(v,\tau)
	\\
		x\not\in \dom(\Gamma)
	}{
		\Gamma\vdash\lambda x.K:(v,\textstyle\bigwedge_{i\in I}(f_i,\tau_i)\arr\tau)
	}
	\end{mathpar}

	Before giving the last rule, we need one more definition.
	Given a family of type environments $\left(\Gamma_i\right)_{i \in J}$, a \emph{duplication factor}, denoted $\dupl((\Gamma_i)_{i\in J})$, equals
	$\sum_{i\in J}\left|\Gamma_i\restr_\pr\right|-\left|\bigcup_{i\in J}\Gamma_i\restr_\pr\right|$.
	It counts the number of repetitions (``duplications'') of productive type bindings in the type environments:
	a productive type binding belonging to one type environment does not add anything, a productive type binding belonging to two type environments adds $1$, and so on.
	\begin{mathpar}
	\inferrule*[right=$(@)$]{
		0\not\in I
	\\
		\forall i\in I.\ (f_i=\pr)\Leftrightarrow (v_i>0 \lor \Gamma_i\restr_\pr\neq\emptyset)
	\\
		\Gamma_0\vdash K:(v_0,\textstyle\bigwedge_{i\in I}(f_i,\tau_i)\arr\tau)
	\\
		\Gamma_i\vdash L:(v_i,\tau_i)\mbox{ for each }i\in I
	}{
		\textstyle\bigcup_{i\in\set{0}\cup I}\Gamma_i\vdash K\,L:(\dupl((\Gamma_i)_{i\in\set{0}\cup I})+\textstyle\sum_{i\in\set{0}\cup I}v_i,\tau)
	}
	\end{mathpar}
	Here, by using the notation $\bigwedge_{i\in I}(f_i,\tau_i)$, we assume that the pairs $(f_i,\tau_i)$ are all different.

	In the rule above, the condition $(f_i=\pr)\Leftrightarrow (v_i>0 \lor \Gamma_i\restr_\pr\neq\emptyset)$ means that when $K$ requires a ``productive'' argument,
	then either we can apply an argument $L$ that is itself productive,
	or we can apply a nonproductive $L$ that uses a productive variable (the argument obtained after substituting something for this variable will become productive).

	Using the $\dupl$ function we realize the intuition that when a variable responsible for creating occurrences of $\leta$ (i.e., productive) is used at least twice,
	then the lambda-term is itself responsible for increasing the number of occurrences of $\leta$;
	thus we add $\dupl$ to the productivity value.

	Because we are interested in counting duplications of type bindings, it was necessary to require that every type binding from the type environment is actually used somewhere
	(in particular $\Gamma\vdash M:(f,\tau)$ does not necessarily imply $\Gamma,x:(g,\sigma)\vdash M:(f,\tau)$).
	On the other hand, a type environment is a set: a repeated usage of a type binding is counted in the productivity value, but is not reflected in the type environment.

	Let us underline that although we consider infinite lambda-terms, all type derivations are required to be finite.
	This may look suspicious, but note that when a lambda-term has type $\top\arr\tau$ (like, e.g., the constant $\letb$), then we need no derivation for its argument.

	\begin{example}\label{ex:1}
		Let us give an example of a derivation for a lambda-term $\lambda y. \lambda z. y\,(y\,(\leta\,z))$ of sort $(\otyp \arr \otyp) \arr \otyp \arr \otyp$.
		In this particular derivation, we will assume that $y$ and $z$ are both productive.
		\begin{mathpar}
		\inferrule*[right=$(@)$]{
			\vdash \leta : (1, (\pr, \r) \arr \r) \\
			z:(\pr,\r)\vdash z:(0,\r)
		}{z:(\pr,\r)\vdash \leta\,z:(1,\r)}
		\end{mathpar}
		In the innermost application, we have an important constant $\leta$ and a productive variable $z$. The important constant has value 1. The value of $z$ is 0 because, even though it is productive, any important constants it produces will be just moved from the argument substituted for $z$. No important constant will be lost or produced during this process. There are no duplicates of variables in the application $\leta\,z$, so the value of the application is equal to sum of values, that is, 1.
		\begin{mathpar}
		\inferrule*[right=$(@)$]{
			y:(\pr,(\pr,\r)\arr\r)\vdash y:(0,(\pr,\r)\arr\r)\\
			z:(\pr,\r)\vdash \leta\,z:(1,\r)
		}{z:(\pr,\r),\ y:(\pr,(\pr,\r)\arr\r)\vdash y\,(\leta\,z):(1,\r)}
		\end{mathpar}
		We apply $y$ to $\leta\,z$. Variable $y$ is productive and takes a productive argument, which means that it incorporates its argument into the tree it produces and somehow increases the number of important constants in the process. However, this increase is computed in the lambda-term that is substituted for $y$, not here, hence it has value 0.
		\begin{mathpar}
		\inferrule*[right=$(@)$]{
			y:(\pr,(\pr,\r)\arr\r)\vdash y:(0,(\pr,\r)\arr\r)\\
			z:(\pr,\r),\ y:(\pr,(\pr,\r)\arr\r)\vdash y\,(\leta\,z):(1,\r)
		}{z:(\pr,\r),\ y:(\pr,(\pr,\r)\arr\r)\vdash y\,(y\,(\leta\,z)):(2,\r)}
		\end{mathpar}
		Both sides have $y$ in the environment here, so the duplication factor in $y\,(y\,(\leta\,z))$ is 1. We add it to the sum of values from both sides of the application, obtaining 2.
		\begin{mathpar}
		\inferrule*[right=$(\lambda)$]{
			\inferrule*[Right=$(\lambda)$]{
			z:(\pr,\r),\ y:(\pr,(\pr,\r)\arr\r)\vdash y\,(y\,(\leta\,z)):(2,\r)
		}{y:(\pr,(\pr,\r)\arr\r)\vdash \lambda z. y\,(y\,(\leta\,z)):(2, (\pr,\r) \arr \r)}
		}{\vdash \lambda y. \lambda z. y\,(y\,(\leta\,z)):(2,(\pr,(\pr,\r)\arr\r) \arr (\pr,\r) \arr \r)}
		\end{mathpar}
		The lambda-binders move $z$ and $y$ from the environment into the type without changing the value. The final value for this closed lambda-term is 2, as it always adds one important constant to the B\"ohm tree and, when something that increases the number of important constants is substituted for $y$, it applies that twice, causing at least one extra important constant to be added compared to what just one instance of $y$ would do.
	
		Note that it is possible to derive other type judgments for the lambda-term $y\,(y\,(\leta\,z))$. For example, the value would be equal to one if $y$ was not productive, even if $z$ was productive:
		\[
		y : (\np,(\pr,\r)\arr\r),\ z:(\pr,\r) \vdash y\,(y\,(\leta\,z)) : (1,\r).
		\]
		The rationale is that while $y$ takes productive arguments, it uses them exactly once and does not add an important constant to the B\"ohm tree either, so applying it any number of times will not change the number of important constants.
	\end{example}

\section{Soundness}\label{sec:soundness}

	In this section we prove the soundness of the type system, as described by following \lcnamecref{lem:soundness-fin},
	from which the $(2)\Rightarrow(1)$ implication of \cref{thm:main} follows immediately:

	\begin{lemma}\label{lem:soundness-fin}
		If we can derive $\emptyset\vdash M:(v,\r)$, where $M$ is a closed lambda-term of sort $\otyp$,
		then $\BT(M)$ has a finite branch containing at least $v$ occurrences of $\leta$.
	\end{lemma}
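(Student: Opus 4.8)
The plan is to prove a more general statement by induction on the structure of the (finite) type derivation, and then obtain the lemma as the closed, ground-sort special case. The right generalization must talk about open terms, because the derivation descends through $(\lambda)$ into judgments with nonempty environments. So I would prove: if $\Gamma \vdash M : (v,\tau)$ is derivable, then for every way of "realizing" the bindings in $\Gamma$ by actual lambda-terms --- i.e.\ substituting for each binding $x:(g,\sigma) \in \Gamma$ a closed term $N_{x,g,\sigma}$ that itself satisfies the corresponding soundness guarantee, with the productive bindings ($g=\pr$) realized by terms whose Böhm trees carry a branch with many $\leta$'s --- the resulting closed term $M\theta$ has a branch in $\BT(M\theta)$ containing at least $v$ occurrences of $\leta$, possibly plus the contributions flowing in from the realizing terms of the productive bindings. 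Concretely I expect the invariant to read roughly: a branch of $\BT(M\theta)$ with at least $v + \sum (\text{promised counts of the productive realizers, counted with multiplicity of use})$ occurrences of $\leta$; the $\dupl$ term in rule $(@)$ is exactly what bookkeeps the multiplicity, since a productive binding shared among $k$ subterms gets its incoming count added $k$ times but only appears once in the merged environment.

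The induction would go case by case on the last rule. For the constant rules, $\leta$ contributes one $\leta$ on the branch that descends into its argument (value $1$), $\letc$ and $\letb$ contribute none (value $0$), and for $\letb$ the single $(f,\r)$-argument tells us which child to descend into. For the variable rule $x:(f,\tau)\vdash x:(0,\tau)$, the term $M\theta$ is just the realizer $N$ for that binding, and the claim follows directly from the realizer's own guarantee; the value $0$ is correct because $x$ adds nothing beyond what its realizer already promised. For $(\lambda)$ the Böhm tree is unchanged (a $\lambda$ in head position is not part of a Böhm tree of a ground term --- but note we only ever ask about $\BT$ at sort $\otyp$, and the $\lambda$ case is reached only as a subderivation, so I handle it by folding the bound variable into $\theta$: realizing $\lambda x.K$ against an environment $\Gamma$ amounts to realizing $K$ against $\Gamma$ extended with realizers for $x$, which is exactly the premise). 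The interesting case is $(@)$: here $M\theta = (K\theta)(L\theta)$, and I beta-reduce the head; by the induction hypothesis for $K$ there is a branch of $\BT$ of the "head after applying $L$" with at least $v_0$ plus incoming contributions, where crucially each occurrence of a type pair $(f_i,\tau_i)$ for the argument position corresponds to one use-site of $L\theta$ inside the reduct. Plugging the induction hypothesis for each $\Gamma_i \vdash L:(v_i,\tau_i)$ into those use-sites, and observing that a use-site of $L\theta$ with a productive flag feeds a branch carrying $\geq v_i + (\text{its own incoming})$ occurrences, I sum everything up: the total is $v_0 + \sum_{i} v_i$ plus the incoming contributions of all productive bindings, where a binding shared by several of the $\Gamma_i$ gets counted once per environment it appears in --- precisely $\sum_i |\Gamma_i\restr_\pr| = |\bigcup_i \Gamma_i\restr_\pr| + \dupl(\ldots)$. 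This matches the value $\dupl((\Gamma_i)_i) + \sum_i v_i$ in the conclusion of $(@)$, against the merged environment $\bigcup_i \Gamma_i$.

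Two technical points need care. First, the condition $(f_i=\pr)\Leftrightarrow(v_i>0 \lor \Gamma_i\restr_\pr\neq\emptyset)$ must be used to ensure the accounting is consistent: an argument flagged $\pr$ really does carry a positive incoming-or-own count, so treating it as "productive realizer" in the invariant is legitimate; conversely an $\np$-flagged argument contributes $0$ and may safely be used multiple times without incrementing the count. Second, I must make sure the single branch chosen in $\BT((K\theta)(L\theta))$ is coherent: the branch descends through the head-normal form of the reduct, and at the finitely many positions where a copy of $L\theta$ (or of a productive realizer) sits on that branch, I concatenate the branch promised by the corresponding sub-induction-hypothesis. Because Böhm trees are built by a fixed sequence of beta-reductions and the derivation is finite, this concatenation is well-defined and terminates in a leaf that is not $\upomega$-labeled (the $\leta,\letb,\letc$ rules never produce $\upomega$). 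The main obstacle, and where I expect to spend the most effort, is stating the generalized invariant precisely enough that the $\dupl$ arithmetic in the $(@)$ step goes through without off-by-one errors --- in particular pinning down "incoming contribution of a productive binding, counted with multiplicity" so that merging environments (a set-theoretic union) loses exactly the duplications that $\dupl$ records, and making the branch-concatenation argument rigorous in the coinductive Böhm-tree setting.
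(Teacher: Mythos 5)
Your route is genuinely different from the paper's. The paper first cuts $M$ down to a finite term, then proves a subject-reduction lemma (\cref{lem:soundness-red}, via a substitution lemma) showing that each beta-step preserves derivability without decreasing the value, and finally reads the branch off the derivation for the beta-normal form, which is a tree. You instead propose a direct realizability-style induction on the typing derivation, carrying an invariant about open terms under substitutions that realize the environment. That route is viable in principle and would avoid the normalization/approximation detour, but it is heavier: for bindings and arguments of higher sort, ``the realizer's Böhm tree carries a branch with many $\leta$'s'' is not meaningful (such realizers have no Böhm tree), so the invariant must be a full logical relation defined by recursion on types; you gesture at this only for the $(\lambda)$ case.

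More importantly, the invariant as you state it is wrong in a way that makes the $(@)$ case fail. You propose crediting each productive binding's promised count \emph{with multiplicity of use}. Take $\Gamma_1=\Gamma_2=\set{z:(\pr,\r)}$ for two argument subderivations, each using $z$ exactly once. Then $\dupl((\Gamma_i)_i)=1$ is added to the conclusion's value, but under multiplicity-weighted accounting the two sides of the required inequality match exactly except for this extra $1$: there is no surplus to pay for it, since no single subderivation duplicated anything. The correct invariant credits each productive binding \emph{once per element of the set} $\Gamma\restr_\pr$; then the branch physically contains the realizer's $c\geq 1$ occurrences once per use, so a binding occurring in $k$ environments contributes $kc$ while only $c$ is owed, and the surplus $(k-1)c\geq k-1$ is exactly what absorbs $\dupl$. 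This is the same $kc\geq c+(k-1)$ compensation the paper uses inside its substitution lemma; your own parenthetical (``added $k$ times but only appears once in the merged environment'') describes the set-based version, contradicting the formula you wrote. With that correction, and with the logical relation spelled out at higher sorts, your plan should go through; as written, the central arithmetic does not close.
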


	We prove \cref{lem:soundness-fin} as follows.
	First, as in work of Parys~\cite{diagonal-types,parys-survey} we observe that instead of working directly with infinite lambda-terms $M$,
	we can ``cut off'' parts of $M$ not involved in the finite derivation of $\emptyset\vdash M:(v,\r)$, (i.e., subterms used as arguments for which no type pair is required).
	Formally, by cutting off we mean replacing by lambda-terms of the form $\lambda x_1\lamdots\lambda x_k.\upomega$,
	where the variables $x_1,\dots,x_k$ are chosen so that the sort of the lambda-term is appropriate.
	In consequence, it is enough to prove \cref{lem:soundness-fin} for finite lambda-terms.
	
	Second, we repeatedly use \cref{lem:soundness-red} below, reducing $M$ to its beta-normal form $N$, and never decreasing the productivity value.
	Finally, we observe that the beta-normal form~$N$ is simply a tree;
	a derivation of $\emptyset\vdash N:(v',\r)$, using only the rules for constants and application,
	describes some branch of this tree containing exactly $v'$ occurrences of the important constant~$\leta$.
	It remains to justify the following \lcnamecref{lem:soundness-red}, which describes a single beta-reduction:

	\begin{lemma}\label{lem:soundness-red}
		If we can derive $\emptyset\vdash M:(v,\r)$, where $M$ is a finite closed lambda-term of sort $\otyp$,
		and $M$ is not in the beta-normal form,
		then we can derive $\emptyset\vdash N:(v',\r)$ for a lambda-term $N$ such that $M\redb N$, and for some $v'$ satisfying $v\leq v'$.
	\end{lemma}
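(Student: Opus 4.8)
The plan is to prove a \emph{substitution lemma} and then apply it at the redex. Write $M$ as a context with one redex $C[(\lambda x.K)\,L]$, and $N=C[K[L/x]]$; since $C$ can be traversed by induction on its structure (the rules $(\lambda)$ and $(@)$ are clearly compatible with plugging a term of the same sort and with a value that does not decrease — for $(@)$ one reuses the same $f_i$'s and the same splitting of the environment, and the duplication factor can only benefit), it suffices to handle the case $M=(\lambda x.K)\,L$ and $N=K[L/x]$ directly, and then lift the conclusion through $C$. So the heart of the matter is: if $\emptyset\vdash(\lambda x.K)\,L:(v,\r)$ then $\emptyset\vdash K[L/x]:(v',\r)$ for some $v'\ge v$. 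More generally one proves, for arbitrary (not necessarily closed) $K$ and $L$, that from a derivation of $\Gamma_0\cup\set{x:(f_i,\tau_i)}{i\in I}\vdash K:(v_0,\sigma)$ together with derivations $\Gamma_i\vdash L:(v_i,\tau_i)$ satisfying the side condition $(f_i=\pr)\Leftrightarrow(v_i>0\lor\Gamma_i\restr_\pr\neq\emptyset)$ of rule $(@)$, one can derive $\bigcup_{i\in\set0\cup I}\Gamma_i\vdash K[L/x]:(v',\sigma)$ with $v'\ge\dupl((\Gamma_i)_{i\in\set0\cup I})+\sum_{i\in\set0\cup I}v_i$. Applying this with $\sigma=\r$ and then pushing the resulting judgment back through the context $C$ gives the lemma; note $\emptyset\vdash(\lambda x.K)\,L:(v,\r)$ forces, via $(\lambda)$ and $(@)$, exactly the hypotheses of the substitution statement with all the outer $\Gamma_i$ empty, hence $v'\ge v$.

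The substitution statement is proved by induction on the derivation of $K$. The base cases are a constant (no $x$, substitution is trivial, $I=\emptyset$, $\dupl=0$) and a variable: if $K=y\neq x$ then again $I=\emptyset$ and we are done; if $K=x$ then the derivation of $K$ is $x:(f,\tau)\vdash x:(0,\tau)$, so $I=\set{i_0}$ is a singleton, $\Gamma_0=\emptyset$, and $K[L/x]=L$; we must show $\Gamma_{i_0}\vdash L:(v',\tau)$ with $v'\ge v_{i_0}$, which is exactly the given derivation of $L$ (here $\dupl=0$ because only one environment carries productive bindings). The $(\lambda)$ case is routine: the bound variable is renamed away from $x$ and from $L$, so substitution commutes with the binder and the value is unchanged on both sides. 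The interesting case is $(@)$: $K=K_1\,K_2$, with the type pairs required for $x$ partitioned into those used inside $K_1$ and those used inside $K_2$ (a binding $x:(f_i,\tau_i)$ may in general be used in \emph{both}, which is precisely where a duplication arises). One applies the induction hypothesis to $K_1$ and to $K_2$ with the corresponding sub-families of the $L$-derivations, obtaining derivations of $K_1[L/x]$ and $K_2[L/x]$, and then re-applies rule $(@)$ to combine them into a derivation of $(K_1\,K_2)[L/x]$.

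The main obstacle — and the only genuinely delicate bookkeeping — is accounting for the productivity value across this last step: I must check that the value produced by re-applying $(@)$ is at least the value claimed by the statement, i.e.\ that the ``duplication mass'' is conserved (and can only increase). The key inequality is that for a disjoint union of index families, $\dupl$ is \emph{superadditive}: if we split the combined family into the sub-family feeding $K_1$ and the one feeding $K_2$, then $\dupl$ of the whole is at least the sum of the two $\dupl$'s of the parts plus whatever new duplications are created by a binding $x:(f_i,\tau_i)$ whose $L$-derivation's productive bindings now occur in both the $K_1$-side environment and the $K_2$-side environment, or by an $x$-binding that is itself needed on both sides. One has to verify this purely set-theoretically from the definition $\dupl((\Delta_j)_j)=\sum_j|\Delta_j\restr_\pr|-|\bigcup_j\Delta_j\restr_\pr|$, keeping track that $|\Gamma_i\restr_\pr|$ is counted once per \emph{occurrence} of the index $i$ but $\bigcup$ collapses repeated bindings — so re-using one $L$-derivation for an $x$-binding needed on both sides contributes its $|\Gamma_i\restr_\pr|$ twice to the sum but only once to the union, exactly matching a unit increase in the new duplication factor. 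I also have to re-verify the $(@)$ side condition $(f_i=\pr)\Leftrightarrow(v_i>0\lor\Gamma_i\restr_\pr\neq\emptyset)$ for the \emph{combined} environments when re-applying $(@)$ for $K_1\,K_2$; this is immediate because the flags and the $v_i$'s are inherited unchanged from the original derivations, and a binding's productivity does not depend on how it is shared. Everything else is straightforward induction-on-derivation plumbing.
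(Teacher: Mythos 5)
Your overall route---a substitution lemma proved by induction on the derivation of $K$, with the lost duplications of productive $x$-bindings compensated by the extra inserted copies of the derivation for $L$---is the same as the paper's. But as written the induction does not close, for a concrete reason: your induction hypothesis is too weak to let you re-apply the $(@)$ rule. The side condition of $(@)$ is an \emph{equivalence}, $(f_i=\pr)\Leftrightarrow(v_i>0\lor\Gamma_i\restr_\pr\neq\emptyset)$, and you dismiss re-checking it as ``immediate because the flags and the $v_i$'s are inherited unchanged.'' They are not: the whole point of the lemma is that the value of a subderivation may \emph{increase} under substitution. The $\pr$ direction is fine (an increase only helps), but if an argument position carries the flag $\np$, the original subderivation had value $0$ and a non-productive environment, and after substitution you must show the value is \emph{still exactly} $0$---otherwise the flag would have to flip to $\pr$, which changes the type of the function side and cannot be repaired. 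Your stated conclusion $v'\ge\dupl((\Gamma_i)_i)+\sum_i v_i$ gives only a lower bound, so the inductive step genuinely fails at every $(@)$ node with a nonproductive argument. The fix is to strengthen the statement with a zero-preservation clause, $v_0+\sum_{i}v_i=0\Rightarrow v'=0$, carried through the induction; this is exactly the extra implication the paper builds into its \cref{lem:soundness-sub} and into the conclusion of the reduction lemma itself.

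A secondary point: you prove the substitution statement for arbitrary $L$ with nonempty environments $\Gamma_i$, which makes the duplication bookkeeping considerably more delicate than your sketch suggests. When a productive binding of $\Gamma_i$ inserted at two leaves already coincides with a binding present in those branches for other reasons, the union collapses it and the duplication factor at the node where the $x$-binding used to be duplicated strictly \emph{drops}; the compensation then comes from new duplications created deeper in the derivation, and showing that the inductive $\dupl$ terms account for this requires the kind of index-set computation the paper carries out in the appendix. The paper avoids all of this by reducing the leftmost-outermost redex (equivalently, by structural analysis of a closed term of sort $\otyp$), so that $L$ is closed and every $\Gamma_i$ is empty, collapsing the conclusion to $v'\ge\sum_i v_i$ with no $\dupl$ term. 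Your more general statement is plausibly true (the paper remarks as much), but if you keep it you owe a full proof of the superadditivity claim, not just the one-binding-shared-twice case you describe; adopting the closed-$L$ restriction would be the cheaper repair.
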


	While proving this \lcnamecref{lem:soundness-red}, we consider the leftmost outermost redex, $(\lambda x.K)\,L$.
	Thanks to this choice, $L$ is necessarily a closed subterm.
	This simplifies the situation: it is enough to consider empty type environments
	(we remark, however, that \cref{lem:soundness-red} can be shown in a similar way for every reduction, not only for the leftmost outermost reduction).
	We want to replace every subderivation $D$ for a type judgment $\emptyset\vdash(\lambda x.K)\,L:(w,\tau)$ concerning this redex
	with a derivation $D'$ for $\Gamma\vdash K[L/x]:(w',\tau)$.
	We obtain $D'$ by appropriately reorganizing subderivations of $D$:
	we take the subderivation of $D$ concerning $K$, we replace every leaf deriving a type $\sigma$ for $x$ by the subderivation of $D$ deriving this type $\sigma$ for $L$,
	and we update type environments and productivity values appropriately.

	Notice that every subderivation concerning $L$ is moved to at least one leaf concerning $x$ (nothing can disappear).
	The only reason why the value of the derivation can decrease is that potentially a productive type binding $x:(\pr,\sigma)$ was duplicated (say, $n$ times) in the derivation concerning $K$.
	In $D'$ this binding is no longer present (in $K[L/x]$ there is no $x$) so the value decreases by $n$,
	but in this situation the subderivation deriving $\sigma$ for $L$ becomes inserted in $n+1$ leaves.
	This subderivation is productive, so by creating $n$ additional copies of this subderivation we increase the value at least by $n$, compensating the loss caused by elimination of $x$.
	This implies that $w\leq w'$, and consequently $v\leq v'$.
	Check \cref{app:soundness} for more details.

\section{Completeness for safe lambda-terms}\label{sec:completeness}

	In this section we prove completeness for a subclass of lambda-terms called safe lambda-terms.
	The question whether completeness holds for general lambda-terms is left open.

	A lambda-term $M$ is \emph{superficially safe} when for every free variable $x$ of $M$ it holds that $\ord(M)\leq\ord(x)$.
	A lambda-term $M$ is \emph{safe} if it is superficially safe, and if for its every subterm of the form $K\,L_1\,\dots\,L_k$, where $K$ is not an application and $k\geq 1$,
	all subterms $K,L_1,\dots,L_k$ are superficially safe.
	This definition of safety coincides with the definitions from Salvati and Walukiewicz~\cite{schemes-lY}, and from Blum and Ong~\cite{blum-ong-safe}.

	Completeness for safe lambda-terms is given by the following \lcnamecref{lem:completeness-fin}.

	\begin{lemma}\label{lem:completeness-fin}
		For every $m\in\Nat$ there exists a function $H_m\colon\Nat\to\Nat$ such that if $M$ is a closed safe lambda-term of sort $\otyp$ and complexity at most $m$,
		and in $\BT(M)$ there is a finite branch having at least $n$ occurrences of $\leta$,
		then we can derive $\emptyset\vdash M:(v,\r)$ for some $v$ such that $n\leq H_m(v)$.
	\end{lemma}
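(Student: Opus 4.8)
The plan is to run the soundness argument of \cref{sec:soundness} backwards, with the productivity value carefully tracked. Since we are given a \emph{finite} branch of $\BT(M)$, only a finite portion of $M$ is actually needed to produce it: ``cutting off'' the subterms that are never reached (replacing them by lambda-terms $\lambda x_1\lamdots\lambda x_k.\upomega$ of the appropriate sort, just as in the proof of \cref{lem:soundness-fin}) turns $M$ into a finite closed lambda-term that is still safe, still of complexity at most $m$, and still has a branch with at least $n$ occurrences of $\leta$ in its B\"ohm tree---which is now simply its beta-normal form. So it is enough to prove the following, for every $m$: there is $H_m\colon\Nat\to\Nat$ such that whenever $M$ is finite, closed, safe, of complexity at most $m$ and beta-reduces to a tree $T$ possessing a branch with $n$ occurrences of $\leta$, one can derive $\emptyset\vdash M:(v,\r)$ with $n\le H_m(v)$.

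For $T$ itself there is nothing to prove: following the chosen branch, using the constant rules and $(@)$, always descending into the child lying on the branch (and using the $\top$-typed slot for the other child of each $\letb$), produces a derivation $\emptyset\vdash T:(v,\r)$ that mentions no variables, so all duplication factors vanish and $v$ equals \emph{exactly} the number of $\leta$'s on the branch; hence $v\ge n$. This already settles complexity $0$ (a closed lambda-term of complexity $0$ is a tree), with $H_0$ the identity.

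I would then induct on the complexity $m$. For $m\ge 1$, contract in $M$ all redexes $(\lambda x.K)\,L$ whose operator $\lambda x.K$ has order $m$; such contractions create only redexes of order $<m$, and by strong normalization this process terminates, yielding a (still safe) lambda-term $M'$ of complexity at most $m-1$ that still beta-reduces to $T$. By the induction hypothesis there is $v'$ with $\emptyset\vdash M':(v',\r)$ and $n\le H_{m-1}(v')$. It then remains to transfer this derivation back across the just-performed reductions: concretely, a ``phase-wise subject expansion'' statement of the form ``if $\emptyset\vdash M':(v',\r)$ then $\emptyset\vdash M:(v,\r)$ for some $v$ with $v'\le P_m(v)$'', where $P_m\colon\Nat\to\Nat$ depends only on $m$. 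Setting $H_m:=H_{m-1}\circ P_m$ then finishes the induction.

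The phase-wise subject expansion is, I expect, the main obstacle. For a single contracted redex, turning $K[L/x]$ back into $(\lambda x.K)\,L$, one reorganizes the derivation exactly as the soundness proof does in the forward direction: one takes the part of the derivation dealing with $K$, replaces each subderivation for a copy of $L$ by a leaf $x:(f,\sigma)\vdash x:(0,\sigma)$ carrying the corresponding type pair, collects these pairs into the environment, and re-derives $L$ once for every \emph{distinct} pair it receives (for a productive pair, picking a copy of maximal productivity value). Because $L$ is a top-order argument and $M$ is safe, no renaming of bound variables is ever forced, so this surgery is clean. The productivity value can only decrease when a productive binding $x:(\pr,\sigma)$ is used with multiplicity $k\ge 2$: one keeps a single $L$-subderivation of value $v_\sigma$ and gains a duplication-factor contribution of $k-1$, while losing the sum $\sum_j v_j$ over the $k$ copies, for a net loss of $\sum_j v_j-v_\sigma-(k-1)$. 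The difficulty is that $k$ is unbounded, so a per-redex estimate cannot simply be iterated over a whole phase. Instead the accounting has to be done globally over the entire order-$m$ phase at once, using (i) that all contracted redexes share the same order $m$, so the arguments being duplicated are of order at most $m-1$, and (ii) the structural tameness of duplication in safe reduction, which makes the duplication factors accrued in the reconstructed derivation of $M$ large enough to compensate---up to a multiplicative distortion controlled by $m$ alone, together with the induction hypothesis applied to these lower-order arguments---for all the productivity that was ``lost'' when the $k$ copies were merged. Pinning down this global estimate, namely proving that one phase can inflate the productivity value only by a factor bounded purely in terms of the order, and showing that it is exactly the safety hypothesis that prevents an unbounded inflation here (mirroring the fact that completeness is open for unsafe lambda-terms), is where the real work of the proof lies; I would expect it to go through a suitable invariant relating, for every subterm, its productivity value to the number of occurrences of $\leta$ it contributes along the branch.
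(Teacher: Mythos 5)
Your overall architecture---cut off to a finite term, induct on the complexity, eliminate all top-order redexes in one phase, and prove a phase-wise subject expansion ``$\emptyset\vdash M':(v',\r)$ implies $\emptyset\vdash M:(v,\r)$ with $v'\le P_m(v)$''---is exactly the paper's, and your per-redex surgery (replace each $L$-subderivation by a leaf for $x$, keep one $L$-subderivation of maximal value per type pair) is also the right move. But the point where you stop, asserting that ``a per-redex estimate cannot simply be iterated over a whole phase'' and that the accounting must instead be done globally with some unspecified invariant, is a genuine gap: you never supply that global argument. The paper's resolution is the opposite of what you expect: a per-redex estimate \emph{does} iterate, but only after the measure is changed. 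One splits the productivity value of every type judgment as $u\oplus_\ell w$, where $u$ collects exactly the duplication factors of variables of order at least $\ell$ (with $\ell=m-1$ the order of the arguments contracted in the current phase) and $w$ collects everything else, and one tracks the potential $2^u\cdot w$ rather than $v=u+w$. For a single expansion of $(\lambda x.K)\,L$ with $L$ closed and containing no variables of order at least $\ell$, merging the $k$ copies of a productive $L$-subderivation into the one of maximal value decreases $w$ by a factor of at most $k$ (and the discarded copies contribute nothing to $u$, since $L$ has no high-order variables), while the $k$-fold use of $x:(\pr,\sigma)$ adds $k-1$ to $u$; since $2^{k-1}\ge k$, the potential $2^u\cdot w$ does not decrease. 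This is \cref{lem:compl-1red}, and it iterates over arbitrarily many reductions in the phase. Since the fully reduced term of the phase has no variables of order at least $\ell$ and hence $u=0$, one gets $v_\ell=2^0\cdot v_\ell\le 2^{u_{\ell+1}}\cdot w_{\ell+1}\le 2^{v_{\ell+1}}$, i.e.\ $P_m(v)=2^v$ and $H_m$ a tower of exponentials of height $m$. Your analysis in terms of the plain value $v$ (``net loss of $\sum_j v_j-v_\sigma-(k-1)$'') is correct and is exactly why no bound exists at that level; the missing idea is the change of potential, not a global accounting.

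A secondary gap: you take for granted that $L$ ``is a top-order argument'' and that safety then makes it closed. For that the paper must first convert $M$ into a \emph{homogeneous} safe term (arguments sorted by non-increasing order), which is possible for safe terms and preserves derivable judgments; only then does the first argument of an order-$(\ell+1)$ redex have order exactly $\ell$, so that safety (all free variables of $L$ have order at least $\ord(L)=\ell$) combined with the choice of a redex whose argument contains no variables of order at least $\ell$ forces $L$ to be closed. Without the homogenization step this chain of implications breaks down.
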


	While proving this \lcnamecref{lem:completeness-fin}, it is convenient to split the productivity value into two parts.
	Given some fixed $\ell\in\Nat$, instead of type judgments of the form $\Gamma\vdash N:(v,\tau)$
	we consider extended type judgments of the form $\Gamma\vdash N:(u\oplus_\ell w,\tau)$, where $u+w=v$.
	On $u$ we accumulate only duplication factors concerning variables of order at least $\ell$, and on $w$ the remaining part of the value
	(i.e., duplication factors concerning variables of order smaller than $\ell$, plus the number of rules for the constant $\leta$).
	Having this definition, we can state a counterpart of \cref{lem:soundness-red}:

	\begin{lemma}\label{lem:compl-1red}
		Suppose that we can derive $\Gamma\vdash K[L/x]:(u\oplus_\ell w,\tau)$, where $L$ is closed, has order $\ell$, and does not use any variables of order at least $\ell$.
		Then we can derive $\Gamma\vdash (\lambda x.K)\,L:(u'\oplus_\ell w',\tau)$ for some $u',w'$ such that $2^u\cdot w\leq 2^{u'}\cdot w'$ and $u+w=0\Rightarrow u'+w'=0$.
	\end{lemma}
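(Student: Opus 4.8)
The plan is to run the argument of \cref{lem:soundness-red} in reverse: given a derivation $D$ for $\Gamma\vdash K[L/x]:(u\oplus_\ell w,\tau)$, I will reconstruct a derivation $D'$ for $\Gamma\vdash(\lambda x.K)\,L:(u'\oplus_\ell w',\tau)$ by ``undoing'' the substitution. Concretely, $D$ is obtained from a (unique up to the usual bookkeeping) derivation for $K$ in which each leaf deriving a type pair $(f,\sigma)$ for the variable $x$ has been replaced by a copy $D_{(f,\sigma)}$ of a subderivation for $\Gamma_{(f,\sigma)}\vdash L:(v_{(f,\sigma)},\sigma)$. Since $L$ is closed (even though we are in the general-environment version of the lemma, $L$ uses no variables — it has no free variables because it appears as the argument of the redex in a closed context, and in the statement $L$ is closed), all these $\Gamma_{(f,\sigma)}$ are empty, which drastically simplifies the reorganization. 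To build $D'$ I first select, for each type pair $(f,\sigma)$ actually used for $x$ in $K$, \emph{one} representative copy of its $L$-subderivation; this gives the second premise $\emptyset\vdash L:(v_{(f,\sigma)},\sigma)$ of the $(@)$ rule applied to $(\lambda x.K)\,L$. Then I peel off the top $(\lambda)$ rule to obtain $\Gamma\cup\{x:(f,\sigma)\mid\dots\}\vdash K:(v_K,\tau)$, and finally apply $(@)$.

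The key step is the arithmetic of productivity values, split according to $\oplus_\ell$. First I would establish the bookkeeping identity for plain values: if $v$ is the value of $D$ and $v_K$ is the value of the derivation for $K$ obtained by contracting all the $L$-subderivations to leaves, then $v = v_K + \sum_{(f,\sigma)} v_{(f,\sigma)} + (\text{extra duplications created in }K[L/x])$, where the ``extra duplications'' term arises because productive bindings occurring in several of the inserted copies of $L$ (when a productive pair $(\pr,\sigma)$ is used for $x$ more than once in $K$) count as duplications in $K[L/x]$ but not in $(\lambda x.K)\,L$. Since $L$ is closed, those inserted copies of $L$ carry no variable bindings, so the only duplications that can disappear are the ones \emph{internal} to $K$ witnessing that $x$ (with a productive pair) is used $\ge 2$ times — and by the $(@)$ rule those are exactly the duplications that make the corresponding $v_{(f,\sigma)}$ (with $f=\pr$) get multiplied. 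This is where the factor $2^u$ versus $2^{u'}$ comes from: $L$ has order $\ell$, so every binding $x:(\pr,\sigma)$ has $\ord(x)=\ell\ge\ell$, hence a use of $x$ that is duplicated $k$ times contributes to the high-order part $u$; when we un-substitute, each such $k$-fold duplication of $x$ is replaced in $(\lambda x.K)\,L$ by a single use of $L$, and the $k$ copies of the productive subderivation $D_{(\pr,\sigma)}$ that had been counted in $D$ collapse. The net effect is that the high-order duplication count $u'$ of $D'$ satisfies $u' \ge u - (\text{collapsed duplications})$, while the value $v_{(\pr,\sigma)}$ of the single surviving copy of $L$ — which itself is a closed lambda-term of order $\ell$ all of whose internal duplications are at order $\ge\ell$ and so belong to its own high-order component — multiplies by roughly $2^{(\text{collapsed})}$. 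Tracking this through yields $2^u\cdot w \le 2^{u'}\cdot w'$: the low-order component $w$ is essentially preserved (duplications at order $<\ell$ inside $K$ are untouched, and the $\leta$-count is untouched), while the loss in $u$ is compensated multiplicatively by the gain in the $L$-copy's high-order value, since $2^{u'}\cdot w' \ge 2^{u'}\cdot(\text{one copy of }L\text{'s value})\cdot(\dots)$.

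The zero-preservation clause $u+w=0\Rightarrow u'+w'=0$ is easy: $u+w=v=0$ means $D$ is entirely nonproductive, so in particular every inserted $D_{(f,\sigma)}$ has $f=\np$ and value $0$, no $\leta$-rule is used, and no duplication occurs anywhere; un-substituting cannot create any of these, so $v'=u'+w'=0$.

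The main obstacle I anticipate is making the multiplicative bound $2^u\cdot w\le 2^{u'}\cdot w'$ precise rather than heuristic — in particular, correctly attributing each collapsed duplication to exactly one productive pair $(\pr,\sigma)$ used for $x$, and checking that the surviving copy of $L$ really does accumulate, in its \emph{own} high-order component $u_L$, at least the number of collapsed duplications so that $2^{u_L}$ dominates the lost factor. This requires care because the definition of $\dupl$ is about \emph{sets} of bindings across a family of environments, so ``how many times a binding was duplicated'' has to be read off the $\dupl$-summands along the spine of the derivation for $K$, and one must be sure no double-counting occurs when several distinct productive pairs for $x$ interact. A secondary subtlety is that when the same productive subderivation for $L$ is reused for a single pair $(\pr,\sigma)$ across $k$ leaves, those $k$ copies of $L$ may \emph{themselves} share productive sub-bindings internal to $L$ (if $L$ had free variables) — but here $L$ is closed, which is exactly why the lemma's hypothesis insists on $L$ closed, and that closedness is what makes the whole reorganization go through cleanly.
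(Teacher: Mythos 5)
Your overall skeleton matches the paper's: read off a derivation for $K$ from the one for $K[L/x]$, keep a single representative subderivation of $L$ per type pair, then reapply $(\lambda)$ and $(@)$. But the quantitative heart of the argument --- the inequality $2^u\cdot w\leq 2^{u'}\cdot w'$ --- is argued with the roles of the two components essentially inverted, and as written it would not go through. Concretely: in $K[L/x]$ the variable $x$ does not occur, so the $n$-fold duplication of a productive binding $x:(\pr,\sigma)$ lives in the \emph{new} derivation for $(\lambda x.K)\,L$, not in the old one; since $\ord(x)=\ell$, this \emph{increases} the first component, giving $u'\geq u+n$, rather than your claimed $u'\geq u-(\text{collapsed duplications})$. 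Dually, $w$ is \emph{not} ``essentially preserved'': the $n$ removed copies of the $L$-subderivation each contributed their $\leta$-count and low-order duplications to $w$, so $w'$ may shrink by a factor of up to $n+1$. Nothing about the surviving copy of $L$ ``multiplies''. The correct accounting is: $w$ drops by at most a factor $n+1$, $u$ grows by at least $n$, and $2^n\geq n+1$ closes the gap. Your version, in which the compensation is supposed to come from the surviving $L$-copy's own high-order component $u_L$, cannot work, because $u_L=0$.

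That last point exposes a second inversion: you assert that $L$'s internal duplications ``are at order $\geq\ell$ and so belong to its own high-order component''. The hypothesis says the opposite --- $L$ uses \emph{no} variables of order at least $\ell$ --- so every subderivation for $L$ has first component $\zero$, and all of $L$'s value sits in $w$. This is precisely why removing copies of $L$ only damages $w$ (recoverable multiplicatively) and never damages $u$ (which would be fatal, since a lost unit of $u$ halves $2^u$ and cannot be recovered from a $+n$ gain when the removed copies carry unbounded $u$-value). Finally, you never specify \emph{which} representative copy of $L$ to keep; you must keep the one with the largest (second component of the) value, otherwise the bound ``the $n$ removed subderivations have values not greater than the one that remains'', and hence the factor $n+1$, fails. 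The reorganization plan and the zero-preservation clause are fine; the inequality needs to be redone with the directions corrected.
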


	Similarly to \cref{lem:soundness-red}, the derivation concerning $(\lambda x.K)\,L$ in the above \lcnamecref{lem:compl-1red}
	is obtained by appropriately reorganizing subderivations of the derivation concerning $K[L/x]$.
	In the type derivation concerning $K[L/x]$, there are some (zero or more) subderivations concerning $L$.
	A difficulty is caused by the fact that
	the same type pair $(f,\sigma)$ may be derived for multiple copies of $L$ in $K[L/x]$, using different subderivations.
	For every such $(f,\sigma)$ derived for $L$ we should choose just one subderivation, so that we can use it for the only copy of $L$ in $(\lambda x.K)\,L$.
	We choose the subderivation that provides the largest second component of the value; the other subderivations are removed.
	The value of the new derivation decreases, because some subderivations are removed, and increases, because we have a new variable $x$, that may cause some duplications.

	It remains to see that the value $u'\oplus_\ell w'$ of the new derivation satisfies the inequality $2^u\cdot w\leq 2^{u'}\cdot w'$.
	Consider some type pair $(f,\sigma)$ derived for $L$.
	If $f=\np$, the value of the removed subderivations is $0$, and the duplications of $x:(f,\sigma)$ in the type environment are not counted,
	so such a type pair does not cause any change of the value.
	Suppose that $f=\pr$, and that we have removed $n$ subderivations proving the type pair $(f,\sigma)$ for $L$;
	this may decrease the second component of the value at most $n+1$ times (the $n$ removed subderivations have values not greater than the one that remains).
	Simultaneously, in $K$ we use $n+1$ times the variable $x$ with this type pair $(f,\sigma)$, which increases the value (the total duplication factor) by $n$.
	The key point is that $L$ does not use any variables of order at least~$\ell$,
	and thus the first component (concerning variables of order at least $\ell$) of the value of subderivations for $L$ is $0$.
	Thus, we decrease the second component of the value at most $n+1$ times,
	and we increase the first component of the value by at least $n$.
	Since $2^n\geq n+1$, we obtain the required inequality between values.

	It is also important that $L$ is closed so that after removing some subderivations concerning $L$, the type environment of the whole derivation remains unchanged.
	This finishes the proof of \cref{lem:compl-1red} (a more formal proof, containing all details, can be found in \cref{app:compl-1red}).

	We now come back to the proof of \cref{lem:completeness-fin}.
	First, as in the previous section, we can assume that $M$ is finite by ``cutting off'' (i.e., replacing with $\lambda x_1\lamdots\lambda x_k.\upomega$)
	its parts not needed for producing the selected finite branch of $\BT(M)$.
	
	Second, it is convenient to assume here that $M$ is homogeneous.
	A sort $\alpha_1\arr\dots\arr\alpha_k\arr\otyp$ is \emph{homogeneous} if $\ord(\alpha_1)\geq\dots\geq\ord(\alpha_k)$ and all $\alpha_1,\dots,\alpha_k$ are homogeneous.
	A lambda-term is homogeneous if its every subterm has a homogeneous sort.
	It is known that every finite closed safe lambda-term $M$ of sort $\otyp$ can be converted into a lambda-term $M'$ that is additionally homogeneous~\cite{homogeneous},
	but has the same beta-normal form.
	Analyzing how $M$ is transformed into $M'$ (in~\cite{homogeneous}), it is tedious but straightforward to check that
	we can derive $\emptyset\vdash M:(v,\r)$ if and only if we can derive $\emptyset\vdash M':(v,\r)$;
	this is done in \cref{app:homogeneous}.

	It is thus enough to prove \cref{lem:completeness-fin} assuming that $M$ is finite and homogeneous.
	To this end, we consider a particular sequence of reductions leading from $M$ to its beta-normal form $\BT(M)$.
	Namely, whenever a lambda-term $N$ reached so far (i.e., after some number of reductions) from $M$ is of complexity $\ell+1$,
	then we reduce in $N$ a redex $(\lambda x.K)\,L$ such that $\ord(\lambda x.K)=\ell+1$,
	and no variables of order at least $\ell$ occur in $L$, both as free variables and as bound variables; we call such a redex \emph{$\ell$-good}.
	Such a redex always exists: among redexes with $\ord(\lambda x.K)=\ell+1$ it is enough to choose the rightmost one.
	The complexity of a lambda-term cannot increase during a beta-reduction,
	thus in the sequence of reductions we can find lambda-terms $M_m,M_{m-1},\dots,M_0$, where $M_m=M$, and $M_0=\BT(M)$, and the complexity of every $M_{\ell+1}$ is at most $\ell+1$,
	and $M_\ell$ can be reached from $M_{\ell+1}$ by a sequence of $\ell$-good reductions.

	Homogeneity is preserved during beta-reductions.
	Moreover, $\ell$-good beta-reductions preserve safety (while this is not true for arbitrary beta-reductions).
	Indeed, homogeneity for an $\ell$-good redex $(\lambda x.K)\,L$ implies that $\ord(L)=\ell$ (the first argument is of the highest order, i.e., of order $\ell$),
	thus safety implies that all free variables of $L$ are of order at least $\ell$.
	But, by the definition of an $\ell$-good redex, no variables of order at least $\ell$ occur in $L$.
	It follows that $L$ is closed.
	While substituting a closed lambda-term $L$ for $x$, all superficially safe subterms of $K$ remain superficially safe (no new free variables appear).
	Thus the lambda-term after the $\ell$-good beta-reduction remains safe.

	Recall the sequence of lambda-terms $M_m,M_{m-1},\dots,M_0$ defined above.
	Assuming that in $\BT(M)=M_0$ there is a finite branch having exactly $n$ occurrences of $a$, we can derive $\emptyset\vdash M_0:(n,\r)$ using only the rules for application and constants.
	Suppose now that we can derive $\emptyset\vdash M_\ell:(v_\ell,\r)$ for some $\ell\in\set{0,\dots,m-1}$.
	We want to find a derivation of $\emptyset\vdash M_{\ell+1}:(v_{\ell+1},\r)$, where for arbitrarily large values $v_\ell$, the values $v_{\ell+1}$ are also arbitrarily large
	(more precisely, we obtain the inequality $v_\ell\leq 2^{v_{\ell+1}}$).
	Notice that between $M_{\ell+1}$ and $M_\ell$ there may be arbitrarily many ($\ell$-good) reductions.
	Consider thus some $\ell$-good redex $(\lambda x.K)\,L$ that is reduced at some moment between $M_{\ell+1}$ and $M_\ell$.
	By definition, $L$ does not use any variables of order at least $\ell$ and, as already observed, $L$ is closed;
	thus, \cref{lem:compl-1red} can be applied.
	We start with $\emptyset\vdash M_\ell:(0\oplus_\ell v_\ell,\r)$ (notice that in $M_\ell$ there are no variables of order $\ell$ or higher, so the first component of the value is $0$).
	Then, we use \cref{lem:compl-1red} for every $\ell$-good beta-reduction between $M_{\ell+1}$ and $M_\ell$.
	This leads to $\emptyset\vdash M_{\ell+1}:(u_{\ell+1}\oplus_\ell w_{\ell+1},\r)$,
	where $v_\ell\leq 2^{u_{\ell+1}}\cdot w_{\ell+1}\leq 2^{v_{\ell+1}}$ (taking $v_{\ell+1}=u_{\ell+1}+w_{\ell+1}$), as required.

	The function $H_m$, appearing in the statement of \cref{lem:completeness-fin}, can be defined as a tower of powers of $2$ of height $m$: $H_0(v)=v$ and $H_{\ell+1}(v)=H_\ell(2^v)$.
	Then from $n\leq H_\ell(v_\ell)$ it follows that $n\leq H_{\ell+1}(v_{\ell+1})$; since $n=H_0(v_0)$, we thus obtain the desired inequality $n\leq H_m(v_m)$.

\section{The algorithm}\label{sec:algorithm}

	Both design and implementation of our algorithm is based on the \textsc{HorSat} algorithm,
	a saturation-based algorithm for model checking recursion schemes against alternating automata by Broadbent and Kobayashi~\cite{horsat}.

	Our type system was described in previous sections to work with any infinitary lambda-term.
	The algorithm, in turn, inputs infinitary lambda-terms represented in a finite way, in the form of a recursion scheme.
	To be concrete, we make this representation explicit now:
	A recursion scheme $\Gg=(\Sigma,\Nn,\Rr,X_\mathsf{st})$ consists of
	\begin{itemize}
	\item	a signature $\Sigma$ (i.e., a set of constants with assigned arities),
	\item	a set $\Nn$ of nonterminals with assigned sorts (formally, nonterminals are just distinguished variables),
	\item	a mapping $\Rr$ from nonterminals in $\Nn$ to finite lambda-terms such that $\Rr(X)$ is of the same sort as $X$,
		has no free variables other than nonterminals from $\Nn$, and is of the form $\lambda x_1\lamdots\lambda x_n.K$, where the subterm $K$ does not contain any lambda-binders, and
	\item	a starting nonterminal $X_\mathsf{st}\in\Nn$ of sort $\otyp$.
	\end{itemize}
	We assume that elements of $\Nn$ are not used as bound variables, and that $\Rr(X)$ is not a nonterminal.
	Furthermore, as in previous sections, we assume for simplicity that $\Sigma=\set{\leta,\letb,\letc}$, where the constants $\leta,\letb,\letc$ have arity $1,2,0$, respectively
	(the implementation, however, accepts arbitrary signatures).
	
	Given a recursion scheme $\Gg$, and a lambda-term $M$ (possibly containing some nonterminals from $\Nn$),
	let $\Lambda_\Gg(M)$ be the lambda-term obtained as a limit of applying repeatedly the following operation to $M$:
	take an occurrence of some nonterminal $X$, and replace it by $\Rr(X)$
	(the nonterminals should be chosen so that every nonterminal is eventually replaced).
	We remark that while substituting $\Rr(X)$ for a nonterminal $X$, there is no need for any renaming of variables (capture-avoiding substitution),
	since $\Rr(X)$ does not have free variables other than nonterminals.
	The infinitary lambda-term \emph{represented by} $\Gg$ is defined as $\Lambda_\Gg(X_\mathsf{st})$, and denoted $\Lambda(\Gg)$.
	Observe that $\Lambda(\Gg)$ is a closed lambda-term of sort $\otyp$.
	We say that $\Gg$ is \emph{safe} if $\Rr(X)$ is safe for every $X\in\Nn$; then $\Lambda(\Gg)$ is safe as well.

	The goal of the algorithm is to determine whether $\Lambda(\Gg)$ for a given recursion scheme $\Gg$ is unbounded or not.
	In the light of \cref{thm:main}, this boils down to checking whether one can derive $\emptyset\vdash\Lambda(\Gg):(v,\r)$ for arbitrarily large values of $v$.
	The idea is to find a single derivation with a productive ``loop''; by repeating this loop, one can increase the productivity value arbitrarily.
	We make this more precise now.

	First, we determine type pairs that can be derived for non-terminals.
	Namely, for $v\in\Nat$ let $\p(v)=\pr$ if $v>0$ and $\pr(v)=\np$ otherwise.
	Let $\Tt_\Gg$ be the smallest set containing all bindings $X:(\p(v),\tau)$ (with $X\in\Gg$) such that we can derive $\emptyset\vdash\Rr(X):(v,\tau)$,
	potentially using $\emptyset\vdash Y:(w,\sigma)$ for $(Y:(\p(w),\sigma))\in\Tt_\Gg$ as assumptions (i.e., as additional typing rules).
	
	Note that productivity values in derivations may be shifted (i.e., increased/decreased by a constant):
	we can derive $\emptyset\vdash\Rr(X):(v,\tau)$ out of an assumption $\emptyset\vdash Y:(w,\sigma)$
	if and only if we can derive $\emptyset\vdash\Rr(X):(v+(w'-w),\tau)$ out of an assumption $\emptyset\vdash Y:(w',\sigma)$ with $\p(w)=\p(w')$.
	It is thus enough to try assumptions $\emptyset\vdash Y:(w,\sigma)$ with $w=0$ and $w=1$ only.
	
	We have only finitely many bindings that may be potentially added to $\Tt_\Gg$,
 	as there are finitely many possible types per sort and each nonterminal of $\Gg$ has a fixed sort.
	Moreover, taking into account the above, there are only finitely many derivations that may be potentially created for lambda-terms $\Rr(X)$.
	It follows that $\Tt_\Gg$ may be computed using saturation (i.e., as the least fixed point).
	The following lemma is immediate:
	
	\begin{lemma}
		For any nonterminal $X\in\Nn$ and any type pair $(f,\tau)$
		one can derive $\Lambda_\Gg(X):(v,\tau)$ for some $v$ with $\p(v)=f$ if and only if $(X:(f,\tau))\in\Tt_\Gg$.
	\end{lemma}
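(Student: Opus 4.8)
The plan is to prove the two implications by complementary inductions: unfold $\Tt_\Gg$ into a genuine derivation over $\Lambda_\Gg(X)$ for $(\Leftarrow)$, and fold a genuine derivation over $\Lambda_\Gg(X)$ back into $\Tt_\Gg$ for $(\Rightarrow)$. Call a derivation \emph{$\Tt_\Gg$-admissible} if, besides the ordinary rules, it may use as axioms the judgments $\emptyset\vdash Y:(w,\sigma)$ with $(Y:(\p(w),\sigma))\in\Tt_\Gg$; the definition of $\Tt_\Gg$ as a least fixed point says precisely that $(Y:(f,\sigma))\in\Tt_\Gg$ if and only if some $\Tt_\Gg$-admissible derivation proves $\emptyset\vdash\Rr(Y):(v,\sigma)$ with $\p(v)=f$. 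I would rely on three simple facts, all already in play: $\Lambda_\Gg$ commutes with abstraction and application and never contracts a redex (it only expands nonterminals), so in particular $\Lambda_\Gg(\Rr(Y))=\Lambda_\Gg(Y)$; shifting the value of one axiom in a $\Tt_\Gg$-admissible derivation shifts the conclusion's value by the same amount and, because productivity values are nonnegative and only added up by the rules, leaves the conclusion's productivity flag unchanged; and a lone nonterminal can be typed with the empty environment only by an axiom.

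For $(\Leftarrow)$ I would induct on the stage at which $(X:(f,\tau))$ first enters $\Tt_\Gg$. Fix a $\Tt_\Gg$-admissible derivation of $\emptyset\vdash\Rr(X):(v_0,\tau)$ with $\p(v_0)=f$ whose axioms concern only bindings introduced at strictly earlier stages. By the induction hypothesis each such axiom $\emptyset\vdash Y:(w,\sigma)$ is matched by a genuine derivation of $\emptyset\vdash\Lambda_\Gg(Y):(w',\sigma)$ with $\p(w')=\p(w)$, which I splice in place of every leaf using that axiom; by the shift fact this only moves the overall value, so the flag stays $f$. After the splicing, every nonterminal still occurring in $\Rr(X)$ sits at a $\top$-position, where no subderivation is demanded, so it may simply be replaced by its unfolding without altering the derivation. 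Using $\Lambda_\Gg(\Rr(X))=\Lambda_\Gg(X)$, this is the required genuine derivation of $\emptyset\vdash\Lambda_\Gg(X):(\cdot,\tau)$ with flag $f$.

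For $(\Rightarrow)$ I would prove the stronger statement that, whenever a finite genuine derivation $D$ proves $\Gamma\vdash\Lambda_\Gg(M):(v,\tau)$ for a lambda-term $M$ possibly containing nonterminals, there is a $\Tt_\Gg$-admissible derivation of $\Gamma\vdash M:(v'',\tau)$ with $\p(v'')=\p(v)$; the lemma is then the special case $M:=X$, $\Gamma:=\emptyset$, finished by the ``lone nonterminal'' remark. The induction is on the size of $D$, and every case is immediate except when the head of $M$ is a nonterminal $Y$. There $\Lambda_\Gg$ expands $Y$ into $\Lambda_\Gg(\Rr(Y))$, and --- crucially, because $\Rr(Y)$ is by assumption not a bare nonterminal and $\Lambda_\Gg$ performs no beta-reduction --- this exposes at the corresponding position either at least one $\lambda$-binder or at least one application (or simply a constant, which $D$ types directly), all of which $D$ must consume with the rules $(\lambda)$ and $(@)$. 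Hence $D$ contains strictly smaller subderivations for the exposed body and arguments; applying the induction hypothesis to these and reassembling produces a $\Tt_\Gg$-admissible derivation of $\emptyset\vdash\Rr(Y):(v_Y,\sigma)$, which by the fixed-point characterisation puts the binding $Y:(\p(v_Y),\sigma)$ into $\Tt_\Gg$ and so makes the axiom for $Y$ available; assembling $\Gamma\vdash M:(v'',\tau)$ from that axiom and the remaining (induction-hypothesis) subderivations then finishes the step, the flag being preserved because the side conditions of $(@)$ and the duplication factor survive the value shifts.

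The one point that needs genuine attention --- and what I expect to be the crux --- is the interaction of $\Lambda_\Gg$ with the head of a term in the $(\Rightarrow)$ induction: the recursion terminates precisely because $\Rr(Y)$ is never a bare nonterminal, so expanding a head nonterminal always consumes at least one rule of $D$, and the finiteness of $D$ rules out the pathological (well-sorted but non-wellfounded along the spine) terms, which anyway have B\"ohm tree $\upomega$ and admit no finite derivation. Once this is in place, everything else is the routine subderivation surgery already performed for \cref{lem:soundness-red} and \cref{lem:compl-1red}, together with the flag-preservation bookkeeping recorded just above.
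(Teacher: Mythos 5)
Your proof is correct; the paper offers no argument for this lemma at all (it is declared ``immediate''), and your two inductions --- splicing genuine derivations into the axiom leaves for $(\Leftarrow)$, and folding a derivation over $\Lambda_\Gg(X)$ back into a $\Tt_\Gg$-admissible one over $\Rr(X)$ by induction on its size for $(\Rightarrow)$ --- are precisely the arguments the paper later carries out, in quantitative form, inside the proof of \cref{thm:main2}. Your observation that the $(\Rightarrow)$ induction terminates because $\Rr(Y)$ is never a bare nonterminal (so expanding a head nonterminal always consumes at least one rule of the finite derivation) correctly identifies the only point that is not pure bookkeeping.
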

	
	Next, knowing which type judgments may be derived, we want to detect a productive cycle.
	To this end, we create a graph with bindings from $\Tt_\Gg$ as nodes, called a \emph{derivation graph}.
	We draw an edge from $X:(\p(v),\tau)$ to $Y:(\p(w),\sigma)$ if one can derive $\emptyset\vdash\Rr(X):(v,\tau)$ using $\emptyset\vdash Y:(w,\sigma)$ as an assumption,
	and potentially using some other assumptions $\emptyset\vdash Z:(u,\rho)$ with $(Z:(\p(u),\rho))\in\Tt_\Gg$
	(the assumption $\emptyset\vdash Y:(w,\sigma)$ necessarily has to be used, at least once).
	Such an edge is called \emph{productive} if $v>w$.
	Note that this may happen for three reasons: 1) in the derivation there is some positive duplication factor or an important constant,
	2) some other assumption $\emptyset\vdash Z:(u,\rho)$ with $u>0$ is used, or 3) the assumption $\emptyset\vdash Y:(w,\sigma)$ is used more than once and $w>0$. 
	Note that edges of the derivation graphs can be incrementally computed at the time of computing $\Tt_\Gg$.
	We now obtain the main theorem, adding Point (3) equivalent to Point (2) from \cref{thm:main}:

	\begin{theorem}\label{thm:main2}
		The following two statements are equivalent for every recursion scheme $\Gg=(\Sigma,\Nn,\Rr,X_\mathsf{st})$:
		\begin{enumerate}[(1)]
		\setcounter{enumi}{1}
		\item	one can derive $\emptyset\vdash\Lambda(\Gg):(v,\r)$ for arbitrarily large values of $v$,
		\item	$(X_\mathsf{st}:(\pr,\r))\in\Tt_\Gg$ and the derivation graph contains a cycle with a productive edge, reachable from $X_\mathsf{st}:(\pr,\r)$.
		\end{enumerate}
	\end{theorem}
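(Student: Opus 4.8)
The plan is to relate type derivations for $\Lambda(\Gg)=\Lambda_\Gg(X_\mathsf{st})$ to walks in the derivation graph. Since $\Lambda_\Gg(X_\mathsf{st})$ is built by repeatedly expanding nonterminals $X\mapsto\Rr(X)$, a (necessarily finite) derivation of $\emptyset\vdash\Lambda(\Gg):(v,\r)$ can be reorganised as a finite tree of \emph{local} derivations: each node is a derivation of some $\emptyset\vdash\Rr(X):(\cdot,\tau)$ from $\Tt_\Gg$-assumptions, and each assumption $\emptyset\vdash Y:(w,\sigma)$ it uses is discharged by the local derivation of a child node that realises the binding $(Y:(\p(w),\sigma))\in\Tt_\Gg$. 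This is essentially the content of the (unnamed) lemma stated just before \cref{thm:main2}; the shift observation — values in derivations may be shifted by a constant, preserving the flag — is what lets each discharged child realise \emph{any} value of the correct flag, not only the value $w$ written on the assumption. After performing these shifts, the value of the whole derivation becomes $\sum_n d_n$, where $d_n$ is the \emph{local contribution} of node $n$ (the number of occurrences of $\leta$ in its local derivation, plus the duplication factors involving the formal parameters of the rewritten nonterminal), since the values carried by the nonterminal-assumptions cancel. The edges of this tree follow edges of the derivation graph, the root carries the binding $X_\mathsf{st}:(\p(v),\r)$, and a node with $d_n>0$ has \emph{all} its outgoing edges productive.

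For the implication $(3)\Rightarrow(2)$ I would unfold a lasso. Fix a path in the derivation graph from $X_\mathsf{st}:(\pr,\r)$ to a cycle containing a productive edge, and build a finite derivation of $\emptyset\vdash\Lambda(\Gg):(v,\r)$ whose spine of local pieces follows this path, then winds $N$ times around the cycle, and is finally closed off by plugging, into every remaining nonterminal-assumption, an arbitrary genuine derivation realising that binding (one exists, by the preceding lemma, for every binding in $\Tt_\Gg$). Using the shift observation one checks that the spine stays well typed at each winding, the flag at every node agreeing with its binding. For each of the $N$ traversals of the productive edge the value grows by at least $1$: whichever of the three reasons makes the edge productive — a positive local contribution, another $\pr$-assumption, or the same $\pr$-assumption reused — the derivations plugged into the non-spine assumptions can be chosen so that this unit is actually realised (note also that, since an edge leaving a nonproductive binding is never productive, a productive cycle consists entirely of productive bindings, which keeps the flags consistent). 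Hence the value is at least $N$, and $N$ is arbitrary.

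For the implication $(2)\Rightarrow(3)$ I would argue contrapositively. First, any derivation with $v\ge1$ has productivity flag $\pr$ at the root, so the preceding lemma gives $(X_\mathsf{st}:(\pr,\r))\in\Tt_\Gg$. Now assume the derivation graph contains \emph{no} productive cycle reachable from $X_\mathsf{st}:(\pr,\r)$, and take a derivation of $\emptyset\vdash\Lambda(\Gg):(v,\r)$ that is minimal (fewest local pieces). I claim that no binding of $\Tt_\Gg$ occurs twice on a single root-to-leaf branch of its decomposition: if it did, the segment between the two occurrences would be a closed walk in the derivation graph; this walk cannot contain a productive edge (it would decompose into simple cycles, one of which would be a reachable productive cycle), hence — since a node with $d_n>0$ has productive outgoing edges, and the branch uses one of them — the segment contains no node with $d_n>0$ either; then along the segment the value simply passes through, so the lower occurrence's subderivation can replace the upper one's while keeping the overall value, contradicting minimality. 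Therefore every branch of the decomposition has length at most $|\Tt_\Gg|$, the whole derivation has boundedly many local pieces, and $v=\sum_n d_n$ is bounded by a constant depending only on $\Gg$; so $v$ cannot be arbitrarily large.

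The step I expect to be the main obstacle is the decomposition bookkeeping of the first paragraph: making precise the normal form of a derivation of $\Lambda(\Gg)$, the exact relation between a local piece's value and the values it imports through its nonterminal-assumptions, and applying the shift observation uniformly so that the identity $v=\sum_n d_n$ and the well-typedness of the unfolded lasso both hold on the nose. Once this infrastructure is in place, both implications are short; the only remaining care is the case analysis on the three sources of productivity, showing that one turn around a productive cycle genuinely gains a unit of value.
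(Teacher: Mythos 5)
Your proposal is correct. The decomposition of a derivation for $\Lambda(\Gg)$ into a tree of local derivations of the $\Rr(X)$'s with $v=\sum_n d_n$, and the lasso-pumping argument for $(3)\Rightarrow(2)$, are essentially what the paper does (the paper phrases the pumping as repeatedly extending a derivation of $\emptyset\vdash\Lambda_\Gg(Y):(w,\sigma)$ along the edges of the cycle, gaining at least one unit of value at each traversal of the productive edge via the shift observation). For $(2)\Rightarrow(3)$, however, you take a genuinely different route. The paper argues directly and quantitatively: it fixes a single bound $B$ on both the number of assumption leaves and the local value growth of any derivation of $\emptyset\vdash\Rr(X):(v,\tau)$, and proves by induction on the derivation that $v\geq(2B)^k$ forces a path from $X:(\p(v),\tau)$ with at least $k$ productive edges (some assumption must carry value at least $(v-B)/B\geq v/(2B)$, so either the edge to it is productive or the value does not drop); taking $k>|\Tt_\Gg|$ then yields a reachable cycle with a productive edge by pigeonhole. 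You instead argue contrapositively via a minimality argument: with no reachable productive cycle, a piece-minimal derivation cannot repeat a binding along a branch of its decomposition, because the unproductive segment between two occurrences passes the value through unchanged (here you need that $v\geq w$ holds along every edge, so unproductiveness forces $v=w$, hence zero local contributions and zero-valued side branches on the segment, which is what makes the swap value-preserving), so the tree has depth at most $|\Tt_\Gg|$ and $v$ is bounded. Both arguments rest on the same decomposition and the same finiteness bound; the paper's gives a slightly more explicit quantitative relation between the value and the number of productive edges, while yours trades that for a cleaner structural statement about minimal derivations. No gap, provided the bookkeeping you flag yourself (the identity $v=\sum_n d_n$ and the uniform use of the shift observation) is carried out, which is the same infrastructure the paper also relies on.
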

	
	By \cref{thm:main}, assuming additionally that $\Gg$ is safe, Point (2) (and hence Point (3) as well) is also equivalent to the property we want to check:
	\begin{enumerate}[(1)]
		\item	$\Lambda(\Gg)$ is unbounded.
	\end{enumerate}
	
	\begin{proof}
		Point (2) follows from Point (3) quite directly.
		Indeed, an edge from $X:(\pr,\tau)$ to $Y:(\pr,\sigma)$ means that any derivation of $\emptyset\vdash\Lambda_\Gg(Y):(w,\sigma)$
		can be extended to a derivation of $\emptyset\vdash\Lambda_\Gg(X):(v,\tau)$ for some $v\geq w$
		(in the sense that the latter contains the former as a subderivation),
		where $v>w$ if the edge is productive.
		Following edges on the cycle with a productive edge we can thus extend a derivation of $\emptyset\vdash\Lambda_\Gg(X):(v,\tau)$
		into a larger derivation of $\emptyset\vdash\Lambda_\Gg(X):(v',\tau)$, where $v'>v$;
		doing this repeatedly increases the productivity value arbitrarily.
		Finally, we can use the path from $X_\mathsf{st}:(\pr,\r)$ to the cycle, making the created derivation a part of a derivation concerning $\Lambda(\Gg)=\Lambda_\Gg(X_\mathsf{st})$.
		
		Let us move on to the implication from Point (2) to Point (3).
		First, consider all possible derivations of $\emptyset\vdash\Rr(X):(v,\tau)$ (with arbitrary $X\in\Nn$, $v$, $\tau$)
		that may use assumptions (additional typing rules) $\emptyset\vdash Y:(v',\tau')$ for nonterminals $Y$, as in the definition of $\Tt_\Gg$.
		Because $\Rr(X)$ for every $X\in\Nn$ is a finite lambda-term, and $|\Nn|$ is finite as well,
		there exists a bound $B$ such that any derivation as above
		uses assumptions for nonterminals in at most $B$ leaves,
		and simultaneously the growth of productivity value caused by duplication factor or important constants in the derivation is at most $B$.
	
		Next, we prove (by induction on the size of a derivation) that if we can derive $\emptyset\vdash\Lambda_\Gg(X):(v,\tau)$ with $v\geq(2B)^k$ for some $k\in\Nat$,
		then $(X:(\p(v),\tau))\in\Tt_\Gg$ and in the derivation graph there is a path from this node having at least $k$ productive edges.
		Indeed, out of a derivation of $\emptyset\vdash\Lambda_\Gg(X):(v,\tau)$ we can reconstruct a derivation of $\emptyset\vdash\Rr(X):(v,\tau)$;
		whenever the former derivation contains a (strictly smaller) subderivation concerning $\Lambda_\Gg(Y)$ for some nonterminal $Y$,
		in the latter we use an assumption concerning $Y$, which is in $\Tt_\Gg$ by the induction hypothesis.
		This already shows that $(X:(\p(v),\tau))\in\Tt_\Gg$.
		If $k=0$, we are done.
		Suppose that $k\geq 1$.
		By properties of the bound $B$, in the derivation of $\emptyset\vdash\Rr(X):(v,\tau)$ there has to be an assumption $\emptyset\vdash Y:(v',\tau')$
		with $v'\geq\frac{v-B}{B}\geq\frac{v}{2B}$
		(from $v$ we subtract $B$ for duplication factors and important constants inside the derivation of $\emptyset\vdash\Rr(X):(v,\tau)$,
		and we divide the remaining part of the productivity value into at most $B$ assumptions),
		that is, we can derive $\emptyset\vdash\Lambda_\Gg(Y):(v',\tau')$.
		If $v'=v\geq(2B)^k$, then we have an edge to a node from which there is a path having at least $k$ productive edges, by the induction hypothesis.
		Otherwise $v>v'\geq(2B)^{k-1}$, so we have a productive edge to a node, from 
		which there is a path having at least $k-1$ productive edges, by the induction hypothesis.
		
		Starting with $X=X_\mathsf{st}$ and $k>|\Tt_\Gg|$ we obtain a path from $(X_\mathsf{st}:(\pr,\r))$ containing more than $|\Tt_\Gg|$ productive edges;
		this path has to reach a cycle with a productive edge, as needed for Point (3).
	\end{proof}

\subsection{Implementation}
 
Our implementation is based on the implementation of \textsc{HorSat}~\cite{horsat}
(more precisely: of \textsc{HorSat2}, a revised version of the \textsc{HorSat} algorithm, due to Kobayashi and Terao).
The main reason behind the efficiency of our (and \textsc{HorSat}'s) implementation is that we do not compute all possible bindings in $\Tt_\Gg$.
We derive only types which may be useful in a derivation of a type of the starting nonterminal $X_\mathsf{st}$.

To this end, we first perform a 0-CFA analysis of the input to compute which lambda-terms may flow into particular nonterminal parameters.
For example, if we have a nonterminal $X$ with a parameter $x$, we find all applications in the form $M_X\,N$, where $M_X$ is a lambda-term where
the head is $X$ or may eventually be substituted by $X$; then the lambda-term $N$ is flowing into the parameter $x$. Note that 0-CFA analysis gives an overestimation.
For example, for a nonterminal $Y$ with two parameters $x$ and $y$ it is possible to write two applications, $Y\,M\,N$ and $Y\,M'\,N'$,
transformed in such a way that, according to 0-CFA output, $M$ may flow into $x$ at the same time as $N'$ into $y$.
This behavior can exponentially (with respect to the number of arguments) increase the number of possible typings of $Y$ that the algorithm has to check.

Thanks to 0-CFA analysis, we have a complete list of lambda-terms which may be substituted for parameters of nonterminals, which limits the set of computed types of nonterminals while still retaining all types required to type $X_\mathsf{st}$.
This is sufficient to start a loop where new types are found.
In this loop, nonterminals (more precisely: the lambda-terms $\Rr(X)$ for nonterminals $X$) are typed using information about types flowing into their parameters.
New types of nonterminals enable finding new types of lambda-terms in which these nonterminals occur.
These lambda-terms again flow into parameters of some nonterminals, so the new types enable finding additional typings of these nonterminals, returning to the beginning of the loop.
This loop continues until a fixpoint, that is, until no new typings of nonterminals or lambda-terms flowing into their parameters can be computed.
As there is a finite number of possible typings of nonterminals and lambda-terms in any given recursion scheme, this fixpoint will be reached after finitely many steps.
During this loop, each time we type a nonterminal, we take note of typings of nonterminals that were used in the derivation (i.e., are subtrees in the final derivation tree),
and incrementally construct the derivation graph described earlier in this section.
This way we find all parts of the derivation graph that are reachable from $X_{\mathsf{st}} \colon (\pr, \r)$; the optimizations made by 0-CFA only remove unreachable parts.

We perform typing of terms without lambda-bindings in a top-down manner, that is, we iterate over all possible types an application may have, find compatible types for its left-hand side,
and then type-check its right-hand side under each possible environment with its desired, already known type.
The top-down approach is particularly efficient when the constant $\letb$ is present in the term, since its argument with type $\top$ does not have to be type-checked. This is also the case for nonterminals with arguments of type $\top$.

We also include three
optimizations similar to the ones present in \textsc{HorSat}:
\begin{itemize}
    \item After a new type of a nonterminal is found, we type all lambda-terms that contain it in a way where the new typing is used at least once.
    	This optimization can be turned off with a flag \texttt{-nofntty}.
    \item After a new type of a lambda-term flowing into a parameter $x$ of
        a nonterminal $X$ is found,
    	we search for new typings of $\Rr(X)$ (or its subterms that flow into parameters of other nonterminals) in a way where at least one instance of the parameter $x$ has the new type.
    	This optimization can be turned off with a flag \texttt{-noftty}.
    \item When no parameter of a nonterminal $X$ is used as a left-hand side of an application in $\Rr(X)$,
    	we infer types of parameters of $X$ from left-hand sides of applications instead of using
    	types of lambda-terms flowing into these parameters.
    	Then $\Rr(X)$ does not have to be typed whenever a new type is computed for lambda-terms flowing into its parameters.
    	This optimization can be turned off with a flag \texttt{-nohvo}.
\end{itemize}

\subsection{Benchmarks}

\begin{figure}[th]
    \centering
    \resizebox{\textwidth}{!}{
    \begin{tabular}{|l|l|l|l|l|l|l|l|l|l|l|}
    \hline
    & & \multicolumn{8}{c|}{Flags and run times in seconds} \\
    \hline
    Benchmark name &
    \begin{sideways}$\ord(\Gg)$\end{sideways} &
    \begin{sideways}no flags\end{sideways} &
    \begin{sideways}\texttt{-noftty}\end{sideways} &
    \begin{sideways}\texttt{-nofntty}\end{sideways} &
    \begin{sideways}\texttt{-nohvo}\end{sideways} &
    \begin{sideways}\parbox{1.6cm}{\texttt{-noftty -nofntty}}\end{sideways} &
    \begin{sideways}\parbox{1.6cm}{\texttt{-noftty -nohvo}}\end{sideways} &
    \begin{sideways}\parbox{1.6cm}{\texttt{-nofntty -nohvo}}\end{sideways} &
    \begin{sideways}\parbox{1.6cm}{\texttt{-noftty -nofntty -nohvo}}\end{sideways} \\
    \hline
    \texttt{jwig-cal\_main}               & 2 & 0.942     & 0.930     & 1.572     & 0.611     & 1.574            & 0.607          & 0.728           & 0.783               \\
    \texttt{spec\_cps\_coerce1-c}          & 3 & 66.28     & 3.921     & 3.285     & 102.2     & 0.647            & 3.517          & 38.105          & 4.557               \\
    \texttt{xhtmlf-div-2}                & 2 & 9.591     & 9.555     & 7.457     & 9.238     & 7.426            & 9.007          & 7.387           & 7.478               \\
    \texttt{xhtmlf-m-church}             & 2 & 9.689     & 9.655     & 7.481     & 9.235     & 7.470            & 9.172          & 7.459           & 7.562               \\
    \hline
    \texttt{fold\_fun\_list}               & 7 & 0.425     & 0.025     & OOM       & 0.402     & OOM              & 0.024          & OOM             & OOM                 \\
    \texttt{fold\_right}                  & 5 & 25.01     & 0.017     & 0.277     & 23.75     & 0.028            & 0.016          & 0.265           & 0.028               \\
    \texttt{search-e-church}             & 6 & 127.7     & 14.96     & 338.1     & 119.5     & 12.19            & 14.11          & 325.9           & 13.53               \\
    \texttt{zip}                         & 4 & 2.618     & 153.4     & 64.79     & 27.30     & 201.5            & 150.0          & 68.63           & 98.81               \\
    \hline
    \texttt{filepath}                    & 2 & TO        & OOM       & OOM       & OOM       & OOM              & OOM            & OOM             & OOM                 \\
    \texttt{filter-nonzero-1}            & 5 & 59.69     & 1.635     & 37.56     & 64.19     & 2.636            & 1.641          & 39.52           & 1.922               \\
    \texttt{filter-nonzero}              & 5 & 0.641     & 0.106     & 0.570     & 0.670     & 0.138            & 0.106          & 0.653           & 0.083               \\
    \texttt{map-plusone-2}               & 5 & 5.384     & 1.082     & 2.591     & 5.065     & 0.851            & 1.092          & 5.633           & 1.125               \\
    \hline
    \texttt{cfa-life2}                    & 14 & TO       & TO        & OOM       & TO        & TO               & TO             & TO              & TO                  \\
    \texttt{cfa-matrix-1}                 & 8 & 1.794    & 2.376     & 1.672     & 1.708     & 2.124            & 2.310          & 1.582           & 2.280               \\
    \texttt{cfa-psdes}                    & 7 & 0.017    & 0.022     & 0.019     & 0.026     & 0.022            & 0.018          & 0.041           & 0.029               \\
    \texttt{tak\_inf}                      & 8 & 10.48    & 5.842     & 12.16     & 9.759     & 11.88            & 5.739          & 11.58           & 8.858               \\
    \hline
    \texttt{dna}                          & 2 & 0.118    & 0.106     & 0.050     & 0.115     & 0.074            & 0.109          & 0.055           & 0.074               \\
    \texttt{fibstring}                    & 4 & 0.022    & 0.024     & 0.016     & 0.023     & 0.020            & 0.023          & 0.015           & 0.020               \\
    \texttt{g45}         & 4 & 41.12    & 45.73     & 7.819     & 16.46     & 56.61            & 46.54          & 1.629           & 46.06               \\
    \texttt{l}                            & 3 & 0.012    & 0.023     & 0.016     & 0.017     & 0.017            & 0.022          & 0.015           & 0.017               \\
    \hline
    \texttt{fib02\_odd\_fin}               & 4 & 0.007     & 0.014     & 0.009     & 0.013     & 0.013            & 0.013          & 0.014           & 0.013               \\
    \texttt{fib\_even\_inf}                & 4 & 0.545     & 2.854     & 0.697     & 0.505     & 2.555            & 2.637          & 0.697           & 2.604               \\
    \texttt{two\_add\_inf}                 & 4 & 0.022     & 0.018     & 0.023     & 0.024     & 0.015            & 0.012          & 0.021           & 0.015               \\
    \texttt{two\_succ\_inf}                & 4 & 0.005     & 0.006     & 0.005     & 0.006     & 0.004            & 0.005          & 0.006           & 0.005               \\
    \hline
    \end{tabular}
    }
    TO means timeout (10 minutes), OOM means out of memory error.
    \caption{\textsc{InfSat} benchmark results in all combinations of optimization flags.}
    \label{fig:benchmarks}
\end{figure}

We prepared benchmarks for the implementation of our algorithm by modifying benchmarks presented in the \textsc{HorSat} paper \cite{horsat}.
\textsc{HorSat} is an algorithm that efficiently checks whether an alternating tree automaton (ATA) accepts the B\"ohm tree of the lambda-term defined by a recursion scheme. This problem is different from ours, however, it also performs an analysis on the same trees and is also $m$-\textsf{EXPTIME}-complete, where the difficulty depends on the order $m$ of the lambda-term. 
Hence, we use benchmark results presented in the \textsc{HorSat} paper \cite{horsat} as an indication that the terms used there are difficult to analyze.

The details on the original benchmarks and their origin can be found in the \textsc{HorSat} paper \cite{horsat}. According to authors of the aforementioned paper, these benchmarks contain practical data. Indeed, some of them model analysis of an XHTML document or a short computer program. Analysis of many of them was not completed in a reasonable time by other model checking algorithms similar to \textsc{HorSat}, while \textsc{HorSat} only failed to analyze benchmark \texttt{fibstring} in a reasonable time.

Let us describe how we modified the benchmarks to fit our analysis. We left the original recursion scheme intact and selected a few important constants that are present close to leaves of generated trees to increase the difficulty. The result was a decision problem whether the operations described by important constants in modelled programs could be executed unbounded number of times, assuming the program halted. Additionally, we added four benchmarks specific to \textsc{InfSat} that answer mathematical questions such as whether there exist arbitrarily large odd numbers defined as Church numerals.

We present results of our benchmarks in Figure \ref{fig:benchmarks}. They were performed on a laptop with Intel Core i5-7600K, 16GB RAM.
We consider these results a success, as only two benchmarks on practical data failed to compute within ten minutes. As \textsc{InfSat} is the first efficient algorithm solving the simultaneous unboundedness problem, we do not have any data to which we could compare our benchmark results. At this stage, we can assess effectiveness of our optimizations. We can see that using all optimization flags produced good results consistently, however, each optimization happened to slow down some benchmarks.
The reason is that optimizations turned off by \texttt{-noftty} and \texttt{-nofntty} add substantial polynomial overhead when generating list of possible environments
and optimization turned off by \texttt{-nohvo} changes the way types of some terms are computed.
However, all of them can exponentially reduce the computation time in many cases which is why they are turned on by default.

\section{Multi-letter case}\label{sec:multi-letter}

	It is not difficult to modify the type system to handle simultaneous unboundedness.
	In this part, instead of a single important constant $\leta$ of arity $1$, we consider important constants $\leta_1,\dots,\leta_s$, all of arity $1$ ($\setP$ is the set containing them all).
	Besides them, in the signature $\Sigma$ we have a constant $\letb$ of arity $2$, and constants $\letc$ and $\upomega$ of arity $0$.

	In the type system instead of a single productivity flag in $\set{\pr,\np}$,
	we have a productivity set, being a subset of $\setP$, and saying which important letters are produced.
	Likewise, instead of a productivity value in $\Nat$, we have a productivity function $v\colon\setP\to\Nat$, specifying a value separately for every letter.
	The rules of the type system are adopted in the expected way:
	$v(\leta_i)$ increases when we use constant $\leta_i$, and when we duplicate a type binding for a type pair $(A,\sigma)$ with $\leta_i\in A$.

	One more change is, however, necessary.
	Indeed, while proving \cref{lem:compl-1red} in the multi-letter case, we cannot choose a single subderivation concerning $L$ that has the greatest value,
	since now a value is not a number, but rather a function.
	Instead, for every constant $\leta_i$ we can choose a subderivation concerning $L$ for which the $i$-th coordinate of the value is the greatest.
	We thus need to allow $s$ (i.e., $|\setP|$) subderivations for every type pair.

	To this end, on the left of the arrow in a type, we do not have a set of types, but rather a multiset of types, where we allow to have at most $s$ copies of every type.
	Likewise, in the type environment we have at most $s$ copies of every type binding.

	Formally, an \emph{$s$-multiset} is a multiset that contains at most $s$ copies of every element.
	The set of $s$-multisets of elements of $X$ is denoted $\Pp_{\leq s}(X)$.
	Notice that $1$-multisets are just sets, hence $\Pp_{\leq 1}(X)=\Pp(X)$.
	A union of $s$-multisets $U,V$, denoted $U\cup V$, is defined as follows:
	if $U$ and $V$ contain, respectively, $n$ and $m$ copies of an element $x$, then $U\cup V$ contains $\min(n+m,s)$ copies of this element.
	We use the notation $\setof{x_i}{i\in I}$ for $\bigcup_{i\in I}\set{x_i}$, where $\set{x_i}$ is the multiset containing $x_i$ once.

	For each sort $\alpha$ we define the set $\Tt^\alpha_s$ of types of sort $\alpha$ as follows:
	\begin{align*}
		\Tt^\otyp_s=\set{\r},&&
		\Tt^{\alpha\arr\beta}_s=\Pp_{\leq s}(\Pp(\setP)\times\Tt^\alpha_s)\times\Tt^\beta_s.
	\end{align*}
	We again use the notation with $\arr$ and $\bigwedge$, but this time while writing $\bigwedge_{i\in I}(f_i,\tau_i)\arr\tau$,
	we assume that any pair occurs as $(f_i,\tau_i)$ at most $s$ times.

	A \emph{type judgment} is of the form $\Gamma\vdash M:(v,\tau)$, where $v\colon\setP\to\Nat$, and where we require that the type $\tau$ and the lambda-term $M$ are of the same sort.
	The \emph{type environment} $\Gamma$ is an $s$-multiset of bindings of variables of the form $x^\alpha:(A,\tau)$,
	where $A\subseteq\setP$ is a \emph{productivity set} and $\tau\in\Tt^\alpha_s$ is a type.
	For $a\in\setP$ by $\Gamma\restr_a$ we denote the $s$-multiset of those binding from $\Gamma$ that have $a$ in their productivity set.

	By $\zero$ we denote the function from $\setP$ to $\Nat$ that maps every $\leta_i$ to $0$,
	and by $\chi_i$ the function that maps $\leta_i$ to $1$ and all other $\leta_j$ to $0$.
	The type system consists of the following rules:
	\begin{mathpar}
	\inferrule{}{
		\emptyset\vdash \leta_i:(\chi_i,(A,\r)\arr\r)
	}\and
	\inferrule{}{
		\emptyset\vdash \letc:(\zero,\r)
	}\\
	\inferrule{}{
		\emptyset\vdash \letb:(\zero,(A,\r)\arr\top\arr\r)
	}\and
	\inferrule{}{
		\emptyset\vdash \letb:(\zero,\top\arr(A,\r)\arr\r)
	}\and
	\inferrule{}{
		x:(A,\tau)\vdash x:(\zero,\tau)
	}
	\and
	\inferrule*[right=($\lambda$)]{
		\Gamma\cup\setof{x:(A_i,\tau_i)}{i\in I}\vdash K:(v,\tau)
	\\
		x\not\in \dom(\Gamma)
	}{
		\Gamma\vdash\lambda x.K:(v,\textstyle\bigwedge_{i\in I}(A_i,\tau_i)\arr\tau)
	}
	\end{mathpar}

	We define the duplication factor $\dupl((\Gamma_j)_{j\in J})$, as the function (from $\setP$ to $\Nat$) that maps every important constant $a\in\setP$ to
	$\sum_{j\in J}\left|\Gamma_j\restr_a\right|-\left|\bigcup_{j\in J}\Gamma_j\restr_a\right|$.
	\begin{mathpar}
	\inferrule*[right=$(@)$]{
		0\not\in I
	\\
		\forall i\in I.\ A_i=\setof{a\in\setP}{v_i(a)>0 \lor \Gamma_i\restr_a\neq\emptyset}
	\\
		\Gamma_0\vdash K:(v_0,\textstyle\bigwedge_{i\in I}(A_i,\tau_i)\arr\tau)
	\\
		\Gamma_i\vdash L:(v_i,\tau_i)\mbox{ for each }i\in I
	}{
		\textstyle\bigcup_{i\in\set{0}\cup I}\Gamma_i\vdash K\,L:(\dupl((\Gamma_i)_{i\in\set{0}\cup I})+\textstyle\sum_{i\in\set{0}\cup I}v_i,\tau)
	}
	\end{mathpar}
	Recall that this time any pair may occur as $(f_i,\tau_i)$ at most $s$ times.

\bibliographystyle{plainurl}
\bibliography{bib}

\newpage
\appendix

\section{Soundness}\label{app:soundness}

	Below we complete details of the proofs from \cref{sec:soundness}.
	The proofs are purely syntactic, and follow the sketch given in \cref{sec:soundness}; nothing really interesting happens here.

	We conduct the proofs in the multi-letter case (which generalizes the single-letter case considered in \cref{sec:soundness}).
	Let us first state a \lcnamecref{lem:soundness-sub} useful while proving \cref{lem:soundness-red}, describing a substitution (recall that $s$ is the number of important letters in the alphabet):

	\begin{lemma}\label{lem:soundness-sub}
		If we can derive
		\begin{align*}
			&\Gamma\cup\setof{x:(A_i,\tau_i)}{i\in I}\vdash M:(v_0,\tau)&&\mbox{and}\\
			&\emptyset\vdash N:(v_i,\tau_i)&&\mbox{for all $i\in I$,}
		\end{align*}
		where any pair occurs as $(A_i,\tau_i)$ for at most $s$ indices $i\in I$, and $0\not\in I$, and $x\not\in\dom(\Gamma)$, and $A_i=\setof{a\in\setP}{v_i(a)>0}$ for all $i\in I$,
		then we can derive $\Gamma\vdash M[N/x]:(v',\tau)$ for some $v'$ such that for all $a\in\setP$ we have that
		\begin{align*}
			&\sum_{i\in\set{0}\cup I}v_i(a)\leq v'(a)&&\mbox{and}\\
			&\sum_{i\in\set{0}\cup I}v_i(a)=0\ \Rightarrow\ v'(a)=0.
		\end{align*}
	\end{lemma}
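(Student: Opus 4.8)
The plan is to prove this lemma by induction on the derivation of $\Gamma\cup\setof{x:(A_i,\tau_i)}{i\in I}\vdash M:(v_0,\tau)$ — equivalently, on the structure of $M$, the type system being syntax-directed. The base cases are immediate. If $M$ is a constant, the environment is empty, forcing $\Gamma=\emptyset$ and $I=\emptyset$, and we keep the derivation with $v'=v_0$. If $M=y\neq x$, the environment is a single binding $y:(A,\tau)$, again forcing $I=\emptyset$, and again we keep the derivation. If $M=x$, the environment is a single binding $x:(A,\tau)$, forcing $\Gamma=\emptyset$, $v_0=\zero$, and $I=\set{i_0}$ a singleton with $(A_{i_0},\tau_{i_0})=(A,\tau)$; we discard the leaf and take the supplied derivation $\emptyset\vdash N:(v_{i_0},\tau_{i_0})$, so $v'=v_{i_0}$ and both inequalities hold with equality. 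The $(\lambda)$ case $M=\lambda z.K$ is routine: after renaming $z$ fresh (so $z\neq x$ and $z$ is not free in $N$), regroup the environment for $K$ so that the $x$-bindings sit apart, apply the induction hypothesis to the subderivation for $K$, and re-apply $(\lambda)$; the value and the type below the leading arrow are untouched, so the conclusion transfers verbatim.

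All the work is in the application case $M=K\,P$. The last rule writes $\Gamma\cup\setof{x:(A_i,\tau_i)}{i\in I}$ as $\bigcup_{j\in\set{0}\cup J}\Gamma_j$, with $\Gamma_0\vdash K:(w_0,\bigwedge_{j\in J}(B_j,\sigma_j)\arr\tau)$, $\Gamma_j\vdash P:(w_j,\sigma_j)$ for $j\in J$, and $v_0=\dupl((\Gamma_j)_{j\in\set{0}\cup J})+\sum_{j\in\set{0}\cup J}w_j$. Let $\Gamma_j^-$ be $\Gamma_j$ with all $x$-bindings deleted; since $x\notin\dom(\Gamma)$ one has $\bigcup_j\Gamma_j^-=\Gamma$. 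The plan is: (i) route the indices of $I$ to the subderivations, assigning to every copy of an $x$-binding $x:(A,\tau)$ occurring in some $\Gamma_j$ an index $i\in I$ with $(A_i,\tau_i)=(A,\tau)$, so that indices assigned inside a fixed $\Gamma_j$ are pairwise distinct and every index of $I$ is used at least once; (ii) apply the induction hypothesis to each subderivation, feeding it the subfamily of $N$-derivations indexed by the set $I_j\subseteq I$ routed into $\Gamma_j$, obtaining $\Gamma_0^-\vdash K[N/x]:(w_0',\bigwedge_{j\in J}(B_j,\sigma_j)\arr\tau)$ and $\Gamma_j^-\vdash P[N/x]:(w_j',\sigma_j)$ for $j\in J$, with $w_j'(a)\geq w_j(a)+\sum_{i\in I_j}v_i(a)$ and the companion zero-implication for every $j$ and every $a\in\setP$; (iii) recombine with $(@)$, yielding $\Gamma\vdash K[N/x]\,P[N/x]=M[N/x]:(v',\tau)$ with $v'=\dupl((\Gamma_j^-)_j)+\sum_j w_j'$. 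Step (i) is possible because each $\Gamma_j$ is an $s$-multiset, so it has at most $s$, hence at most $\min(k(A,\tau),s)$, copies of the pair $(A,\tau)$, where $k(A,\tau)$ denotes the number of copies of $x:(A,\tau)$ over all $\Gamma_j$ — and $\min(k(A,\tau),s)$ is exactly the number of indices of that pair in $I$; moreover these copies sum over all $j$ to $k(A,\tau)\geq\min(k(A,\tau),s)$, so a round-robin assignment uses every index. Before recombining in (iii) one must check that deleting $x$-bindings preserves the side condition of $(@)$, that is, that $B_j=\setof{a}{w_j'(a)>0\lor\Gamma_j^-\restr_a\neq\emptyset}$ still holds for each $j\in J$; this is exactly where the hypothesis $A_i=\setof{a}{v_i(a)>0}$ is used: if the only witness of $a\in B_j$ in $\Gamma_j$ was an $x$-binding $x:(A_i,\tau_i)$ with $a\in A_i$, then $v_i(a)>0$ and therefore $w_j'(a)>0$ re-establishes $a\in B_j$; the converse inclusion uses the zero-implication of the induction hypothesis together with the same fact.

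The main obstacle is the value inequality in the application case, and once more the hypothesis $A_i=\setof{a}{v_i(a)>0}$ is the crux. Deleting $x$-bindings lowers the duplication factor by precisely their ``overflow'': a short computation gives, for every $a\in\setP$, that $\dupl((\Gamma_j)_j)(a)=\dupl((\Gamma_j^-)_j)(a)+\sum_{(A,\tau):\,a\in A}\bigl(k(A,\tau)-\min(k(A,\tau),s)\bigr)$. On the other hand, reusing an $N$-derivation is a gain: writing $\mathrm{mult}(i)$ for the number of subderivations into which index $i$ is routed, $\sum_j\sum_{i\in I_j}v_i(a)-\sum_{i\in I}v_i(a)=\sum_{i\in I}(\mathrm{mult}(i)-1)\,v_i(a)$; for a pair $(A,\tau)$ with $a\in A$ the multiplicities of its indices sum to $k(A,\tau)$ while each of those $v_i(a)$ is at least $1$ (again because $A_i=\setof{a}{v_i(a)>0}$), so this gain is at least $\sum_{(A,\tau):\,a\in A}\bigl(k(A,\tau)-\min(k(A,\tau),s)\bigr)$ — matching the duplication loss exactly. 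Combining this with the inequalities $w_j'(a)\geq w_j(a)+\sum_{i\in I_j}v_i(a)$ and the formula $v'=\dupl((\Gamma_j^-)_j)+\sum_j w_j'$ gives $v'(a)\geq v_0(a)+\sum_{i\in I}v_i(a)$, as required. The zero-implication follows in passing: if $v_0(a)+\sum_{i\in I}v_i(a)=0$, then every term in the above decomposition vanishes, the duplication loss is $0$, and the zero-implications from the induction hypothesis force every $w_j'(a)=0$, whence $v'(a)=0$.
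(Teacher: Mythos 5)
Your proposal is correct and follows essentially the same route as the paper's own proof: induction on the derivation, the same treatment of the base and $(\lambda)$ cases, and in the application case the same ``balanced'' routing of the indices in $I$ to the premises (distinct indices within each $\Gamma_j$, every index used at least once) followed by the key observation that $a\in A_i$ implies $v_i(a)\geq 1$, so the gain from re-using $N$-subderivations dominates the lost duplication factor of the removed $x$-bindings. Your explicit bookkeeping via $k(A,\tau)$ and $\min(k(A,\tau),s)$ is just an unfolding of the paper's $\dupl\bigl((\setof{x:(A_i,\tau_i)}{i\in I_j})_{j}\bigr)$ term; the substance is identical.
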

	
	\begin{proof}
		We prove the \lcnamecref{lem:soundness-sub} by induction on the size of a chosen derivation of $\Gamma\cup\setof{x:(A_i,\tau_i)}{i\in I}\vdash M:(v_0,\tau)$.
		Assume the statement holds for all smaller derivations, and consider the last rule in the type derivation concerning $M$.
		There are four cases.
		
		First, $M$ may be either of the form $\lambda x.K$ (with the variable $x$ for which we substitute), or a constant, or a variable other than $x$.
		Then necessarily $I=\emptyset$ and $M[N/x]=M$, so thesis is obtained trivially by simply taking the input derivation concerning~$M$.

		Second, we may have $M=x$.
		Then necessarily $\Gamma$ is empty, $I$ is a singleton, $v_0=\zero$, and $M[N/x]=N$.
		This time the thesis is obtained trivially by taking the (only) input derivation concerning $N$.

		Third, if $M$ is of the form $\lambda y.K$ for $y\neq x$, then the last rule applied in the derivation concerning $M$ is
		\begin{mathpar}
		\inferrule*{
			\Gamma\cup\setof{x:(A_i,\tau_i)}{i\in I}\cup\setof{y:(B_j,\sigma_j)}{j\in J}\vdash K:(v_0,\sigma)
		}{
			\Gamma\cup\setof{x:(A_i,\tau_i)}{i\in I}\vdash\lambda y.K:(v_0,\textstyle\bigwedge_{j\in J}(B_j,\sigma_j)\arr\sigma)
		}
		\end{mathpar}
		and $y\not\in\dom(\Gamma)$.
		By the induction hypothesis, we can derive
		\begin{align*}
			\Gamma\cup\setof{y:(B_j,\sigma_j)}{j\in J}\vdash K[N/x]:(v',\sigma)
		\end{align*}
		for some $v'$ such that $\sum_{i\in\set{0}\cup I}v_i(a)\leq v'(a)$ and $\sum_{i\in\set{0}\cup I}v_i(a)=0\Rightarrow v'(a)=0$ for all $a\in\setP$.
		By applying the $(\lambda)$ rule we derive
		\begin{align*}
			\Gamma\vdash(\lambda y.K)[N/x]:(v',\textstyle\bigwedge_{j\in J}(B_j,\sigma_j)\arr\sigma),
		\end{align*}
		proving the statement.

		Finally, assume that $M$ is in the form $K\,L$, so the last rule in the derivation concerning $M$ is
		\begin{mathpar}
		\inferrule*{
			\Gamma_0'\vdash K:(w_0,\textstyle\bigwedge_{j\in J}(B_j,\sigma_j)\arr\tau)
		\\
			\Gamma_j'\vdash L:(w_j,\sigma_j)\mbox{ for each }j\in J
		}{
			\Gamma\cup\setof{x:(A_i,\tau_i)}{i\in I}\vdash K\,L:(\dupl((\Gamma_j')_{j\in\set{0}\cup J})+\textstyle\sum_{j\in\set{0}\cup J}w_j,\tau)
		}
		\end{mathpar}
		where $0\not\in J$.
		The type environments $\Gamma_j'$ for $j\in\set{0}\cup J$ can be represented as
		\begin{align}\label{eq:gamma-union}
			\Gamma_j'=\Gamma_j\cup\setof{x:(A_i,\tau_i)}{i\in I_j},
		\end{align}
		where $x\not\in\dom(\Gamma_j)$ and $I_j$ is a subset of $I$;
		then $\Gamma=\bigcup_{j\in\set{0}\cup J}\Gamma_j$ and
		\begin{align}\label{eq:setsI}
			\setof{x:(A_i,\tau_i)}{i\in I}=\bigcup_{j\in\set{0}\cup J}\setof{x:(A_i,\tau_i)}{i\in I_j}
		\end{align}
		(where these are unions of $s$-multisets).
		Moreover, in the above representation we should choose the subsets $I_j$ of $I$ in a ``balanced way'', that is,
		we should ensure that $\bigcup_{j\in\set{0}\cup J} I_j=I$ (this is a union of standard sets, not $s$-multisets).
		This is possible because, by \cref{eq:setsI}, if the number of occurrences of a variable binding in $\setof{x:(A_i,\tau_i)}{i\in I}$ is $k$,
		then its number of occurrences altogether in all $\setof{x:(A_i,\tau_i)}{i\in I_j}$ for $j\in J$ is at least $k$.
		By assumptions of the $(@)$ rule we also have that
		\begin{align}
			B_j&=\setof{a\in\setP}{w_j(a)>0\lor\Gamma_j'\restr_a\neq\emptyset}\nonumber\\
			&=\setof{a\in\setP}{w_j(a)>0\lor\Gamma_j\restr_a\neq\emptyset\lor\exists i\in I_j.\ v_i(a)>0}\label{eq:1}
		\end{align}
		for every $j\in J$ (the second equality holds because $A_i=\setof{a\in\setP}{v_i(a)>0}$).
		By the induction hypothesis we can derive
		\begin{align*}
			\Gamma_0\vdash K[N/x]:\left(w_0',\textstyle\bigwedge_{j\in J}(B_j,\sigma_j)\arr\tau\right)
		\end{align*}
		and, for each $j\in J$,
		\begin{align*}
			\Gamma_j\vdash L[N/x]:(w_j',\sigma_j),
		\end{align*}
		where
		\begin{align}
			&w_j(a)+\sum_{i\in I_j}v_i(a)\leq w_j'(a)&&\mbox{and}\label[inequality]{eq:2}\\
			&w_j(a)+\sum_{i\in I_j}v_i(a)=0\ \Rightarrow\ w_j'(a)=0\label[implication]{eq:3}
		\end{align}
		for all $j\in\set{0}\cup J$ and $a\in\setP$.
		Note that \cref{eq:3} can be actually changed into an equivalence: the right-to-left implication follows from \cref{eq:2}.
		This equivalence, together with \cref{eq:1}, implies that
		\begin{align*}
			B_j&=\setof{a\in\setP}{w_j'(a)>0\lor\Gamma_j\restr_a\neq\emptyset}.
		\end{align*}
		This allows us to apply the $(@)$ rule and derive the desired type judgment
		\begin{align*}
			\textstyle\bigcup_{j\in\set{0}\cup J}\Gamma_j
			\vdash
			(K\,L)[N/x]:(\dupl((\Gamma_j)_{j\in\set{0}\cup J})+\textstyle\sum_{j\in\set{0}\cup J}w_j',\tau).
		\end{align*}
		
		It remains to prove required conditions on the productivity values.
		Fix some $a\in\setP$, and recall that $v'$ and $v_0$ are the productivity values in, respectively, the resulting and the original type judgments, that is,
		\begin{align}
			v'(a)&=\dupl((\Gamma_j)_{j\in\set{0}\cup J})(a)+\sum_{j\in\set{0}\cup J}w_j'(a),&&\mbox{and}\label{eq:def-vp}\\
			v_0(a)&=\dupl((\Gamma_j')_{j\in\set{0}\cup J})(a)+\sum_{j\in\set{0}\cup J}w_j(a).\label{eq:def-v0}
		\end{align}
		Applying \cref{eq:2} to \cref{eq:def-vp} we can write
		\begin{align}\label[inequality]{eq:vp}
			\dupl((\Gamma_j)_{j\in\set{0}\cup J})(a)+\sum_{j\in\set{0}\cup J}w_j(a)+\sum_{j\in\set{0}\cup J}\sum_{i\in I_j}v_i(a)\leq v'(a).
		\end{align}
		Simultaneously \cref{eq:gamma-union,eq:def-v0} imply that
		\begin{align}
			\sum_{i\in\set{0}\cup I}v_i(a)&=\dupl((\Gamma_j)_{j\in\set{0}\cup J})(a)+\dupl((\setof{x:(A_i,\tau_i)}{i\in I_j})_{j\in\set{0}\cup J})(a)\nonumber\\
				&\phantom{={}}+\sum_{j\in\set{0}\cup J}w_j(a)+\sum_{i\in I}v_i(a).\label{eq:vi}
		\end{align}
		Thus, taking into account \cref{eq:vp,eq:vi}, the inequality $\sum_{i\in\set{0}\cup I}v_i(a)\leq v'(a)$, which we aim to prove, reduces to
		\begin{align}\label[inequality]{eq:goal}
			\dupl((\setof{x:(A_i,\tau_i)}{i\in I_j})_{j\in\set{0}\cup J})(a)\leq\sum_{j\in\set{0}\cup J}\sum_{i\in I_j}v_i(a)-\sum_{i\in I}v_i(a);
		\end{align}
		this is what we aim to prove.
		Let us define $[a\in A_i]$ to be $1$ if $a\in A_i$ and $0$ otherwise.
		With this notation, expanding the definition of $\dupl$ and using \cref{eq:setsI} we can write
		\begin{align*}
			\dupl((\setof{x:(A_i,\tau_i)}{i\in I_j})_{j\in\set{0}\cup J})(a)=\sum_{j\in\set{0}\cup J}\sum_{i\in I_j}[a\in A_i]-\sum_{i\in I}[a\in A_i].
		\end{align*}
		We substitute this to \cref{eq:goal}.
		Observe that, for any fixed $i\in I$, both the component $[a\in A_i]$ on the left side and the component $v_i(a)$ on the right side
		is added exactly $|\setof{j\in\set{0}\cup J}{i\in I_j}|-1$ times.
		We can thus reformulate our goal, \cref{eq:goal}, to
		\begin{align*}
			\sum_{i\in I}[a\in A_i]\cdot(|\setof{j\in\set{0}\cup J}{i\in I_j}|-1)
				\leq\sum_{i\in I}v_i(a)\cdot(|\setof{j\in\set{0}\cup J}{i\in I_j}|-1).
		\end{align*}
		Because $I=\bigcup_{j\in\set{0}\cup J}I_j$, for every $i\in I$ the number $|\setof{j\in\set{0}\cup J}{i\in I_j}|-1$ is nonnegative.
		simultaneously $a\in A_i\Leftrightarrow v_i(a)>0$, that is, $[a\in A_i]\leq v_i(a)$, by assumptions of the \lcnamecref{lem:soundness-sub}.
		This shows the above inequality.

		We also need to prove that $\sum_{i\in\set{0}\cup I}v_i(a)=0$ implies $v'(a)=0$.
		Assume thus the former.
		By \cref{eq:vi} we then have that $\dupl((\Gamma_j)_{j\in\set{0}\cup J})(a)=0$ and $w_j(a)+\sum_{i\in I_j}v_i(a)=0$ for all $j\in\set{0}\cup J$.
		The latter by \cref{eq:3} implies that $w_j'(a)=0$ for all $j\in\set{0}\cup J$,
		which by \cref{eq:def-vp} gives us the required thesis $v'(a)=0$.
		This finishes the proof for the case $M=K\,L$, and this the proof of the whole \lcnamecref{lem:soundness-sub}.
	\end{proof}

	We now use \cref{lem:soundness-sub} to show the following \lcnamecref{lem:soundness-red-multi}, which generalizes \cref{lem:soundness-red} to the multi-letter case:
	
	\begin{lemma}\label{lem:soundness-red-multi}
		If we can derive $\emptyset\vdash M:(v,\r)$, where $M$ is a finite closed lambda-term of sort $\otyp$,
		and $M$ is not in the beta-normal form,
		then we can derive $\emptyset\vdash N:(v',\r)$ for a lambda-term $N$ such that $M\redb N$,
		and for some $v'$ satisfying $v(a)\leq v'(a)$ and $v(a)=0\Rightarrow v'(a)=0$ for all $a\in\setP$.
	\end{lemma}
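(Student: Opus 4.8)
The plan is to prove \cref{lem:soundness-red-multi} by reducing a single leftmost outermost beta-step to the substitution lemma just established. Let the leftmost outermost redex in $M$ be $(\lambda x.K)\,L$, so that $N$ is obtained from $M$ by replacing this redex with $K[L/x]$. Because the redex is leftmost outermost, $L$ is a closed subterm of $M$, which is exactly the situation \cref{lem:soundness-sub} handles. First I would fix a derivation $D$ of $\emptyset\vdash M:(v,\r)$ and proceed by induction on the structure of $D$ (equivalently, on the structure of $M$), tracking the occurrence of the redex as we descend. At each node of $D$ not lying above the redex, nothing changes and we copy the rule. At the (unique) node of $D$ corresponding to the subterm $(\lambda x.K)\,L$, the derivation must end with an $(@)$ rule whose left premise is a derivation of $\emptyset\vdash\lambda x.K:(w_0,\bigwedge_{i\in I}(A_i,\tau_i)\arr\tau)$, which in turn ends with a $(\lambda)$ rule giving $\setof{x:(A_i,\tau_i)}{i\in I}\vdash K:(w_0,\tau)$, and whose other premises are derivations of $\emptyset\vdash L:(w_i,\tau_i)$ for $i\in I$ with $A_i=\setof{a\in\setP}{w_i(a)>0}$ (since $L$ is closed, the side condition $A_i=\setof{a}{w_i(a)>0\lor\Gamma_i\restr_a\neq\emptyset}$ collapses to this). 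These are precisely the hypotheses of \cref{lem:soundness-sub}, so we obtain a derivation of $\emptyset\vdash K[L/x]:(w',\tau)$ with $\sum_{i\in\set{0}\cup I}w_i(a)\leq w'(a)$ and the zero-preservation implication, for every $a\in\setP$.

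Next I would splice this new subderivation into the ambient derivation in place of the old one and propagate the value bookkeeping upward. The only subtlety is that the productivity value of the spliced subderivation changed from $\dupl((\emptyset)_{i\in\set 0\cup I})+\sum_i w_i=\sum_i w_i$ (the $L$-premises use the empty environment, so the duplication factor at the $(@)$ node is $\zero$) to $w'\ge\sum_i w_i$. Since productivity values only enter the higher rules additively — the $(@)$ rule adds $\dupl$ plus a sum of child values, the $(\lambda)$ rule copies the value — replacing a child's value $v_1$ by some $v_1'$ with $v_1(a)\le v_1'(a)$ for all $a$ can only increase (never decrease, and never disturb the monotonicity of) the value at every ancestor; and if $v_1(a)=0\Rightarrow v_1'(a)=0$ then the same implication is preserved upward, because $\dupl$ is a function into $\Nat$ and a sum of nonnegative quantities is zero iff each summand is. Here I should also check that the duplication factors at ancestor $(@)$ nodes are unaffected: the type environments of the spliced subderivation are unchanged (they were empty for the $L$-premises, and $K[L/x]$ introduces no new free variables because $L$ is closed and $x$ is removed), so every $\Gamma_j\restr_a$ appearing in an ancestor is literally the same $s$-multiset as before. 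Formally this monotone-propagation step is itself a trivial induction on the part of $D$ above the redex, which I would state as a one-line auxiliary observation (or fold into the main induction) rather than spell out. The conclusion is a derivation of $\emptyset\vdash N:(v',\r)$ with $v(a)\le v'(a)$ and $v(a)=0\Rightarrow v'(a)=0$ for all $a\in\setP$, as required.

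The main obstacle is essentially already discharged by \cref{lem:soundness-sub}; what remains delicate is the claim that the duplication factors at ancestor application nodes are genuinely unchanged — one has to be careful that the substitution does not merge or split type bindings in the surrounding environments in a way that alters some $|\bigcup_j\Gamma_j\restr_a|$. This works precisely because $L$ is closed, which is why choosing the leftmost outermost redex matters; I would make this point explicit. Beyond that, the proof is the routine syntactic manipulation the authors promise, and the single-letter \cref{lem:soundness-red} follows by taking $\setP=\set\leta$ and reading $v,v'$ as numbers.
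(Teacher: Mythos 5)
Your proposal is correct and follows essentially the same route as the paper: pick a redex whose argument is closed (the paper peels constants off the head by structural induction and reduces the head redex, which amounts to the same choice), apply the substitution lemma there, and propagate the changed value up through the derivation. Two small points to tighten. First, the chosen redex need not occur in the derivation at all --- it may sit in the discarded argument of a $\letb$ typed with $\top$ --- in which case you must note that the unmodified derivation already works for $N$ with $v'=v$; the paper handles this case explicitly. Second, the claim that productivity values enter ancestor rules ``only additively'' is not quite accurate: they also enter the side condition $A_i=\setof{a\in\setP}{v_i(a)>0\lor\Gamma_i\restr_a\neq\emptyset}$ of every ancestor $(@)$ rule, so changing a premise's value could in principle force a different type for the operator. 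This is exactly where the conjunction of $v(a)\leq v'(a)$ and $v(a)=0\Rightarrow v'(a)=0$ is needed: together they give $v(a)>0\Leftrightarrow v'(a)>0$, so the sets $A_i$, and hence all types appearing in the derivation, are unchanged. You do carry both facts through the induction, so the argument goes through, but this is the real reason the zero-preservation clause appears in the statement and it deserves to be made explicit.
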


	\begin{proof}
		The proof is by induction on the size of $M$.
		Since $M$ is finite, closed, of sort $\otyp$, and not in the beta-normal form, we have the following possibilities for the shape of $M$:
		
		First, $M$ may be of the form $\leta_j\,K$ for some $j\in\set{1,\dots,s}$, where $K$ is again finite, closed, of sort $\otyp$, and not in the beta-normal form.
		A derivation of $\emptyset\vdash M:(v,\r)$ necessarily ends with the $(@)$ rule, above which, ``on the left'',	we have a type judgment concerning $\leta_j$.
		According to the rule for $\leta_j$, this type judgment is of the form $\emptyset\vdash\leta_j:(\chi_j,(A,\r)\arr\r)$.
		It follows that above the $(@)$ rule we also have $\emptyset\vdash K:(w,\r)$, where $v=w+\chi_j$, and where $A=\setof{a\in\setP}{w(a)\geq 0}$.
		Applying the induction hypothesis to $K$ we obtain a lambda-term $L$ and a function $w'$ such that $K\redb L$, and we can derive $\emptyset\vdash L:(w',\r)$,
		and $w(a)\leq w'(a)$ and $w(a)=0\Rightarrow w'(a)=0$ for all $a\in\setP$.
		Note that $A=\setof{a\in\setP}{w'(a)\geq 0}$.
		Thus, taking $N=\leta_j\,L$ and $v'=w'+\chi_j$ we can derive $\emptyset\vdash N:(v',\r)$, using the $(@)$ rule and the type judgment $\emptyset\vdash\leta_j:(\chi_j,(A,\r)\arr\r)$.
		We also have $v(a)\leq v'(a)$ and $v(a)=0\Rightarrow v'(a)=0$ for all $a\in\setP$.
		
		Second, $M$ may be of the form $\letb\,K_1\,K_2$, where $K_1,K_2$ are finite, closed, and of sort $\otyp$,
		and at least one of them, say $K_i$, is not in the beta-normal form.
		Here the derivation ends with two $(@)$ rules, above which we have either $\letb:(\zero,(A,\r)\arr\top\arr\r)$ or $\letb:(\zero,\top\arr(A,\r)\arr\r)$.
		This implies that above one of the $(@)$ rules we have no additional type judgments, and over the other we have have $\emptyset\vdash K_j:(v,\r)$ for some $j\in\set{1,2}$
		(note that the productivity value is unchanged).
		If $i=j$, we apply the induction hypothesis for the type judgment $\emptyset\vdash K_i:(v,\r)$;
		we obtain $L_i$ and $v'$ such that $K_i\redb L_i$, and $v(a)\leq v'(a)$ and $v(a)=0\Rightarrow v'(a)=0$ for all $a\in\setP$, and we can derive $\emptyset\vdash L_i:(v',\r)$.
		We then take $N=\letb\,L_1\,K_2$ if $i=1$ or $N=\letb\,K_1\,L_2$ if $i=2$, and we finish the derivation as before, obtaining $\emptyset\vdash N:(v',\r)$.
		If, contrarily, $i\neq j$ (i.e., the subterm not in the beta-normal form is not used in the derivation),
		we choose an arbitrary lambda-term $L_i$ such that $K_i\redb L_i$, we define $N$ as above, and we take $v'=v$.
		We can now derive $\emptyset\vdash N:(v',\r)$ by simply replacing $K_i$ with $L_i$.
		
		The only remaining situation is that $M$ is of the form $(\lambda x.K)\,L\,L_1\,\dots\,L_k$, where $L$ is closed.
		A derivation of $\emptyset\vdash M:(v,\r)$ ends with $k$ applications of the $(@)$ rule,
		above which we have a type judgment $\emptyset\vdash(\lambda x.K)\,L:(w,\tau)$ concerning $(\lambda x.K)\,L$, and some type judgments concerning the arguments $L_1,\dots,L_k$.
		The last two rules used while deriving the former type judgment are the $(@)$ rule and the $(\lambda)$ rule:
		\begin{mathpar}
			\inferrule*[leftskip=-0.6em]{
				\inferrule*[leftskip=0.6em]{
					\emptyset\cup\setof{x:(A_i,\tau_i)}{i\in I}\vdash K:(v_0,\tau)
				}{
					\emptyset\vdash\lambda x.K:(v_0,\textstyle\bigwedge_{i\in I}(A_i, \tau_i)\arr\tau)
				}
			\\
				\emptyset\vdash L:(v_i,\tau_i)\mbox{ for each } i\in I
			}{
				\emptyset\vdash(\lambda x.K)\,L:(w,\tau)
			}
		\end{mathpar}
		where any pair occurs as $(A_i,\tau_i)$ for at most $s$ indices $i\in I$, and $0\not\in I$, and $A_i=\setof{a\in\setP}{v_i(a)>0}$ for all $i\in I$,
		and $w=\sum_{i\in\set{0}\cup I}v_i$.
		These conditions are exactly as needed by \cref{lem:soundness-sub},
		which allows us to derive $\emptyset\vdash K[L/x]:(w',\tau)$ for some $w'$ such that $w(a)\leq w'(a)$ and $w(a)=0\Rightarrow w'(a)=0$.
		We then take $N=K[L/x]\,L_1\,\dots\,L_k$ and $v'=v+(w'-w)$.
		We derive $\emptyset\vdash N:(v',\r)$ by applying to $\emptyset\vdash K[L/x]:(w',\tau)$ the same $k$ rules as in the original derivation
		(only the productivity value becomes shifted by $w'-w$).
		We also have $v(a)\leq v'(a)$ and $v(a)=0\Rightarrow v'(a)=0$ for all $a\in\setP$, as needed.
	\end{proof}

	We now write a \lcnamecref{lem:sound-stop} describing lambda-terms in the beta-normal form:

	\begin{lemma}\label{lem:sound-stop}
		If we can derive $\emptyset\vdash M:(v,\r)$, where $M$ is a finite closed lambda-term of sort $\otyp$ in the beta-normal form,
		then $\BT(M)$ has a finite branch that for every $a\in\setP$ contains exactly $v(a)$ occurrences of~$a$.
	\end{lemma}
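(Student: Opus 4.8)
\section*{Proof proposal}

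The plan is to argue by induction on the size of $M$ (equivalently, on the size of the given derivation), after first observing that a finite closed beta-normal lambda-term of sort $\otyp$ is necessarily a tree: it has the form $a\,M_1\,\dots\,M_r$, where $a$ is a constant of arity $r$ and each $M_i$ is again a finite closed beta-normal lambda-term of sort $\otyp$. Indeed, $M$ being closed rules out a free variable at the head, and $M$ being in beta-normal form rules out a lambda-binder at the head (it would be applied to its argument, forming a redex); hence the head is a constant, $M$ is fully applied since it has sort $\otyp$, and the arguments are trees by the same reasoning. In particular $\BT(M)=M$, so a branch of $\BT(M)$ is literally a branch of the tree $M$. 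Moreover, since no lambda-binders and no free variables occur, the derivation of $\emptyset\vdash M:(v,\r)$ uses only the axioms for constants and the $(@)$ rule, and because all subterms are closed, every type environment appearing in the derivation is empty, so the duplication factor is $\zero$ everywhere.

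I would then case on the head constant of $M$. Since in the multi-letter signature the only constants of positive arity are the important ones $\leta_j$ (arity $1$) and $\letb$ (arity $2$), there are three cases besides the base case $M=\letc$ (for which the only applicable axiom forces $v=\zero$ and the one-element branch $\letc$ works); the case $M=\upomega$ cannot arise, as there is no typing rule for $\upomega$. For $M=\leta_j\,K$, the derivation must end with an $(@)$ rule whose left premise is the axiom $\emptyset\vdash\leta_j:(\chi_j,(A,\r)\arr\r)$ and whose right premise is $\emptyset\vdash K:(w,\r)$ with $v=\chi_j+w$. For $M=\letb\,K_1\,K_2$, the derivation ends with two $(@)$ rules, and whichever variant of the $\letb$ axiom is used forces exactly one of $K_1,K_2$, say $K_j$, to receive a derivation $\emptyset\vdash K_j:(w,\r)$ while the other argument gets type $\top$ and no derivation; one checks $v=w$ in both variants.

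In the two inductive cases I would apply the induction hypothesis to the subterm that is actually typed ($K$, resp.\ $K_j$), obtaining a finite branch of its Böhm tree with exactly $w(a)$ occurrences of every $a\in\setP$, and then prepend the head constant. Prepending $\leta_j$ adds exactly one occurrence of $\leta_j$ and nothing else, matching $v=w+\chi_j$; prepending $\letb$, which is not in $\setP$, adds nothing, matching $v=w$. In the $\letb$ case the branch descends into the child $K_j$ that was typed, which is consistent with the ``single branch'' intuition behind the $\letb$ axioms. The resulting sequence is a finite branch of $\BT(M)$ containing exactly $v(a)$ occurrences of every $a$, as required.

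I do not anticipate a real obstacle: the statement is the terminal step of the soundness argument, and everything is rigid once one knows that a closed beta-normal term of sort $\otyp$ is a tree and that the rules available to type it are forced. The only points needing a little care are confirming that the duplication factor is identically $\zero$ (because all environments in the derivation are empty) and that the $\top$-typed argument of $\letb$ contributes nothing to the value nor to the branch — which is precisely the reason the type system is permitted to leave it underivable.
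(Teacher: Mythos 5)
Your proposal is correct and follows essentially the same route as the paper's proof: induction on the size of $M$, observing that a finite closed beta-normal term of sort $\otyp$ must be headed by a constant, then a case split on $\leta_j$, $\letb$, $\letc$, and the impossible $\upomega$, with the induction hypothesis applied to the (unique) typed argument and the head constant prepended to the branch. Your explicit remark that all environments are empty and hence the duplication factor vanishes is a correct piece of bookkeeping the paper leaves implicit.
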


	\begin{proof}
		We prove the \lcnamecref{lem:sound-stop} by induction on the size of $M$.
		Since $M$ is finite, closed, of sort $\otyp$, and in the beta-normal form, the head of $M$ is necessarily a constant, that is,
		$M$ is of the form either $\leta_j\,M'$, or $\letb\,M_1\,M_2$, or $\letc$, or $\upomega$:
		\begin{itemize}
		\item	Suppose first that $M=\leta_j\,M'$.
			As already observed while proving the previous \lcnamecref{lem:soundness-red-multi}, a derivation of $\emptyset\vdash M:(v,\r)$ necessarily ends as follows:
			\begin{mathpar}
				\inferrule*{
					\emptyset\vdash\leta_j:(\chi_j,(A,\r)\arr\r)
				\and
					\emptyset\vdash M':(v-\chi_j,\r)
				}{
					\emptyset\vdash M:(v,\r)
				}
			\end{mathpar}
			By the induction hypothesis, $\BT(M')$ has a finite branch that for every $a\in\setP$ contains exactly $(v-\chi_j)(a)$ occurrences of~$a$.
			This branch extended by the initial $\leta_j$ gives us a branch of $\BT(M)$ as required
			(notice that $\BT(M)=\leta_j\,\BT(M')$, and that the missing $\chi_j$ is compensated by the constant $\leta_j$ from the root).
		\item	The case of $M=\letb\,M_1\,M_2$ is similar.
			The derivation may end in this way:
			\begin{mathpar}
				\inferrule*[leftskip=-5em,rightskip=-5.4em]{
					\inferrule*[leftskip=5em,rightskip=5.4em]{
						\emptyset\vdash\letb:(\zero,(A,\r)\arr\top\arr\r)
					\and
						\emptyset\vdash M_1:(v,\r)
					}{
						\emptyset\vdash\letb\,M_1:(\zero,\top\arr\r)
					}
				}{
					\emptyset\vdash M:(v,\r)
				}
			\end{mathpar}
			or in this way:
			\begin{mathpar}
				\inferrule*[leftskip=-0.3em]{
					\inferrule*[leftskip=0.3em]{
						\emptyset\vdash\letb:(\zero,\top\arr(A,\r)\arr\r)
					}{
						\emptyset\vdash\letb\,M_1:(\zero,(A,\r)\arr\r)
					}
					\and
						\emptyset\vdash M_2:(v,\r)
				}{
					\emptyset\vdash M:(v,\r)
				}
			\end{mathpar}
			Thus we have a derivation of either $\emptyset\vdash M_1:(v,\r)$ or $\emptyset\vdash M_2:(v,\r)$.
			We use the induction hypothesis for this type judgment, and we prepend $\letb$ to the resulting branch.
		\item	For $M=\letc$, we could derive $\emptyset\vdash M:(v,\r)$ only for $v=\zero$, so the only branch of $\BT(M)=\letc$ satisfies the thesis.
		\item	It is impossible to derive any type judgment concerning $\upomega$, so the case $M=\upomega$ could not hold.
		\qedhere\end{itemize}
	\end{proof}

	Finally, we prove a multi-letter counterpart of \cref{lem:soundness-fin} (the original \cref{lem:soundness-fin} can be obtained by simply taking $s=1$):

	\begin{lemma}\label{lem:soundness-fin-multi}
		If we can derive $\emptyset\vdash M:(v,\r)$, where $M$ is a closed lambda-term of sort $\otyp$,
		then $\BT(M)$ has a finite branch that for every $a\in\setP$ contains at least $v(a)$ occurrences of~$a$.
	\end{lemma}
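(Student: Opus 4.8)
The plan is to follow the structure of the proof of \cref{lem:soundness-fin} outlined in \cref{sec:soundness}, using the multi-letter lemmas \cref{lem:soundness-red-multi} and \cref{lem:sound-stop} proved above; essentially one has to glue these two lemmas together after first reducing to a finite lambda-term.

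The first step is to reduce to the case of a finite $M$. Fix a derivation $D$ of $\emptyset\vdash M:(v,\r)$; it is finite. Following Parys~\cite{diagonal-types,parys-survey}, I would \emph{cut off} the parts of $M$ not inspected by $D$: every maximal subterm-occurrence of $M$ that appears in no type judgment of $D$ (these are precisely the arguments of applications for which $D$ supplies the type $\top$) is replaced by a term of the form $\lambda x_1\lamdots\lambda x_k.\upomega$ whose sort matches. This produces a finite closed lambda-term $M^\circ$ of sort $\otyp$, and $D$ is verbatim a derivation of $\emptyset\vdash M^\circ:(v,\r)$, since the removed subterms were never looked at. Moreover, replacing subterms by $\upomega$-headed terms can only make the B\"ohm tree smaller in the approximation order (in which $\upomega$ is least), so $\BT(M^\circ)$ approximates $\BT(M)$; consequently every finite branch of $\BT(M^\circ)$ --- which by definition ends in a non-$\upomega$ leaf, hence is already fully defined in the smaller tree --- is also a finite branch of $\BT(M)$. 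Thus it suffices to prove the claim for finite $M$, which I assume from now on.

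Being simply typed, a finite $M$ is strongly normalizing, so iterating \cref{lem:soundness-red-multi} must terminate: we obtain a sequence $M=M_0\redb M_1\redb\cdots\redb M_k=N$ with $N$ in beta-normal form, together with derivations $\emptyset\vdash M_j:(v_j,\r)$ where $v_0=v$ and $v_j(a)\leq v_{j+1}(a)$ for all $j$ and all $a\in\setP$; in particular $v(a)\leq v_k(a)$ for every $a\in\setP$. Since $N$ is the beta-normal form of the finite term $M$, we have $N=\BT(M)$. Applying \cref{lem:sound-stop} to $\emptyset\vdash N:(v_k,\r)$ produces a finite branch of $\BT(N)=\BT(M)$ that for every $a\in\setP$ contains exactly $v_k(a)$ occurrences of $a$, hence at least $v(a)$ occurrences; this is the branch required by the lemma.

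I expect the only genuinely delicate point to be the cutting-off step: one must check carefully that $D$ remains a legal derivation for $M^\circ$ (routine, as the cut-off subterms never occur in $D$) and that finite branches transfer from $\BT(M^\circ)$ to $\BT(M)$ (which follows from continuity of the B\"ohm-tree construction with respect to the approximation order, together with the fact that a finite branch cannot pass through an $\upomega$-labelled node). The rest is an immediate combination of \cref{lem:soundness-red-multi} and \cref{lem:sound-stop}, exactly as in the single-letter case.
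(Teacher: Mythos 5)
Your proposal is correct and follows essentially the same route as the paper's proof: cut off the unused parts of $M$ to reduce to a finite term (noting that finite branches, which avoid $\upomega$, transfer back), then iterate \cref{lem:soundness-red-multi} down to the beta-normal form (the paper packages this iteration as an induction on the maximal length of a reduction sequence, justified by strong normalization), and finish with \cref{lem:sound-stop}. No gaps.
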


	\begin{proof}
		Suppose first that $M$ is finite.
		In this case, we conduct the proof by induction on the maximal number $n$ such that there exists a sequence of $n$ beta-reductions from $M$
		(note that this number is finite thanks to the strong normalization property of simply-typed lambda-calculus).
		The base case is when $n=0$: no beta-reduction can be performed from $M$ (i.e., $M$ is in the beta-normal form); then the thesis follows directly from \cref{lem:sound-stop}.		
		Next, we have a case when $M$ is not in the beta-normal form.
		Then using \cref{lem:soundness-red} we obtain a derivation of $\emptyset\vdash N:(v',\r)$
		for some $v'$ satisfying $v\leq v'$ and some $N$ such that $M\redb N$ (which implies $\BT(M)=\BT(N)$).
		Clearly $N$ is again finite, closed, and of sort $\otyp$, and the maximal length of a sequence of beta-reductions starting in $N$ is smaller than for $M$.
		We can thus apply the induction hypothesis to $N$, and obtain a branch of $\BT(N)$ (i.e., of $\BT(M)$) as required.

		Finally, suppose that $M$ is infinite.
		A derivation of $\emptyset\vdash M:(v,\r)$ is a finite object that analyzes only a finite part of $M$, thus we can focus only on this finite part.
		To this end, we define a \emph{cut of $M$} to be
		a lambda-term obtained from $M$ by replacing some of its subterms with lambda-terms of the form $\lambda x_1\lamdots\lambda x_k.\upomega$,
		where $x_1,\ldots,x_k$ are chosen in a way that the sort of the subterm does not change.
		It is easy to see that if we can derive $\emptyset\vdash M:(v,\r)$, then there is some finite cut $M'$ of $M$ for which we can derive $\emptyset\vdash M':(v,\r)$:
		to obtain $M'$ we simply cut off subterms that are not involved in the derivation of $\emptyset\vdash M:(v,\r)$.
		On the other hand, every branch of $\BT(M')$ is also a branch of $\BT(M)$ (recall that, by definition of a branch, we do not consider branches ending with $\upomega$).
		Thus, the case of an infinite term $M$ reduces to the case of a finite term $M'$, which is already proven.
	\end{proof}

\section{Completeness}

	In this section we complete details of the proofs from \cref{sec:completeness}, conductive all proofs in the more general multi-letter case.
	
	Recall that in \cref{sec:completeness} we have introduced extended type judgments of the form $\Gamma\vdash N:(u\oplus_\ell w,\tau)$ for $\ell\in\Nat$,
	where on $u$ we accumulate only duplication factors concerning variables of order at least $\ell$, and on $w$ we accumulate the rest of the productivity value.
	This variant of type judgments can be used without any doubt also for the multi-letter case, when $u$ and $w$ are functions from $\setP$ to $\Nat$.
	
	In order to derive such extended type judgments we adjust rules of the type system in the expected way.
	Namely, for a type environment $\Gamma$ we define $\Gamma\restr_{\geq\ell}$ and $\Gamma\restr_{<\ell}$ to contain only those type bindings from $\Gamma$
	that concern variables of order at least $\ell$ and at most $\ell-1$, respectively.
	Then, rules of the type system can we written as follows:
	\begin{mathpar}
	\inferrule{}{
		\emptyset\vdash \letb:(\zero\oplus_\ell\zero,(A,\r)\arr\top\arr\r)
	}\and
	\inferrule{}{
		\emptyset\vdash \letb:(\zero\oplus_\ell\zero,\top\arr(A,\r)\arr\r)
	}\and
	\inferrule{}{
		\emptyset\vdash \leta_i:(\zero\oplus_\ell\chi_i,(A,\r)\arr\r)
	}\and
	\inferrule{}{
		\emptyset\vdash \letc:(\zero\oplus_\ell\zero,\r)
	}\and
	\inferrule{}{
		x:(A,\tau)\vdash x:(\zero\oplus_\ell\zero,\tau)
	}
	\and
	\inferrule*[right=($\lambda$)]{
		\Gamma\cup\setof{x:(A_i,\tau_i)}{i\in I}\vdash K:(u\oplus_\ell w,\tau)
	\\
		x\not\in \dom(\Gamma)
	}{
		\Gamma\vdash\lambda x.K:(u\oplus_\ell w,\textstyle\bigwedge_{i\in I}(A_i,\tau_i)\arr\tau)
	}
	\and
	\inferrule*[right=$(@)$]{
		0\not\in I
	\\
		\forall i\in I.\ A_i=\setof{a\in\setP}{u_i(a)+w_i(a)>0 \lor \Gamma_i\restr_a\neq\emptyset}
	\\
		u=\dupl((\Gamma_i\restr_{\geq\ell})_{i\in\set{0}\cup I})+\textstyle\sum_{i\in\set{0}\cup I}u_i
	\\
		w=\dupl((\Gamma_i\restr_{<\ell})_{i\in\set{0}\cup I})+\textstyle\sum_{i\in\set{0}\cup I}w_i
	\\
		\Gamma_0\vdash K:(u_0\oplus_\ell w_0,\textstyle\bigwedge_{i\in I}(A_i,\tau_i)\arr\tau)
	\\
		\Gamma_i\vdash L:(u_i\oplus_\ell w_i,\tau_i)\mbox{ for each }i\in I
	}{
		\textstyle\bigcup_{i\in\set{0}\cup I}\Gamma_i\vdash K\,L:(u\oplus_\ell w,\tau)
	}
	\end{mathpar}

	There exists a direct correspondence between the original type system and the new one:

	\begin{lemma}
		\begin{alphaenumerate}\label{lem:std-eqv-ext}
		\item If we can derive $\Gamma\vdash M:(u\oplus_\ell w,\tau)$, then we can also derive $\Gamma\vdash M:(u+w,\tau)$.
		\item If we can derive $\Gamma\vdash M:(v,\tau)$, where $M$ is of complexity at most $\ell$ and $\Gamma\restr_{\geq\ell}=\emptyset$,
			then we can also derive $\Gamma\vdash M:(\zero\oplus_\ell v,\tau)$.
		\end{alphaenumerate}
	\end{lemma}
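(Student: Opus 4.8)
The plan is to prove both parts by a straightforward induction on the structure of the chosen derivation, exploiting the fact that the extended type system differs from the original one only in how a single productivity value $v$ is split into the pair $u\oplus_\ell w$ with $u+w=v$. For part~(a), I would induct on the size of a derivation of $\Gamma\vdash M:(u\oplus_\ell w,\tau)$. The base cases (constants and variables) are immediate: in each extended rule the pair is $\zero\oplus_\ell\zero$ or $\zero\oplus_\ell\chi_i$, whose sum is exactly the value appearing in the corresponding original rule. For the $(\lambda)$ rule nothing changes, since it carries the value verbatim through the binder. For the $(@)$ rule, the induction hypothesis gives original derivations of $\Gamma_0\vdash K:(u_0+w_0,\bigwedge_{i\in I}(A_i,\tau_i)\arr\tau)$ and of $\Gamma_i\vdash L:(u_i+w_i,\tau_i)$; the side condition on $A_i$ is literally the same once we observe $u_i(a)+w_i(a)>0$; and the resulting value is $\dupl((\Gamma_i\restr_{\geq\ell})_i)+\dupl((\Gamma_i\restr_{<\ell})_i)+\sum_i(u_i+w_i)$, which equals $\dupl((\Gamma_i)_i)+\sum_i v_i$ because $\Gamma_i=\Gamma_i\restr_{\geq\ell}\cup\Gamma_i\restr_{<\ell}$ is a disjoint union (these two parts involve variables of different orders), so the duplication factor is additive across the two parts. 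Hence the original $(@)$ rule yields exactly $\Gamma\vdash K\,L:(u+w,\tau)$.

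For part~(b), I would again induct on the derivation of $\Gamma\vdash M:(v,\tau)$, maintaining the invariant that the term in question has complexity at most $\ell$ and its type environment contains no binding for a variable of order $\geq\ell$. The base cases put all the value onto the second component, matching the extended constant/variable rules. In the $(\lambda)$ case, $\lambda x.K$ has complexity $\leq\ell$, so $K$ does too, and the extended environment $\Gamma\cup\{x:(A_i,\tau_i)\mid i\in I\}$ must still satisfy $(\cdot)\restr_{\geq\ell}=\emptyset$: here I need to check that the bound variable $x$ has order $<\ell$, which follows from the complexity bound on $\lambda x.K$ (its sort has order $\leq\ell$ unless the subterm is of the exempt form $a\,M_1\cdots M_k$, and one argues that in the applicable case $\ord(x)<\ord(\lambda x.K)\leq\ell$). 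Then the induction hypothesis applies to $K$, giving $\zero\oplus_\ell v$, and the extended $(\lambda)$ rule passes it through. In the $(@)$ case $K\,L$, both $K$ and $L$ have complexity $\leq\ell$ and their environments $\Gamma_i$ are subsets of $\Gamma$, hence also satisfy $\Gamma_i\restr_{\geq\ell}=\emptyset$; applying the induction hypothesis to each premise gives second-component-only values $\zero\oplus_\ell u_i'$ with $u_i'=v_i$, and since every $\Gamma_i\restr_{\geq\ell}=\emptyset$ we get $\dupl((\Gamma_i\restr_{\geq\ell})_i)=\zero$, so the extended $(@)$ rule reconstructs exactly $\zero\oplus_\ell v$.

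The only genuinely delicate point is the bookkeeping around complexity and variable orders in part~(b): one must be careful that the exceptional clause in the definition of complexity (terms built as $a\,M_1\cdots M_k$ with a constant head) does not let a high-order bound variable slip in unnoticed. I expect this to be the main obstacle, but it is resolved by noting that a lambda-binder $\lambda x.K$ is never of the exempt form $a\,M_1\cdots M_k$, so its order equals $\ord(\lambda x.K)\leq\ell$, forcing $\ord(x)<\ell$; and arguments $L$ of an application either contribute their own order to the complexity or are themselves of the exempt form (constants-only subtrees of order $0$), so in every case the environments stay free of order-$\geq\ell$ bindings. With this observation in hand the two inductions go through routinely; the detailed calculations mirror those in the proof of \cref{lem:soundness-sub} and are relegated to the formal write-up.
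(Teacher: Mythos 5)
Your proposal is correct and follows essentially the same route as the paper, which dispatches the lemma as an ``immediate induction on the size of a derivation'' and singles out exactly the observation you make for part~(b): that for $M=\lambda x.K$ of complexity at most $\ell$ the bound variable $x$ has order at most $\ell-1$, so the extended environment still has no bindings of order at least $\ell$. Your additional bookkeeping (additivity of $\dupl$ over the partition $\Gamma_i=\Gamma_i\restr_{\geq\ell}\cup\Gamma_i\restr_{<\ell}$, and the sub-multiset argument for the $(@)$ premises) is sound and merely spells out what the paper leaves implicit.
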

	
	\begin{proof}
		Immediate induction on the size of a derivation.
		For the second item we have to observe that if $M=\lambda x.K$ then $x$ is of order at most $\ell-1$,
		implying that the type environment $\Gamma$ extended by some bindings for $x$ still has no bindings for variables of order at least $\ell$.
	\end{proof}

\subsection{Proof of Lemma~\ref{lem:compl-1red}}\label{app:compl-1red}

	As for the soundness proof, we need the following results that deals with a single substitution, which is then used while proving \cref{lem:compl-1red-multi}
	(a multi-letter counterpart of \cref{lem:compl-1red}):

	\begin{lemma}\label{lem:compl-sub}
		Suppose we can derive $\Gamma\vdash M[N/x]:(u\oplus_\ell w,\tau)$, where $N$ is closed, has order $\ell$, and does not use any variables of order at least $\ell$.
		In such a situation there exist a set $I$ of indices with $0\not\in I$, functions $u_0'$ and $w_i'$ for $i\in\set{0}\cup I$, and types $\tau_i$ for $i\in I$
		such that, taking $A_i=\setof{a\in\setP}{w_i'(a)>0}$ for $i\in I$,
		any pair occurs as $(A_i,\tau_i)$ for at most $s$ indices $i\in I$, and we can derive
		\begin{align*}
			&\Gamma\cup\setof{x:(A_i,\tau_i)}{i\in I}\vdash M:(u_0'\oplus_\ell w_0',\tau)&&\mbox{and}\\
			&\emptyset\vdash N:(\zero\oplus_\ell w_i',\tau_i)&&\mbox{ for all $i\in I$,}
		\end{align*}
		and for all $a\in\setP$ we have that
		\begin{align*}
			&2^{u(a)}\cdot w(a)\leq 2^{u'_0(a)}\cdot\sum_{i\in\set{0}\cup I}w_i'(a),\\
			&u(a)\leq u'_0(a),&&\mbox{and}\\
			&u(a)+w(a)=0\ \Rightarrow\ u'_0(a)+\sum_{i\in\set{0}\cup I}w_i'(a)=0.
		\end{align*}
	\end{lemma}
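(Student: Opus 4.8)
The plan is to prove \cref{lem:compl-sub} by induction on the size of the chosen derivation of $\Gamma\vdash M[N/x]:(u\oplus_\ell w,\tau)$, with a case split on the shape of $M$; structurally this mirrors the soundness substitution lemma (\cref{lem:soundness-sub}), only now the substitution is \emph{inverted}. A preliminary observation, obtained by an immediate induction and used below, is that since $N$ is closed and contains no variable of order at least $\ell$ (neither free nor bound), any derivable extended judgment $\emptyset\vdash N:(u_N\oplus_\ell w_N,\sigma)$ has $u_N=\zero$ — no $\dupl$-summand concerning a variable of order at least $\ell$ can occur — and every type environment in such a derivation is empty.

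If $M$ is a constant, a variable other than $x$, or of the form $\lambda x.K$ (binding the very variable $x$), then $M[N/x]=M$, so I would take $I=\emptyset$, $u_0'=u$, $w_0'=w$ and return the given derivation unchanged; all the required (in)equalities are then immediate. If $M=x$, then $M[N/x]=N$ and $\Gamma=\emptyset$; by the preliminary observation $u=\zero$, so I would let $I$ be a singleton $\set{*}$, put $\tau_*=\tau$, $w_*'=w$, $u_0'=w_0'=\zero$, $A_*=\setof{a\in\setP}{w(a)>0}$, derive $x:(A_*,\tau)\vdash x:(\zero\oplus_\ell\zero,\tau)$ by the variable rule, and reuse the given derivation as $\emptyset\vdash N:(\zero\oplus_\ell w,\tau)$; the multiplicative inequality then holds with equality and the other two conditions are trivial. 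If $M=\lambda y.K$ with $y\neq x$ (renamed fresh), the derivation ends with $(\lambda)$; I would apply the induction hypothesis to its premise (whose type environment includes the bindings of $y$) and re-apply $(\lambda)$ to push $y$ back, which is legal because $y\notin\dom(\Gamma)\cup\set{x}$; the recovered data and all the inequalities carry over verbatim.

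The only substantial case is $M=K\,L$, where $M[N/x]=K[N/x]\,L[N/x]$ and the derivation ends with $(@)$ over some argument-index set $J$. I would apply the induction hypothesis once to the subderivation concerning $K[N/x]$ and once to the subderivation concerning $L[N/x]$ for each $j\in J$, obtaining recovered derivations of $K$ and of $L$ — each carrying an additional $s$-multiset of bindings of $x$ — together with a pool of closed subderivations $\emptyset\vdash N:(\zero\oplus_\ell w,\tau_i)$, one for each recovered binding of $x$. I would then re-assemble a derivation of $K\,L$ with $(@)$, the recovered $K$-derivation as the head and the recovered $L$-derivations as arguments; its type environment is $\Gamma$ together with the $s$-multiset union of all the recovered $x$-binding multisets (the $N$-subderivations do not affect it, being closed). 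For this $(@)$ rule to apply the argument productivity sets $B_j$ must be unchanged, and — exactly as in \cref{lem:soundness-sub} — this holds because the induction hypothesis guarantees that at every letter the total value of the recovered $L$-derivation is zero precisely when it was zero in the original, so the defining set of each $B_j$ is not disturbed by the new $x$-bindings. Finally, since the $s$-multiset union caps each pair at $s$ copies but a pair may have been demanded by more than $s$ of the recovered bindings, for every such pair I would keep only as many $N$-subderivations as the capped union prescribes — for each letter $a$ one maximising the $a$-th coordinate of its second value-component, filling up arbitrarily if fewer than $s$ — and discard the rest; the surviving subderivations, indexed by a set $I$ with at most $s$ copies of each pair and with $A_i=\setof{a\in\setP}{w_i'(a)>0}$, together with the value of the re-assembled $(@)$ derivation, give exactly the data required in the conclusion.

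I expect the main obstacle to be verifying the three value conditions in this application case, above all the coordinate-wise inequality $2^{u(a)}\cdot w(a)\le 2^{u_0'(a)}\cdot\sum_{i\in\set{0}\cup I}w_i'(a)$. Two competing effects must be balanced. Because $N$ contains no variable of order at least $\ell$, every $N$-subderivation has zero first value-component, so discarding subderivations never decreases the first component; conversely, reintroducing the order-$\ell$ variable $x$ and merging the recovered environments produces a duplication factor that adds exactly $n$ units to the first component whenever $n$ subderivations of a fixed pair are discarded — hence $u(a)\le u_0'(a)$. On the other side, discarding $n$ out of $n{+}1$ subderivations of one pair while keeping a coordinate-maximal one shrinks the corresponding part of the second component by a factor at most $n{+}1$; since $2^{n}\ge n{+}1$ and the duplication factors for distinct pairs add up in the exponent, the loss is absorbed. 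The delicate part is to fit this per-pair estimate together with the facts that the second value-component is a \emph{sum} over the head and all arguments rather than a product, and that further duplication factors and applications of the rules for the $\leta_i$ are already present in the original $(@)$ derivation; after bounding each summand $q_j(a)$ by the inductive inequality for $K$ or $L$, the required estimate reduces to the per-pair claim just described. The implication $u(a)+w(a)=0\Rightarrow u_0'(a)+\sum_{i\in\set{0}\cup I}w_i'(a)=0$ then follows from the analogous implications of the induction hypothesis, since discarded subderivations and duplicated bindings contribute nothing when everything vanishes. This parallels, but is heavier than, the step in the proof sketch of \cref{lem:compl-1red} in \cref{sec:completeness}.
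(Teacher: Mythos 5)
Your proposal is correct and follows essentially the same route as the paper's proof: the same induction on the derivation with the same four-way case split, the same observation that derivations for $N$ have empty environments and zero first value-component, the same selection of at most $s$ coordinate-maximal subderivations per pair in the application case, and the same $2^n\geq n+1$ per-pair estimate balanced against the duplication factor added by the reintroduced variable $x$.
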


	\begin{proof}
		We prove the lemma by induction on the size of a chosen derivation of $\Gamma\vdash M[N/x]:(u\oplus_\ell w,\tau)$.
		Assume the statement holds for all smaller derivations, and consider the last rule in the input type derivation.
		There are four cases.
		
		First, $M$ may be either of the form $\lambda x.K$ (with the variable $x$ for which we substitute), or a constant, or a variable other than $x$.
		Then we have $M[N/x]=M$, and we take $I=\emptyset$, and $u_0'=u$, and $w'_0=w$.
		The input derivation proves $\Gamma\cup\setof{x:(A_i,\tau_i)}{i\in I}\vdash M:(u_0'\oplus_\ell w_0',\tau)$.
		Finally, $u(a)=u'_0(a)$ and $w(a)=\sum_{i\in\set{0}\cup I}w_i'(a)$ for all $a\in\setP$, proving the desired conditions on the productivity value.
		
		Second, we may have $M=x$.
		Analysing rules of the type system, we can easily show by induction on the size of a derivation that:
		\begin{itemize}
		\item	any derived type judgment concerning some lambda-term has in the type environment only type bindings for variables that are free in this lambda-term;
		\item	any derived type judgment concerning a lambda-term without any variables of order at least $\ell$ has $\zero$ in its ``$u$'' component of the productivity value.
		\end{itemize}
		Because by assumption $M[N/x]=N$ is closed and uses no variables of order at least $\ell$, this implies that $\Gamma=\emptyset$ and $u=\zero$.
		We take $I=\set{1}$, and $u_0'=w_0'=\zero$, and $w_1'=w$, and $\tau_1=\tau$; then $A_1=\setof{a\in\setP}{w_1'(a)>0}$.
		The input derivation proves $\emptyset\vdash N:(\zero\oplus_\ell w_1',\tau_1)$,
		and we can derive $\Gamma\cup\setof{x:(A_i,\tau_i)}{i\in I}\vdash M:(u_0'\oplus_\ell w_0',\tau)$, that is,
		$\set{x:(A_1,\tau)}\vdash x:(\zero\oplus_\ell\zero,\tau)$ using the rule for a variable.
		Finally, $u(a)=0=u'_0(a)$ and $w(a)=\sum_{i\in\set{0}\cup I}w_i'(a)$ for all $a\in\setP$, proving the desired conditions on the productivity value.

		Third, if $M$ is of the form $\lambda y.K$ for $y\neq x$, then the last rule applied in the input derivation is
		\begin{mathpar}
		\inferrule*{
			\Gamma\cup\setof{y:(B_j,\sigma_j)}{j\in J}\vdash K[N/x]:(u\oplus_\ell w,\sigma)
		}{
			\Gamma\vdash(\lambda y.K)[N/x]:(u\oplus_\ell w,\textstyle\bigwedge_{j\in J}(B_j,\sigma_j)\arr\sigma)
		}
		\end{mathpar}
		and $y\not\in\dom(\Gamma)$.
		By the induction hypothesis we obtain $I$, $u_0'$, $w_i'$ for $i\in\set{0}\cup I$, and $\tau_i$ for $i\in I$ such that,
		taking $A_i=\setof{a\in\setP}{w_i'(a)>0}$ for $i\in I$, we can derive
		\begin{align*}
			&\Gamma\cup\setof{x:(A_i,\tau_i)}{i\in I}\cup\setof{y:(B_j,\sigma_j)}{j\in J}\vdash K:(u_0'\oplus_\ell w_0',\sigma)&&\mbox{and}\\
			&\emptyset\vdash N:(\zero\oplus_\ell w_i',\tau_i)&&\mbox{for all $i\in I$,}
		\end{align*}
		and any pair occurs as $(A_i,\tau_i)$ for at most $s$ indices $i\in I$, and for all $a\in\setP$ we have that
		and $2^{u(a)}\cdot w(a)\leq 2^{u'_0(a)}\cdot\sum_{i\in\set{0}\cup I}w_i'(a)$, and $u(a)\leq u'_0(a)$, and
		$u(a)+w(a)=0\Rightarrow u'_0(a)+\sum_{i\in\set{0}\cup I}w_i'(a)=0$.
		By applying the $(\lambda)$ rule to the former type judgment we derive
		\begin{align*}
			\Gamma\cup\setof{x:(A_i,\tau_i)}{i\in I}\vdash\lambda y.K:(u_0'\oplus_\ell w_0',\textstyle\bigwedge_{j\in J}(B_j,\sigma_j)\arr\sigma),
		\end{align*}
		proving the statement.

		Finally, assume that $M$ is in the form $K\,L$, so the last rule in the input derivation is
		\begin{mathpar}
		\inferrule*{
			\Gamma_1\vdash K[N/x]:(u_1\oplus_\ell w_1,\textstyle\bigwedge_{j\in J}(B_j,\sigma_j)\arr\tau)
		\\
			\Gamma_j\vdash L[N/x]:(u_j\oplus_\ell w_j,\sigma_j)\mbox{ for each }j\in J
		}{
			\Gamma\vdash(K\,L)[N/x]:(u\oplus_\ell w,\tau)
		}
		\end{mathpar}
		where $1\not\in J$, where
		\begin{align}\label{eq:Bj}
			B_j=\setof{a\in\setP}{u_j(a)+w_j(a)>0\lor\Gamma_j\restr_a\neq\emptyset}
		\end{align}
		for $j\in J$, and where
		\begin{align}
			u&=\dupl((\Gamma_j\restr_{\geq\ell})_{j\in\set{1}\cup J})+\sum_{j\in\set{1}\cup J}u_j&&\mbox{and}&
			w&=\dupl((\Gamma_j\restr_{<\ell})_{j\in\set{1}\cup J})+\sum_{j\in\set{1}\cup J}w_j.\label[equalities]{eq:def-uw}
		\end{align}
		Potentially renaming the indices in $J$, we can also assume that $0\not\in J$.
		
		We apply the induction hypothesis for the type judgments above the final $(@)$ rule.
		We obtain sets of indices $I_j$ for $j\in\set{1}\cup J$;
		after potentially renaming indices in these sets, we can assume that these they are disjoint, and that each of them is disjoint from $\set{0,1}\cup J$.
		Going further, for each $j\in\set{1}\cup J$ we obtain a function $u_j'$, functions $w_i'$ for $i\in\set{j}\cup I_j$, and types $\tau_i$ for $i\in I_j$.
		Defining $I'=\bigcup_{j\in\set{1}\cup J}I_j$ and $A_i=\setof{a\in\setP}{w_i'(a)>0}$ for $i\in I'$ we have derivations for
		\begin{align*}
			&\Gamma_1\cup\setof{x:(A_i,\tau_i)}{i\in I_1}\vdash K:(u_1'\oplus_\ell w_1',\textstyle\bigwedge_{j\in J}(B_j,\sigma_j)\arr\tau),\\
			&\Gamma_j\cup\setof{x:(A_i,\tau_i)}{i\in I_j}\vdash L:(u_j'\oplus_\ell w_j',\sigma_j)&&\mbox{for each $j\in J$, and}\\
			&\emptyset\vdash N:(\zero\oplus_\ell w_i',\tau_i)&&\mbox{for each $i\in I'$.}
		\end{align*}
		Moreover, for each $j\in\set{1}\cup I$ we have that any pair occurs as $(A_i,\tau_i)$ for at most $s$ indices $i\in I_j$.
		We also have the conditions
		\begin{align}
			&2^{u_j(a)}\cdot w_j(a)\leq 2^{u_j'(a)}\cdot\sum_{i\in\set{j}\cup I_j}w_i'(a),\label[inequality]{eq:ujwj}\\
			&u_j(a)\leq u_j'(a),&&\mbox{and}\label[inequality]{eq:uj}\\
			&u_j(a)+w_j(a)=0\ \Rightarrow\ u'_j(a)+\sum_{i\in\set{j}\cup I_j}w_i'(a)=0\label[implication]{eq:impl}
		\end{align}
		for all $a\in\setP$ and $j\in\set{1}\cup J$.
		Note that \cref{eq:impl} can be actually changed into an equivalence: the right-to-left implication follows from \cref{eq:ujwj,eq:uj}.
		This equivalence, together with \cref{eq:Bj}, implies that
		\begin{align*}
			B_j&=\setof{a\in\setP}{u_j'(a)+u_j'(a)>0\lor\exists i\in I_j.\ w_i'(a)>0\lor\Gamma_j\restr_a\neq\emptyset}\\
			   &=\setof{a\in\setP}{u_j'(a)+u_j'(a)>0\lor(\Gamma_j\cup\setof{x:(A_i,\tau_i)}{i\in I_j})\restr_a\neq\emptyset}.
		\end{align*}
		
		We now come to the task of choosing the set of indices $I$ as an appropriate subset of $I'$, which is a nontrivial fragment of the proof.
		We remark that one cannot take all indices from $I'$ to $I$, because we have no guarantee that any pair occurs as $(A_i,\tau_i)$ for at most $s$ indices $i\in I'$,
		which is required in the statement of the lemma.
		Instead, for any pair $\gamma=(A,\rho)$ we denote the set $\setof{i\in I'}{(A_i,\tau_i)=\gamma}$ by $I'\restr_\gamma$.
		Then, for any pair $\gamma$ with $I'\restr_\gamma\neq\emptyset$ we proceed as follows:
		\begin{itemize}
		\item	first, for every $a\in\setP$ we take to $I$ this index from $I'\restr_\gamma$ for which $w_i'(a)$ is the largest (any such index in the case of a tie);
		\item	then, we ensure that exactly $\min(|I'\restr_\gamma|,s)$ indices from $I'\restr_\gamma$ are taken to $I$,
			by arbitrarily choosing some other indices from $I'\restr_\gamma$ and taking them to $I$.
		\end{itemize}
		Note that, in the first step, the same index $i\in I'\restr_\gamma$ may give the largest value of $w_i'(a)$ for multiple letters $a\in\setP$;
		then this index is taken to $I$ only once ($I$ is a set).
		It follows that, after this step, no more than $\min(|I'\restr_\gamma|,s)$ elements of $I'\restr_\gamma$ are taken to $I$,
		so the second step indeed makes sense.
		From the property that exactly $\min(|I'\restr_\gamma|,s)$ indices from $I'\restr_\gamma$ are taken to $I$, for every pair $\gamma$,
		we obtain that
		\begin{itemize}
		\item	any pair occurs as $(A_i,\tau_i)$ for at most $s$ indices $i\in I$, and
		\item	$\setof{x:(A_i,\tau_i)}{i\in I}=\bigcup_{j\in\set{1}\cup J}\setof{x:(A_i,\tau_i)}{i\in I_j}$.
		\end{itemize}
		Having this, we apply the $(@)$ rule to the type judgments obtained from the induction hypothesis; we derive
		\begin{align*}
			\Gamma\cup\setof{x:(A_i,\tau_i)}{i\in I}\vdash K\,L:(u_0'\oplus_\ell w_0',\tau),
		\end{align*}
		where (recall that $x$ is of order $\ell$)
		\begin{align}
			u_0'&=\dupl((\Gamma_j\restr_{\geq\ell}\cup\setof{x:(A_i,\tau_i)}{i\in I_j})_{j\in\set{1}\cup J})+\sum_{j\in\set{1}\cup J}u'_j\nonumber\\
			    &=\dupl((\Gamma_j\restr_{\geq\ell})_{j\in\set{1}\cup J})+\dupl((\setof{x:(A_i,\tau_i)}{i\in I_j})_{j\in\set{1}\cup J})+\sum_{j\in\set{1}\cup J}u'_j
				&&\mbox{and}\label{eq:def-u0p}\\
			w_0'&=\dupl((\Gamma_j\restr_{<\ell})_{j\in\set{1}\cup J})+\sum_{j\in\set{1}\cup J}w'_j.\label{eq:def-w0p}
		\end{align}
		
		It remains to prove the desired conditions on the productivity value.
		To this end, we fix some $a\in\setP$ and we consider the first inequality to be proven,
		\begin{align}\label[inequality]{eq:goal-c}
			2^{u(a)}\cdot w(a)\leq 2^{u'_0(a)}\cdot\sum_{i\in\set{0}\cup I}w_i'(a).
		\end{align}
		Using \cref{eq:def-uw} we expand the left side of \cref{eq:goal-c}:
		\begin{align*}
			2^{u(a)}\cdot w(a)&=2^{\dupl((\Gamma_j\restr_{\geq\ell})_{j\in\set{1}\cup J})(a)}\cdot 2^{\sum_{j\in\set{1}\cup J}u_j(a)}
				\cdot\dupl((\Gamma_j\restr_{<\ell})_{j\in\set{1}\cup J})(a)\\
			    &\phantom{={}}+2^{\dupl((\Gamma_j\restr_{\geq\ell})_{j\in\set{1}\cup J})(a)}\cdot 2^{\sum_{j\in\set{1}\cup J}u_j(a)}
				\cdot\sum_{j\in\set{1}\cup J}w_j(a).
		\end{align*}
		Likewise, using \cref{eq:def-u0p,eq:def-w0p} we expand the right side of \cref{eq:goal-c}:
		\begin{align*}
			2^{u'_0(a)}\cdot\sum_{i\in\set{0}\cup I}w_i'(a)\hspace{-8em}&\\
			&=2^{\dupl((\Gamma_j\restr_{\geq\ell})_{j\in\set{1}\cup J})(a)}
				\cdot 2^{\dupl((\setof{x:(A_i,\tau_i)}{i\in I_j})_{j\in\set{1}\cup J})(a)}\cdot 2^{\sum_{j\in\set{1}\cup J}u'_j(a)}\\
			    &\phantom{={}}\cdot\Big(\dupl((\Gamma_j\restr_{<\ell})_{j\in\set{1}\cup J})(a)+\sum_{j\in\set{1}\cup J}w'_j(a)+\sum_{i\in I}w_i'(a)\Big)\\
			&\geq 2^{\dupl((\Gamma_j\restr_{\geq\ell})_{j\in\set{1}\cup J})(a)}\cdot 2^{\sum_{j\in\set{1}\cup J}u'_j(a)}
				\cdot\dupl((\Gamma_j\restr_{<\ell})_{j\in\set{1}\cup J})(a)\\
			    &\phantom{={}}+ 2^{\dupl((\Gamma_j\restr_{\geq\ell})_{j\in\set{1}\cup J})(a)}\cdot 2^{\sum_{j\in\set{1}\cup J}u'_j(a)}
				\cdot\sum_{j\in\set{1}\cup J}w'_j(a)\\
			    &\phantom{={}}+2^{\dupl((\Gamma_j\restr_{\geq\ell})_{j\in\set{1}\cup J})(a)}
				\cdot 2^{\dupl((\setof{x:(A_i,\tau_i)}{i\in I_j})_{j\in\set{1}\cup J})(a)}\cdot 2^{\sum_{j\in\set{1}\cup J}u'_j(a)}
				\cdot\sum_{i\in I}w_i'(a).
		\end{align*}
		Note that the factor $2^{\dupl((\Gamma_j\restr_{\geq\ell})_{j\in\set{1}\cup J})(a)}$ occurs in front of every component, at both sides, hence we can ignore it.
		Using \cref{eq:uj} for every $j\in\set{1}\cup J$ we obtain that
		\begin{align*}
			2^{\sum_{j\in\set{1}\cup J}u_j(a)}\cdot\dupl((\Gamma_j\restr_{<\ell})_{j\in\set{1}\cup J})(a)\leq
				2^{\sum_{j\in\set{1}\cup J}u'_j(a)}\cdot\dupl((\Gamma_j\restr_{<\ell})_{j\in\set{1}\cup J}).
		\end{align*}
		Next, using \cref{eq:ujwj} for a single $j\in\set{1}\cup J$ and \cref{eq:uj} for all $j'\in\set{1}\cup J$ other than $j$,
		we obtain that
		\begin{align*}
			2^{\sum_{j'\in\set{1}\cup J}u_{j'}(a)}\cdot w_j(a)&\leq 2^{\sum_{j'\in\set{1}\cup J}u'_{j'}(a)}\cdot\sum_{i\in\set{j}\cup I_j}w_i'(a).
		\end{align*}
		We have the above inequality for every $j\in\set{1}\cup J$; by taking a sum and recalling that $I'=\biguplus_{j\in\set{1}\cup J}I_j$ we obtain that
		\begin{align*}
			2^{\sum_{j\in\set{1}\cup J}u_j(a)}\cdot\sum_{j\in\set{1}\cup J}w_j(a)
				&\leq 2^{\sum_{j\in\set{1}\cup J}u'_j(a)}\cdot\sum_{j\in\set{1}\cup J}w_j'(a)\\
				&\phantom{\leq{}}+2^{\sum_{j\in\set{1}\cup J}u'_j(a)}\cdot\sum_{i\in I'}w_i'(a).
		\end{align*}
		Thus, in order to prove \cref{eq:goal-c} it remains to show that
		\begin{align}\label[inequality]{eq:goal2}
			\sum_{i\in I'}w_i'(a)\leq2^{\dupl((\setof{x:(A_i,\tau_i)}{i\in I_j})_{j\in\set{1}\cup J})(a)}\cdot\sum_{i\in I}w_i'(a).
		\end{align}
		
		Recall that for every pair $\gamma$ we have defined the set $I'\restr_\gamma$; likewise we define $I\restr_\gamma=\setof{i\in I}{(A_i,\tau_i)=\gamma}$.
		We are going to prove that, for every pair $\gamma$,
		\begin{align}\label[inequality]{eq:goal2-gamma}
			\sum_{i\in I'\restr_\gamma}w_i'(a)\leq2^{\dupl((\setof{x:(A_i,\tau_i)}{i\in I_j})_{j\in\set{1}\cup J})(a)}\cdot\sum_{i\in I\restr_\gamma}w_i'(a).
		\end{align}
		Since $I=\biguplus_\gamma I\restr_\gamma$ and $I'=\biguplus_\gamma I'\restr_\gamma$, by summing \cref{eq:goal2-gamma} over all pairs $\gamma$ we obtain \cref{eq:goal2}.
		We have two cases.
		First, we may have a pair $\gamma=(A,\rho)$ with $a\not\in A$.
		Then, by definition of $A_i$ we have that $w_i'(a)=0$ for all $i\in I'\restr_\gamma$ and even more for all $i\in I\restr_\gamma\subseteq I'\restr_\gamma$;
		both sides of \cref{eq:goal2-gamma} are zero.
		Second, suppose that we have a pair $\gamma=(A,\rho)$ with $a\in A$.
		Then, the duplication factor $\dupl((\setof{x:(A_i,\tau_i)}{i\in I_j})_{j\in\set{1}\cup J})(a)$ includes, among others, $|I'\restr_\gamma|-|I\restr_\gamma|$,
		that is, $\dupl((\setof{x:(A_i,\tau_i)}{i\in I_j})_{j\in\set{1}\cup J})(a)\geq|I'\restr_\gamma|-|I\restr_\gamma|$.
		Let $k\in I'\restr_\gamma$ be the index (chosen to $I$) for which $w_i'(a)$ is the greatest.
		Then (using the inequality $n+1\leq 2^n$, being true for all $n\in\Nat$)
		\begin{align*}
			\sum_{i\in I'\restr_\gamma}w_i'(a)
			&\leq \sum_{i\in I\restr_\gamma\setminus\set{k}}w_i'(a)+(|I'\restr_\gamma|-|I\restr_\gamma|+1)\cdot w_k'(a)\\
			&\leq (|I'\restr_\gamma|-|I\restr_\gamma|+1)\cdot\sum_{i\in I\restr_\gamma}w_i'(a)
			\leq 2^{|I'\restr_\gamma|-|I\restr_\gamma|}\cdot\sum_{i\in I\restr_\gamma}w_i'(a)\\
			&\leq 2^{\dupl((\setof{x:(A_i,\tau_i)}{i\in I_j})_{j\in\set{1}\cup J})(a)}\cdot\sum_{i\in I\restr_\gamma}w_i'(a),
		\end{align*}
		as required.
		This finishes the proof of \cref{eq:goal-c}.
		
		The second inequality to be proven, $u(a)\leq u_0'(a)$,
		follows trivially from Equalities \labelcref{eq:def-uw} and~\labelcref{eq:def-u0p}, and from \cref{eq:uj} applied for all $j\in\set{1}\cup J$.
		
		Finally, we assume that $u(a)=w(a)=0$, and we prove that $u'_0(a)+\sum_{i\in\set{0}\cup I}w_i'(a)=0$.
		By \cref{eq:def-uw} the assumption implies that
		$u_j(a)=w_j(a)=0$ for all $j\in\set{1}\cup J$ and
		$\dupl((\Gamma_j\restr_{\geq\ell})_{j\in\set{1}\cup J})(a)=\dupl((\Gamma_j\restr_{<\ell})_{j\in\set{1}\cup J})(a)=0$.
		Using now \cref{eq:impl} we obtain that $u_j'(a)=w_j'(a)=0$ for all $j\in\set{1}\cup J$ and $w_i'(a)=0$ for all $i\in I'$ (so, in particular, for all $i\in I\subseteq I'$).
		Recalling the definition of the sets $A_i$ we see that $a\not\in A_i$ for all $i\in I'$,
		implying in particular that $\dupl((\setof{x:(A_i,\tau_i)}{i\in I_j})_{j\in\set{1}\cup J})(a)=0$.
		Using \cref{eq:def-u0p,eq:def-w0p} we obtain that $u_0'(a)=w_0'(a)=0$, giving us the thesis.
	\end{proof}

	We can now state and prove the following \lcnamecref{lem:compl-1red-multi}, being a multi-letter counterpart of \cref{lem:compl-1red}:

	\begin{lemma}\label{lem:compl-1red-multi}
		Suppose that we can derive $\Gamma\vdash K[L/x]:(u\oplus_\ell w,\tau)$, where $L$ is closed, has order $\ell$, and does not use any variables of order at least $\ell$.
		Then there exist $u',w'$ such that we can derive $\Gamma\vdash (\lambda x.K)\,L:(u'\oplus_\ell w',\tau)$ and for all $a\in\setP$ we have that
		\begin{align*}
			&2^{u(a)}\cdot w(a)\leq 2^{u'(a)}\cdot w'(a),\\
			&u(a)\leq u'(a),&&\mbox{and}\\
			&u(a)+w(a)=0\Rightarrow u'(a)+w'(a)=0.
		\end{align*}
	\end{lemma}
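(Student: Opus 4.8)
The plan is to derive the target judgment by running the given derivation of $\Gamma\vdash K[L/x]:(u\oplus_\ell w,\tau)$ through \cref{lem:compl-sub} and then reassembling the pieces with one $(\lambda)$ step and one $(@)$ step. First I would apply \cref{lem:compl-sub} with $M:=K$ and $N:=L$; its hypotheses are met, since by assumption $L$ is closed, of order $\ell$, and uses no variable of order at least $\ell$. This produces a set of indices $I$ with $0\notin I$, functions $u_0'$ and $w_i'$ for $i\in\set{0}\cup I$, and types $\tau_i$ for $i\in I$ such that, writing $A_i=\setof{a\in\setP}{w_i'(a)>0}$, no pair occurs as $(A_i,\tau_i)$ for more than $s$ indices, one can derive $\Gamma\cup\setof{x:(A_i,\tau_i)}{i\in I}\vdash K:(u_0'\oplus_\ell w_0',\tau)$ and $\emptyset\vdash L:(\zero\oplus_\ell w_i',\tau_i)$ for every $i\in I$, and the inequalities $2^{u(a)}\cdot w(a)\leq 2^{u_0'(a)}\cdot\sum_{i\in\set{0}\cup I}w_i'(a)$ and $u(a)\leq u_0'(a)$, together with the implication $u(a)+w(a)=0\Rightarrow u_0'(a)+\sum_{i\in\set{0}\cup I}w_i'(a)=0$, hold for every $a\in\setP$.

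Next I would build the derivation of $(\lambda x.K)\,L$. Since $x$ is not free in $K[L/x]$, the environment $\Gamma$ carries no binding for $x$, so the $(\lambda)$ rule applies to the derivation concerning $K$ and gives $\Gamma\vdash\lambda x.K:(u_0'\oplus_\ell w_0',\bigwedge_{i\in I}(A_i,\tau_i)\arr\tau)$. I would then apply the $(@)$ rule with this as the left premise and the $s$-bounded family of derivations $\emptyset\vdash L:(\zero\oplus_\ell w_i',\tau_i)$ as the remaining premises: the multiplicity restriction of $(@)$ holds because no pair occurs as $(A_i,\tau_i)$ more than $s$ times, and the side condition $A_i=\setof{a\in\setP}{\zero(a)+w_i'(a)>0\lor\emptyset\restr_a\neq\emptyset}$ is exactly the definition of $A_i$. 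This yields $\Gamma\vdash(\lambda x.K)\,L:(u'\oplus_\ell w',\tau)$ for the values $u',w'$ prescribed by the rule.

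Finally I would identify $u'$ and $w'$ and verify the three conditions. In the family of environments feeding the $(@)$ rule, only $\Gamma$ (attached to the left premise) is nonempty, since $L$ is closed; hence both duplication factors in the $(@)$ rule vanish and $u'=u_0'$, $w'=\sum_{i\in\set{0}\cup I}w_i'$. Substituting these equalities into the three facts returned by \cref{lem:compl-sub} gives precisely $2^{u(a)}\cdot w(a)\leq 2^{u'(a)}\cdot w'(a)$, $u(a)\leq u'(a)$, and $u(a)+w(a)=0\Rightarrow u'(a)+w'(a)=0$ for all $a\in\setP$, completing the argument. I do not expect a genuine obstacle in this lemma: all the combinatorics---especially the per-letter choice of one surviving subderivation of $L$ for each type pair, which is where the exponential factor $2^{u}$ originates---has already been absorbed into \cref{lem:compl-sub}. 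The only points requiring care here are checking that the $A_i$ delivered by \cref{lem:compl-sub} satisfy the side condition of the $(@)$ rule, and noticing that closedness of $L$ makes the two duplication factors of that rule disappear, so that $(u',w')=(u_0',\sum_i w_i')$.
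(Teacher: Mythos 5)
Your proposal is correct and follows exactly the paper's own route: apply \cref{lem:compl-sub} to $K$ and $L$, reassemble with one $(\lambda)$ step and one $(@)$ step, and read off $u'=u_0'$, $w'=\sum_{i\in\set{0}\cup I}w_i'$ since closedness of $L$ kills the duplication factors. The verification of the $(@)$ side condition and the observation that all the real work lives in \cref{lem:compl-sub} match the paper's argument.
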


	\begin{proof}
		We start by applying \cref{lem:compl-sub} to the type judgment $\Gamma\vdash K[L/x]:(u\oplus_\ell w,\tau)$.
		We obtain $I$, $u_0'$, $w_i'$ for $i\in\set{0}\cup I$, and $\tau_i$ for $i\in I$
		such that, taking $A_i=\setof{a\in\setP}{w_i'(a)>0}$ for $i\in I$,
		any pair occurs as $(A_i,\tau_i)$ for at most $s$ indices $i\in I$, and we can derive
		\begin{align*}
			&\Gamma\cup\setof{x:(A_i,\tau_i)}{i\in I}\vdash K:(u_0'\oplus_\ell w_0',\tau)&&\mbox{and}\\
			&\emptyset\vdash L:(\zero\oplus_\ell w_i',\tau_i)&&\mbox{for all $i\in I$,}
		\end{align*}
		and for all $a\in\setP$ we have that
		\begin{align*}
			&2^{u(a)}\cdot w(a)\leq 2^{u'_0(a)}\cdot\sum_{i\in\set{0}\cup I}w_i'(a),\\
			&u(a)\leq u_0'(a),&&\mbox{and}\\
			&u(a)+w(a)=0\Rightarrow u'_0(a)+\sum_{i\in\set{0}\cup I}w_i'(a)=0.
		\end{align*}
		To the obtained type judgments we apply the $(\lambda)$ and $(@)$ rules, proving the thesis:
		\begin{mathpar}
			\hfill\inferrule*{
				\inferrule*{
					\Gamma\cup\setof{x:(A_i,\tau_i)}{i\in I}\vdash K:(u_0'\oplus_\ell w_0',\tau)
				}{
					\Gamma\vdash\lambda x.K:(u_0'\oplus_\ell w_0',\textstyle\bigwedge_{i\in I}(A_i,\tau_i)\arr\tau)
				}
			\and
				\emptyset\vdash L:(\zero\oplus w_i',\tau_i)\mbox{ for each }i\in I
			}{
				\Gamma\vdash(\lambda x.K)\, L:(u_0'\oplus_\ell(\textstyle\sum_{i\in\set{0}\cup I}w_i'),\tau)
			}\hfill\qedhere
		\end{mathpar}
	\end{proof}

	Next, we prove the same not only for the redex itself, but for any lambda-term containing such a redex:

	\begin{lemma}\label{lem:compl-1red-multi-term}
		Let $M,N$ be lambda-terms such that $N$ is obtained from $M$ by reducing a subterm of the form $(\lambda x.K)\,L$ (i.e., replacing one its occurrence with $K[L/x]$),
		where $L$ is closed, has order $\ell$, and does not use any variables of order at least $\ell$.
		If we can derive $\Gamma\vdash N:(u\oplus_\ell w,\tau)$, then there exist $u',w'$ such that we can derive $\Gamma\vdash M:(u'\oplus_\ell w',\tau)$ and for all $a\in\setP$ we have that
		\begin{align*}
			&2^{u(a)}\cdot w(a)\leq 2^{u'(a)}\cdot w'(a),\\
			&u(a)\leq u'(a),&&\mbox{and}\\
			&u(a)+w(a)=0\Rightarrow u'(a)+w'(a)=0.
		\end{align*}
	\end{lemma}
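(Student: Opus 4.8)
The plan is to prove this by structural induction on $M$, with a case analysis on where inside $M$ the reduced redex occurs; the statement is essentially a context-closure of \cref{lem:compl-1red-multi}, so the base case is exactly that lemma and the inductive steps only have to show that the three value conditions survive descent through one typing rule. Concretely, the base case is when the reduced occurrence is $M$ itself, i.e.\ $M=(\lambda x.K)\,L$ and $N=K[L/x]$, which is \cref{lem:compl-1red-multi}. Otherwise the reduced occurrence is a proper (hence structurally smaller) subterm of $M$, so $M$ is either of the form $\lambda y.M_1$ with the redex inside $M_1$, or of the form $M_1\,M_2$ with the redex inside $M_1$ or inside $M_2$; accordingly $N$ equals $\lambda y.N_1$, $N_1\,M_2$, or $M_1\,N_2$, where the primed subterm is the reduct of the unprimed one. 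In each case I would look at the last rule of the given derivation $\Gamma\vdash N:(u\oplus_\ell w,\tau)$ — forced to be $(\lambda)$ or $(@)$ by the shape — apply the induction hypothesis to the subderivation concerning the subterm that changed, and reassemble a derivation for $M$ using the same last rule.

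For the $(\lambda)$ case this is immediate: the rule forces $\tau=\bigwedge_{j\in J}(B_j,\sigma_j)\arr\sigma$ and a premise $\Gamma\cup\{y:(B_j,\sigma_j)\}_{j\in J}\vdash N_1:(u\oplus_\ell w,\sigma)$; the induction hypothesis gives $\Gamma\cup\{y:(B_j,\sigma_j)\}_{j\in J}\vdash M_1:(u'\oplus_\ell w',\sigma)$ with the desired conditions, and reapplying $(\lambda)$ gives $\Gamma\vdash M:(u'\oplus_\ell w',\tau)$ with the very same $u',w'$. For the $(@)$ case the rule exhibits $\Gamma_0\vdash N_1:(u_0\oplus_\ell w_0,\bigwedge_{i\in I}(A_i,\tau_i)\arr\tau)$ and $\Gamma_i\vdash N_2:(u_i\oplus_\ell w_i,\tau_i)$ for $i\in I$, one of $N_1,N_2$ being the changed subterm. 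If the redex sits inside $M_1$, I would apply the induction hypothesis to the left premise, replacing $(u_0,w_0)$ by some $(u_0',w_0')$; if it sits inside $M_2$, I would apply it to every right premise (vacuously, and with no change at all, when $I=\emptyset$, e.g.\ the $\top$-argument of $\letb$), replacing each $(u_i,w_i)$ by $(u_i',w_i')$ where now $A_i=\setof{a\in\setP}{w_i'(a)>0}$ — but here I must recheck the $(@)$ side condition $A_i=\setof{a\in\setP}{u_i'(a)+w_i'(a)>0\lor\Gamma_i\restr_a\neq\emptyset}$, which survives because $u_i(a)\le u_i'(a)$ and $2^{u_i(a)}\cdot w_i(a)\le 2^{u_i'(a)}\cdot w_i'(a)$ together with the zero-implication force $u_i(a)+w_i(a)=0\Leftrightarrow u_i'(a)+w_i'(a)=0$, so the two value pairs have the same support. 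Reapplying $(@)$ then yields a derivation for $M$.

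The only real computation is checking the three value conditions after reapplying $(@)$, and it is a direct repetition of the bookkeeping already done in the proof of \cref{lem:compl-sub}. Writing the value of the reassembled application as a common part (the duplication factors, and the contributions of the premises I left untouched) plus the contributions of the modified premises, I would factor out $2^D$ for the fixed exponent $D$ coming from the common part, reducing everything to showing, for each letter $a$, an inequality $2^{\sum_k u_k(a)}\big(C+\sum_k w_k(a)\big)\le 2^{\sum_k u_k'(a)}\big(C+\sum_k w_k'(a)\big)$ with $C\ge 0$; this follows by the usual telescoping argument — for each $k$ use $2^{u_k(a)}\cdot w_k(a)\le 2^{u_k'(a)}\cdot w_k'(a)$ and $u_j(a)\le u_j'(a)$ for $j\neq k$, sum over $k$, and add the $C$ term using $\sum_k u_k(a)\le\sum_k u_k'(a)$. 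Monotonicity of the first component and propagation of zero are then immediate from the per-premise facts. I expect the main (if mild) obstacle to be exactly this $(@)$ bookkeeping: tracking which premises were modified and confirming that the $(@)$ side conditions are preserved under the replacement; everything else is mechanical.
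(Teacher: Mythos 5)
Your proposal is correct and follows essentially the same route as the paper's proof: base case via \cref{lem:compl-1red-multi}, a trivial $(\lambda)$ case, and an $(@)$ case where the side condition is preserved because the zero-implication together with the two inequalities forces equal support of the old and new values, followed by the same telescoping computation on $2^{u}\cdot w$. The only cosmetic difference is that you induct on the structure of $M$ where the paper inducts on the size of the derivation, which changes nothing here.
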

	
	\begin{proof}
		We prove the lemma by induction on the size of a chosen derivation of $\Gamma\vdash M:(u\oplus_\ell w,\tau)$.
		There are three possibilities.
		First, it may be the case that whole $M$ consists of the redex (i.e., $M=(\lambda x.K)\,L$ and $N=K[L/x]$);
		then we simply use \cref{lem:compl-1red-multi}.
		Second, it may be the case that $M=\lambda y.M'$, and $N=\lambda y.N'$, where $M'$ and $N'$ again satisfy assumptions of the lemma.
		The derivation of $\Gamma\vdash N:(u\oplus_\ell w,\tau)$ ends with the $(\lambda)$ rule,
		above which we have a type judgment $\Gamma'\vdash N':(u\oplus_\ell w,\tau')$, to which we apply the induction hypothesis.
		We obtain a derivation of $\Gamma'\vdash M':(u'\oplus_\ell w',\tau')$, where $u',w'$ satisfy the required conditions;
		to this type judgment we apply again the $(\lambda)$ rule, deriving $\Gamma\vdash M:(u'\oplus_\ell w',\tau)$ as required.
		
		Finally, it may be the case that $M=P\,Q$ and $N=R\,S$, where either $P=R$ and the reduction changes $Q$ to $S$, or $Q=S$ and the reduction changes $P$ to $R$.
		The derivation of $\Gamma\vdash N:(u\oplus_\ell w,\tau)$ ends with the $(@)$ rule,
		above which, for some set of indices $I$, we have some type judgments $\Gamma_i\vdash N_i:(u_i\oplus_\ell w_i,\tau_i)$ for $i\in I$.
		We can write
		\begin{align}\label{eq:uw}
			u=D_\mathsf{u}+\sum_{i\in I}u_i&&\mbox{and}&&
			w=D_\mathsf{w}+\sum_{i\in I}w_i,
		\end{align}
		where $D_\mathsf{u},D_\mathsf{w}$ are appropriate duplication factors.
		The lambda-terms $N_i$ are either $R$ or $S$; let us define $M_i$ to be $P$ if $N_i=R$ and $S$ if $N_i=S$ (where $i\in I$).
		For each of the above type judgments we either use the induction hypothesis, or the fact that $M_i=N_i$ (in which case we simply take the same type judgment)
		to obtain a derivation of $\Gamma_i\vdash M_i:(u_i'\oplus_\ell w_i',\tau_i)$ for productivity values $u_i', w_i'$ such that, for all $a\in\setP$,
		\begin{align}
			&2^{u_i(a)}\cdot w_i(a)\leq 2^{u_i'(a)}\cdot w_i'(a),\label[inequality]{eq:11}\\
			&u_i(a)\leq u_i'(a),&&\mbox{and}\label[inequality]{eq:12}\\
			&u_i(a)+w_i(a)=0\Rightarrow u_i'(a)+w_i'(a)=0.\label[implication]{eq:13}
		\end{align}
		Note that \cref{eq:13} can be actually changed into an equivalence: the right-to-left implication follows from \cref{eq:11,eq:12}.
		Such an equivalence allows us to apply the $(@)$ rule again, and derive $\Gamma\vdash M:(u'\oplus_\ell w',\tau)$, where
		\begin{align}\label{eq:upwp}
			u'=D_\mathsf{u}+\sum_{i\in I}u_i'&&\mbox{and}&&w'=D_\mathsf{w}+\sum_{i\in I}w_i'
		\end{align}
		for the same duplication factors $D_\mathsf{u},D_\mathsf{w}$ as in the original derivation.
		In order to prove the required conditions productivity values, fix some $a\in\setP$.
		After expanding $u,w,u',w'$ using \cref{eq:uw,eq:upwp},
		the first inequality that we need to prove becomes
		\begin{align}\label[inequality]{eq:1-goal}
			2^{D_\mathsf{u}(a)+\sum_{i\in I}u_i(a)}\cdot\Big(D_\mathsf{w}(a)+\sum_{i\in I}w_i(a)\Big)
			\leq 2^{D_\mathsf{u}(a)+\sum_{i\in I}u_i'(a)}\cdot\Big(D_\mathsf{w}(a)+\sum_{i\in I}w_i'(a)\Big).
		\end{align}
		Using \cref{eq:12} for every $i\in I$ we obtain that
		\begin{align}\label[inequality]{eq:1-11}
			2^{D_\mathsf{u}(a)+\sum_{i\in I}u_i(a)}\cdot D_\mathsf{w}(a)
			\leq 2^{D_\mathsf{u}(a)+\sum_{i\in I}u_i'(a)}\cdot D_\mathsf{w}(a).
		\end{align}
		Moreover, using \cref{eq:11} for a single $i\in I$ and \cref{eq:12} for all $i'\in I$ other than $i$, we obtain that
		\begin{align}\label[inequality]{eq:1-12}
			2^{D_\mathsf{u}(a)+\sum_{i'\in I}u_i(a)}\cdot w_i(a)
			\leq 2^{D_\mathsf{u}(a)+\sum_{i'\in I}u_i'(a)}\cdot w_i'(a).
		\end{align}
		Summing \cref{eq:1-12} over all $i\in I$ and adding \cref{eq:1-11} we obtain \cref{eq:1-goal}, as required.
		The second inequality to be proven, namely $u(a)\leq u'(a)$, follows directly from \cref{eq:uw,eq:upwp,eq:12}.
		Likewise, the implication $u(a)+w(a)=0\Rightarrow u'(a)+w'(a)=0$ follows directly from \cref{eq:uw,eq:upwp,eq:13}.
	\end{proof}

\subsection{Ensuring homogeneity}\label{app:homogeneous}

	As already said, we conduct our proof of \cref{lem:completeness-fin} using homogeneous lambda-terms.
	In order to justify this, we now prove that homogeneity can be indeed ensured for safe lambda-terms:

	\begin{lemma}~\label{lem:completeness-hom}
		Every finite closed safe lambda-term $M$ of sort $\otyp$ and complexity $m$ can be converted
		into a finite closed safe homogeneous lambda-term $M'$ of sort $\otyp$ and complexity $m$
		such that $\BT(M')=\BT(M)$ and if we can derive $\emptyset\vdash M':(v,\r)$, then we can derive $\emptyset\vdash M:(v,\r)$.
	\end{lemma}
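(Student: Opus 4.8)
The plan is to let $M'$ be the homogenization of $M$ produced by the construction of \cite{homogeneous}, and then to argue that this construction is invisible to the type system. Recall the shape of that construction: it leaves the tree structure of $M$ intact but raises the orders of selected parameters so that, inside every $\lambda$-prefix, parameters occur in non-increasing order of their orders. Concretely, when a parameter $x^\alpha$ must be promoted, its sort is changed to $\beta_1\arr\dots\arr\beta_r\arr\alpha$ for suitable homogeneous $\beta_i$; every occurrence of $x$ in the body is replaced by $x\,P_1\,\dots\,P_r$ for dummy terms $P_i$ of the form $\lambda z_1\lamdots\lambda z_{p_i}.\upomega$; and, dually, every term $L$ supplied for $x$ is replaced by $\lambda z_1\lamdots\lambda z_r.L'$, where $z_1,\dots,z_r$ are fresh and unused and $L'$ is the (recursive) homogenization of $L$. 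Since an application of such a padded term to such dummy arguments beta-reduces to the corresponding unpadded term (no free variable of $L'$ being among the $z_j$), the padding abstractions and arguments vanish under normalization; hence $M'$ has the same beta-normal form as $M$, and — both being finite — $\BT(M')=\BT(M)$ by the remark in \cref{sec:prelim}. That $M'$ is again finite, closed, of sort $\otyp$ and safe is part of \cite{homogeneous}; that it still has complexity $m$ is a routine check: the sub-parameters inserted into a $\lambda$-prefix have orders bounded by orders of parameters already occurring there, hence at most $m-1$, so no subterm of $M'$ acquires order above $m$, while the order-$m$ subterm witnessing that $M$ has complexity $m$ maps to a subterm of $M'$ of order at least $m$.

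It remains to turn a derivation of $\emptyset\vdash M':(v,\r)$ into one of $\emptyset\vdash M:(v,\r)$. The key point is that the padding is typed trivially: a dummy abstraction $\lambda z.(-)$ with $z$ not free in its body can only be typed $\top\arr\sigma$, and the premise of the $(\lambda)$ rule then carries exactly the same type environment and productivity value; dually, in a dummy application $K\,P$ the head $K$ necessarily has a type of the form $\top\arr\sigma$, so the $(@)$ rule uses the empty index set there — no subderivation for $P$ is present, the associated duplication factor is $0$, and the productivity value is unchanged — in particular the terms $\lambda\vec z.\upomega$ never have to be typed at all (consistent with there being no rule for $\upomega$). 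I would make this precise by an induction on $M$, tracked against $M'$ through the recursive definition of the construction: for every subterm $N$ of $M$ with homogenization $N'$, any derivable $\Gamma'\vdash N':(v,\tau')$ restricts to a derivable $\Gamma\vdash N:(v,\tau)$, where $\Gamma$ and $\tau$ are obtained from $\Gamma'$ and $\tau'$ by deleting the $\top$'s sitting at the padding positions, and the productivity value $v$ is unchanged. The constant and variable cases (the latter: a bare variable whose sort was raised appears in $N'$ inside a padding application $x\,\vec P$) are immediate from this bookkeeping; at a $\lambda$-binder of $M$ one applies the $(\lambda)$ rule at the smaller sort; at an application $K\,L_1\,\dots\,L_k$ of $M$ one reads off the $(@)$ rule(s) used in $M'$ and restricts the premises — deleting the $\top$'s at padding positions changes neither the side conditions nor the duplication factors nor any per-argument value — and re-applies the $(@)$ rule on $M$'s side. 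Specializing to $N=M$, which is closed and of sort $\otyp$ (so $\Gamma=\emptyset$ and $\tau=\r$), gives $\emptyset\vdash M:(v,\r)$ from $\emptyset\vdash M':(v,\r)$; running the same induction in the converse direction would also yield the full equivalence claimed in \cref{sec:completeness}.

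The main obstacle is bookkeeping rather than mathematics. The construction of \cite{homogeneous} is defined by a somewhat intricate mutual recursion, and the delicate part is to set up, uniformly along that recursion, the correspondence between the sorts, types, type environments and free variables of subterms of $M$ and of $M'$ — in particular to track, through nested applications and through the promotion of a parameter (which affects every use site of that parameter at once), exactly which argument positions are ``padding'' and therefore carry $\top$. Once this correspondence is pinned down, the invariant ``padding is typed $\top$, contributes value $0$ and the empty environment'' propagates mechanically through every rule of the type system, so the induction goes through without surprises; verifying that safety and complexity are preserved requires re-reading the construction but follows the patterns already present in \cite{homogeneous,blum-ong-safe}. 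I expect essentially all of the written proof to consist of this bookkeeping.
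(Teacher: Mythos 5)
Your type-theoretic core is sound, but you are proving the lemma for a different construction than the one the paper (and \cite{homogeneous}) actually uses, and that construction is where the real work lies. The paper's $M'$ is obtained by first putting $M$ in $\eta$-long form (replacing each ``problematic'' proper subterm $N$ by $\floor{N}=\lambda x_1\lamdots\lambda x_k.(N\,x_1\,\dots\,x_k)$) and then \emph{sorting} the now-complete argument lists and binder blocks by decreasing order of sorts; no orders are raised and no dummy subterms are introduced. A derivation for $M$ is then recovered from one for $M'$ simply by permuting adjacent $(@)$ rules and adjacent $(\lambda)$ rules, which visibly preserves duplication factors and productivity values. Your padding construction --- promoting a parameter $x^\alpha$ to $\beta_1\arr\dots\arr\beta_r\arr\alpha$, rewriting head occurrences to $x\,P_1\,\dots\,P_r$ with $P_i=\lambda z_1\lamdots\lambda z_{p_i}.\upomega$, and wrapping supplied arguments in vacuous binders --- is a genuinely different transformation. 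The part of your argument that concerns the type system is correct and even elegant: since there is no rule for $\upomega$, the terms $\lambda\vec z.\upomega$ are untypable, so the $(@)$ rule at a padding position is forced to use the empty index set; dually a vacuous $\lambda z$ forces $\top$ on the left of the arrow; in both cases the environment, the duplication factor and the productivity value are untouched, so erasing the padding yields a derivation for $M$ with the same $v$.

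The gap is that the construction itself is neither defined nor covered by the citation. You write that safety of $M'$ ``is part of \cite{homogeneous}'', but that reference (as used here) establishes this for the reordering construction, not for yours, so existence, safety, homogeneity, complexity preservation and $\BT(M')=\BT(M)$ all still require proofs. Concretely, the sketch breaks down at occurrences of a promoted parameter that are \emph{not} in head position: if $x$ occurs as an argument, say in $F\,x$, you cannot rewrite it to $x\,P_1\,\dots\,P_r$ without destroying the sort expected by $F$, so the corresponding parameter of $F$ must itself be promoted; this cascades through every function into which the promoted value can flow, and showing that the cascade is consistent, terminates, and keeps the term safe and of complexity $m$ is exactly the hard part of any homogenization argument --- it is not bookkeeping that ``propagates mechanically''. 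The paper's route avoids the cascade entirely: after $\eta$-expansion every application is fully applied and every binder block is complete, so reordering is a purely local, sort-directed operation. Either switch to that construction (your erasure argument then collapses to a permutation of rules) or commit to defining the promotion recursion and proving its invariants from scratch.
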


	The conversion is performed by following ideas from Parys' work~\cite{homogeneous}.
	Namely, first all the subterms of $M$ are put in $\eta$-long form (which is formalized below as elimination of \emph{problematic proper subterms}),
	and then we reorder parameters and arguments to make the new term homogeneous.
	
	First, let us underline that in this subsection we work with finite lambda-terms.
	Second, in this subsection we operate on occurrences of subterms;
	we do not give a completely formal definition, but while talking about an occurrence of a subterm $N$ in a lambda-term $M$ we mean a particular location (address) where $N$ occurs in $M$.
	
	We say that an occurrence of a subterm $N$ in a lambda-term $M$ is \emph{improper} if
	\begin{itemize}
	\item	$N$ is an application (i.e., $N=K\,L$) and this occurrence of $N$ is on the operand position of a longer application $N\,L'$, or
	\item	$N$ starts with a lambda-binder (i.e., $N=\lambda x.K$) and this occurrence of $N$ is surrounded by another lambda-binder (i.e., we have $\lambda y.N$).
	\end{itemize}
	Otherwise, the occurrence of $N$ in $M$ is \emph{proper}.
	We say that a proper occurrence of $N$ in $M$ is \emph{problematic} if $N$ is not of sort $\otyp$ and either
	\begin{itemize}
	\item	$N$ is an application, or
	\item	this occurrence of $N$ is surrounded by a lambda-binder (i.e., we have $\lambda x.N$).
	\end{itemize}
	When no proper occurrence of any subterm $N$ in $M$ is problematic, we say that $M$ \emph{has no problematic subterms}.

	Having the above definitions, we can now state requirements for the first transformation, eliminating problematic proper subterms:

	\begin{lemma}~\label{lem:eta-long}
		Every finite closed safe lambda-term $M$ of sort $\otyp$ and complexity $m$ can be converted
		into a finite closed safe lambda-term $M'$ of sort $\otyp$ and complexity $m$ that has no problematic proper subterms
		and such that $\BT(M')=\BT(M)$ and if we can derive $\emptyset\vdash M':(v,\r)$, then we can derive $\emptyset\vdash M:(v,\r)$.
	\end{lemma}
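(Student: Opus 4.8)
The plan is to take $M'$ to be an $\eta$-long form of $M$: starting from $M$, repeatedly pick a problematic proper occurrence $N$ — which, not being of sort $\otyp$, has a sort of the form $\alpha_1\arr\dots\arr\alpha_k\arr\otyp$ with $k\geq 1$ — and replace it by $\lambda y_1\lamdots\lambda y_k.N\,y_1\,\dots\,y_k$ for fresh $y_1,\dots,y_k$ of sorts $\alpha_1,\dots,\alpha_k$, following Parys~\cite{homogeneous}. First I would argue that each such step strictly decreases the number of problematic proper occurrences, so the process terminates and the resulting $M'$ has no problematic proper subterms. The replaced occurrence of $N$ leaves the count: if $N$ was an operand it is now a lambda-abstraction in operand position (not problematic), and $N$ itself now sits in operator position of $N\,y_1$, hence improperly; if $N$ was the body of a lambda-binder, the new $\lambda y_1\lamdots\lambda y_k.N\,y_1\,\dots\,y_k$ is lambda-headed under a lambda-binder, hence improper. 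No new problematic proper occurrence appears: the spine prefixes $N\,y_1\,\dots\,y_j$ are in operator position (improper), the inner abstractions $\lambda y_j\lamdots\lambda y_k.N\,y_1\,\dots\,y_k$ for $j\geq 2$ are lambda-headed under a lambda-binder (improper), the full application $N\,y_1\,\dots\,y_k$ has sort $\otyp$, and each $y_i$ is a variable in operand position; and every untouched occurrence keeps its status, since whether an occurrence is problematic and proper depends only on its sort, its outermost shape, and whether it is applied / sits directly under a lambda-binder. Finiteness, closedness and having sort $\otyp$ are evidently preserved.

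For the remaining structural invariants I would proceed as in~\cite{homogeneous}. For safety, the point is that a problematic proper $N$ always lies, as the body, at the bottom of a uniquely determined maximal block of lambda-binders $\lambda x_1\lamdots\lambda x_n.N$ with $n\geq 0$ (when $n=0$ this block is $N$ itself), and this block is either a component of a maximal application or the whole term; as the whole term has sort $\otyp$ whereas the block does not (it contains $N$, of arrow sort), the block is a component, hence superficially safe. The $\eta$-expansion replaces this block by $\lambda x_1\lamdots\lambda x_n.\lambda y_1\lamdots\lambda y_k.(N\,y_1\,\dots\,y_k)$, which has exactly the same sort (arrows reassociate) and exactly the same free variables, hence is again superficially safe; the only new maximal application, $N\,y_1\,\dots\,y_k$, has $N$'s own components together with the fresh variables $y_i$ as its components, all superficially safe, and every other application-component is untouched. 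Complexity stays equal to $m$: no subterm is removed, and every newly created subterm is either headed by a constant (irrelevant to complexity) or has order at most $\ord(N)\leq m$. Finally $\BT(M')=\BT(M)$ because $M_{i+1}$ $\eta$-reduces to $M_i$ at each step and $\eta$-reduction preserves the B\"ohm tree of a closed term of sort $\otyp$ (this is standard; it is also used in~\cite{homogeneous}).

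The heart of the lemma, and the step I expect to require the most care, is the claim about type derivations: if $\emptyset\vdash M':(v,\r)$ is derivable, then so is $\emptyset\vdash M:(v,\r)$ with the \emph{same} value $v$. It is enough to prove this for one $\eta$-expansion step and then compose along the finite sequence of steps leading from $M$ to $M'$. So suppose $M_2$ arises from $M_1$ by replacing one occurrence of $N$ with $\lambda y_1\lamdots\lambda y_k.N\,y_1\,\dots\,y_k$, and fix a derivation of $\emptyset\vdash M_2:(v,\r)$. I would show that any subderivation concluding $\Gamma\vdash\lambda y_1\lamdots\lambda y_k.N\,y_1\,\dots\,y_k:(p,\bigwedge T_1\arr\dots\arr\bigwedge T_k\arr\tau)$ can be replaced by one concluding $\Gamma\vdash N:(p,\bigwedge T_1\arr\dots\arr\bigwedge T_k\arr\tau)$, with the same environment, type, and productivity value. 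Indeed, peeling off the bottom $k$ instances of the $(\lambda)$ rule and then the next $k$ instances of the $(@)$ rule exposes a subderivation concluding $\Gamma_0\vdash N:(p_0,\bigwedge U_1\arr\dots\arr\bigwedge U_k\arr\tau)$; since the $y_i$ are fresh they occur in no type environment concerning $N$, so matching environments forces $\Gamma_0=\Gamma$ and $U_i=T_i$, and — crucially — the $k$ duplication factors contributed by these $(@)$ rules all vanish, because the leaves for the $y_i$ carry pairwise disjoint singleton environments, themselves disjoint from $\Gamma_0$; hence $p_0=p$. Carrying out this replacement at every occurrence of $\lambda y_1\lamdots\lambda y_k.N\,y_1\,\dots\,y_k$ in every subderivation (and leaving the derivation verbatim wherever that occurrence is never assigned a type, e.g.\ as an argument in a $\top$ position) turns the given derivation of $\emptyset\vdash M_2:(v,\r)$ into a derivation of $\emptyset\vdash M_1:(v,\r)$. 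Composing over all steps yields $\emptyset\vdash M:(v,\r)$, as required.
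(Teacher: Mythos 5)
Your proposal is correct and follows essentially the same route as the paper's proof: the same $\eta$-expansion $N\mapsto\lambda y_1\lamdots\lambda y_k.(N\,y_1\,\dots\,y_k)$ applied repeatedly to problematic proper occurrences, the same termination and safety checks, the same appeal to $\eta$-convertibility of closed terms of sort $\otyp$ for $\BT(M')=\BT(M)$, and the same key observation that a derivation for the expansion peels down (through $k$ instances of $(\lambda)$ and $k$ instances of $(@)$, with vanishing duplication factors thanks to freshness of the $y_i$) to a subderivation for $N$ with identical environment, type, and productivity value. Your organization of the safety check via the maximal lambda-binder block containing $N$ is a mild repackaging of the paper's case analysis ``near the border of $N$'', not a genuinely different argument.
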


	\begin{proof}
		Let $N$ be a lambda-term of sort $\alpha_1\arr\dots\arr\alpha_k\arr\otyp$, where $k\geq 1$.
		We define $\floor{N}$ to be
		\begin{align*}
			\lambda x_1\lamdots\lambda x_k.(N\,x_1\,\dots\,x_k),
		\end{align*}
		where each $x_i$ is a fresh variable of sort $\alpha_i$.
		Note that $N$ and $\floor{N}$ have the same sort, and that every derivation of $\Gamma\vdash\floor{N}:(u,\tau)$ contains a subderivation for $\Gamma\vdash N:(u,\tau)$.
		Indeed, say that $\tau=\bigwedge_{j\in J_1}(A_{1j},\tau_{1j})\arr\dots\arr\bigwedge_{j\in J_k}(A_{kj},\tau_{kj})\arr\r$.
		The derivation of $\Gamma\vdash\floor{N}:(u,\tau)$ has to end with $k$ applications of the $(\lambda)$ rule, above which we have
		\begin{align*}
			\Gamma\cup\textstyle\bigcup_{i=1}^k \setof{x_i: (A_{ij},\tau_{ij})}{j\in J_i }\vdash N\,x_1\,\dots\,x_k:(u,\r).
		\end{align*}
		Above this type judgment we have, in turn, $k$ applications of the $(@)$ rule
		with type judgments $\set{x_i:(A_{ij},\tau_{ij})}\vdash x_i:(\zero,\tau_{ij})$ for the arguments $x_1,\dots,x_k$ and with $\Gamma\vdash N:(u,\tau)$.
		Since the variables are all fresh (they do not occur in $N$, so they cannot be present in $\Gamma$),
		the duplication factor is zero at each step and the value is not changed.
		
		We repair $M$ by repeating the following process:
		take some problematic proper occurrence of a subterm $N$ and replace it by $\floor{N}$.

		First, note that such a replacement preserves safety:
		\begin{itemize}
		\item	The lambda-terms $N$ and $\floor{N}$ have the same free variables,
			so every superficially safe lambda-term remains superficially safe after replacing the considered occurrence of $N$ by $\floor{N}$.
			It is thus enough to check safety ``near the border of $N$''.
		\item	Possibly the considered occurrence of $N$ is on the operand position in some application $N\,L$.
			In such a case, in the new lambda-term we have $\floor{N}\,L$, and $\floor{N}$ is not an application, so it is required that $\floor{N}$ is superficially safe.
			But since the considered occurrence is proper, $N$ cannot be an application.
			This means that $N$ is superficially safe, and so is $\floor{N}$.
		\item	Inside $\floor{N}$ we have a new application $N\,x_1\,\dots\,x_k$.
			The subterms $x_1,\dots,x_k$ are superficially safe by definition.
			If $N$ is not an application, it is also required that $N$ is superficially safe.
			By definition of a problematic occurrence, if $N$ is not an application, then it is surrounded by a lambda-binder;
			because the occurrence is proper, $N$ cannot start with a lambda-binder;
			it is either a variable or a constant, hence it is superficially safe by definition.
		\end{itemize}
		
		Second, note that, after the replacement, the number of problematic proper occurrences of subterms indeed decreases:
		\begin{itemize}
		\item	The occurrence of $N$ inside $\floor{N}$ is on the operand position of an application $N\,x_1$ (in particular, it is not surrounded by a lambda-binder);
			if this occurrence is proper, then $N$ is not an application, so the occurrence is not problematic.
		\item	The subterm $N\,x_1\,\dots\,x_k$ is of sort $\otyp$, so its occurrence (being proper) is not problematic.
		\item	The whole $\floor{N}$ starts with a lambda-binder (in particular $\floor{N}$ is not an application);
			if the occurrence of $\floor{N}$ is proper, it is not surrounded by a lambda-binder, so it is not problematic.
		\end{itemize}
		
		Moreover, clearly the resulting lambda-term remains closed, of sort $\otyp$, and of complexity $m$.

		At the end we obtain a lambda-term $M'$ without any problematic proper subterms
		and based on a derivation of $\emptyset\vdash M':(v,\r)$ we can get a derivation of $\emptyset\vdash M:(v,\r)$.
		Moreover, $M'$ is eta-convertible to $M$, and thus their beta-normal forms are also eta-convertible.
		But since $M$ and $M'$ are closed lambda-terms of sort $\otyp$, their beta-normal forms contain no variables, meaning that these beta-normal forms are equal,
		that is, $\BT(M')=\BT(M)$.
	\end{proof}
	
	\begin{proof}[Proof of \cref{lem:completeness-hom}]
		By \cref{lem:eta-long} we can assume that our lambda-term has no problematic proper subterms.
		Now we can go to the second step: we rearrange the arguments of each proper subterm in such a way that it becomes homogeneous.
		To this end, we define a function $\sort$ on sorts, types, and lambda-terms.

		Consider a sort $\alpha=\alpha_1\arr\dots\arr\alpha_k\arr\otyp$.
		Proceeding by induction on the structure of $\alpha$ we define $\sort(\alpha)$ to be $\sort(\alpha_{s_1})\arr\dots\arr\sort(\alpha_{s_k})\arr\otyp$,
		where $s_1,\dots,s_k$ is the (unique) permutation of $1,\dots,k$ such that for all $i,j\in\set{1,\dots,k}$ with $i<j$
		we have $\ord(\alpha_{s_i})\geq\ord(\alpha_{s_j})$ and if $\ord(\alpha_{s_i})=\ord(\alpha_{s_j})$ then $s_i<s_j$.
		In other words, $\sort(\alpha)$ is obtained by sorting $\alpha_1,\dots,\alpha_k$ by order (from high to low, preserving the ordering of sorts having the same order)
		and then modifying these sub-sorts recursively.
		Note that $\ord(\sort(\alpha))=\ord(\alpha)$.
		
		Likewise for a type $\tau=(\bigwedge T_1\arr\dots\arr\bigwedge T_k\arr\r)\in\Tt^\alpha_s$ we define $\sort(\tau)$ to be
		\begin{align*}
			\left(\bigwedge\setof{\sort(\sigma)}{\sigma\in T_{s_1}}\arr\dots\arr\bigwedge\setof{\sort(\sigma)}{\sigma\in T_{s_k}}\arr\r\right)\in\Tt^{\sort(\alpha)}_s,
		\end{align*}
		where $s_1,\dots,s_k$ is as above.
		
		Based on that, we modify lambda-terms: for a proper subterm $N$ of $M$ we define $\sort(N)$ to be
		\begin{itemize}
		\item	$N$ if $N$ is a variable or a constant,
		\item	$\sort(K)\,\sort(L_{s_1})\,\dots\,\sort(L_{s_k})$ if $N=K\,L_1\,\dots\,L_k$ for $K$ not being an application and $k\geq 1$,
			where $s_1,\dots,s_k$ are as above assuming that $K$ is of sort $\alpha$, and
		\item	$\lambda x_{s_1}\lamdots\lambda x_{s_k}.\sort(K)$ if $N=\lambda x_1\lamdots\lambda x_k.K$ for $K$ not starting with a lambda-binder and $k\geq 1$,
			where $s_1,\dots,s_k$ are as above assuming that $N$ is of sort $\alpha$.
		\end{itemize}
		For $N$ of sort $\alpha$ the resulting lambda-term $\sort(N)$ is of sort $\sort(\alpha)$,
		and has the same free variables as $N$.
		
		In the above definition, it is important that there are no problematic proper subterms in $M$.
		Thanks to that, in the case of an application, the lambda-term $N$ is necessarily of sort $\otyp$,
		meaning that we provide to $K$ all $k$ arguments required according to its sort,
		and we can indeed reorder these arguments.
		Likewise, in the case of a lambda-abstraction, the lambda-term $K$ is necessarily of sort $\otyp$,
		meaning that we surround $K$ by lambda-binders for all $k$ parameters of $N$,
		and we can indeed reorder these lambda-binders.

		At the end, we take $M'=\sort(M)$.
		It is immediate from the construction that $M'$ is finite, closed, safe, of sort $\otyp$, of complexity $m$, and moreover homogeneous.

		It is also true that $\BT(M')=\BT(M)$.
		Intuitively, this is the case because whenever we reorder some lambda-binders in $M$, we also reorder corresponding arguments,
		so the correspondence between parameter names and arguments is preserved.
		We omit here a detailed formal proof, which is contained in the Parys' work~\cite{homogeneous} for a similar setting.
		
		Finally, we need to observe that if we can derive $\setof{x:(A,\sort(\tau))}{(x:(A,\tau))\in\Gamma}\vdash N:(v,\sort(\tau))$,
		then we can also derive $\Gamma\vdash N:(v,\tau)$.
		A new derivation is obtained by appropriately modifying the original one.
		Namely, whenever we have a sequence of $(@)$ rules for a compound application, we simply reorder them;
		notice that the total duplication factor added by these rules remains unchanged.
		Likewise, we reorder every sequence of $(\lambda)$ rules for a compound lambda-abstraction.
	\end{proof}

\subsection{Proof of Lemma~\ref{lem:completeness-fin}}

	We first have a \lcnamecref{lem:completeness-fin-stop} describing lambda-terms in the beta-normal form:

	\begin{lemma}\label{lem:completeness-fin-stop}
		If $M$ is a closed lambda-term of sort $\otyp$ and complexity $0$,
		and $v\colon\setP\to\Nat$ is a function such that in $\BT(M)$ there is a finite branch having, for every $a\in\setP$, exactly $v(a)$ occurrences of $a$,
		then we can derive $\emptyset\vdash M:(v,\r)$.
	\end{lemma}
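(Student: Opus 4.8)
The plan is to prove this by induction on the length of the chosen finite branch of $\BT(M)$. Recall first that, since $M$ is closed, of sort $\otyp$, and of complexity $0$, it is a tree: thus $\BT(M)=M$, and $M$ is of the form $a\,M_1\,\dots\,M_r$ for a constant $a$ of arity $r$ with $M_1,\dots,M_r$ again trees. Let $a_1,\dots,a_k$ be the selected finite branch; then $a_1=a$, and $a\neq\upomega$ since branches never start with $\upomega$, and the branch contains exactly $v(b)$ occurrences of each $b\in\setP$. I would then split into cases according to the head constant $a$.

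If $a=\upomega$ the hypothesis is violated (there is no finite branch), so this case is vacuous. If $a=\letc$ then $r=0$ and the branch is the single constant $\letc$, so it contains no important constants, $v=\zero$, and $\emptyset\vdash\letc:(\zero,\r)$ is an axiom. If $a=\leta_j$ then $r=1$ and $a_2,\dots,a_k$ is a branch of $M_1$; since the root already contributes one occurrence of $\leta_j$, this sub-branch contains exactly $(v-\chi_j)(b)$ occurrences of each $b\in\setP$, and $v(\leta_j)\geq 1$, so that $v-\chi_j$ is a genuine function $\setP\to\Nat$. Applying the induction hypothesis to $M_1$ yields $\emptyset\vdash M_1:(v-\chi_j,\r)$, and combining this with the axiom $\emptyset\vdash\leta_j:(\chi_j,(A,\r)\arr\r)$, where $A=\setof{b\in\setP}{(v-\chi_j)(b)>0}$, via the $(@)$ rule (all type environments are empty, so the duplication factor is $\zero$) gives $\emptyset\vdash\leta_j\,M_1:(\chi_j+(v-\chi_j),\r)=(v,\r)$. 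Finally, if $a=\letb$ then $r=2$ and $a_2,\dots,a_k$ is a branch of some $M_i$ with $i\in\set{1,2}$, carrying exactly $v(b)$ occurrences of each $b\in\setP$ (note $\letb\notin\setP$). The induction hypothesis gives $\emptyset\vdash M_i:(v,\r)$, and I would finish with the matching $\letb$ axiom, namely $\emptyset\vdash\letb:(\zero,(A,\r)\arr\top\arr\r)$ if $i=1$ or $\emptyset\vdash\letb:(\zero,\top\arr(A,\r)\arr\r)$ if $i=2$, taking $A=\setof{b\in\setP}{v(b)>0}$, followed by two applications of the $(@)$ rule: one feeding $M_i$ with its type pair, and one feeding the other subterm against $\top$, which requires no subderivation.

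The argument is routine; the only thing to be slightly careful about is to pick the productivity set $A$ in each constant axiom so that it matches the side condition $A_i=\setof{b\in\setP}{v_i(b)>0\lor\Gamma_i\restr_b\neq\emptyset}$ of the $(@)$ rule---since all type environments involved here are empty, $A$ is just the support of the child's productivity function---and to remember that in the $\letb$ case the subterm lying off the selected branch is typed with $\top$ and hence contributes no subderivation. I do not expect any genuine obstacle; the single-letter statement follows by specialising to $s=1$ and reading $\chi_j$ as the number $1$.
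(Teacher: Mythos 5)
Your proof is correct and follows essentially the same route as the paper's: induction on the length of the selected branch, case analysis on the head constant of the tree $M$, with the $\leta_j$ case splitting off $\chi_j$, the $\letb$ case typing only the subterm on the branch (the other against $\top$), and the $\upomega$ case being vacuous. The care you take in choosing the productivity set $A$ to satisfy the side condition of the $(@)$ rule matches what the paper does implicitly.
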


	\begin{proof}
		We prove the \lcnamecref{lem:completeness-fin-stop} by induction on the length of the selected branch in $\BT(M)$.
		Fix a closed lambda-term $M$ of sort $\otyp$ and complexity $0$, and a finite branch in $\BT(M)$;
		let $v\colon\setP\to\Nat$ be such that, for every $a\in\setP$, on the fixed branch there are exactly $v(a)$ occurrences of $a$.
		Note that $M$ (being closed, of sort $\otyp$, and of complexity $0$) is necessarily of the form either $\leta_j\,M'$, or $\letb\,M_1\,M_2$, or $\letc$, or $\upomega$:
		\begin{itemize}
		\item	If $M=\leta_j\,M'$ (and $\BT(M)=\leta_j\,\BT(M')$), the branch visits the $\leta_j$-labeled root and continues in $\BT(M')$.
			By the induction hypothesis we can derive $\emptyset\vdash M':(v-\chi_j,\r)$.
			Then we derive $\emptyset\vdash\leta_j:(\chi_j,(A,\r)\arr\r)$ for $A=\setof{a\in\setP}{(v-\chi_j)(a)>0}$ using the rule for $\leta_j$,
			and we conclude with $\emptyset\vdash M:(v,\r)$ after applying the $(@)$ rule.
		\item	If $M=\letb\,M_1\,M_2$ (and $\BT(M)=\letb\,\BT(M_1)\,\BT(M_2)$), the branch visits the $\letb$-labeled root and continues in $\BT(M_i)$ for some $i\in\set{1,2}$.
			By the induction hypothesis we can derive $\emptyset\vdash M_i:(v,\r)$.
			Using an appropriate rule for $\letb$ (involving either the first, or the second argument---depending on $i$) and the $(@)$ rule twice, we again obtain
			$\emptyset\vdash M:(v,\r)$.
			Note that we do not need to derive any type judgment concerning the second subterm, $M_{3-i}$.
		\item	For $M=\letc$, the branch ends in the $\letc$-labeled root of $\BT(M)=\letc$, so $v=\zero$.
			We derive $\emptyset\vdash M:(v,\r)$ using the rule for $\letc$.
		\item	It is impossible that $M=\upomega$, since a branch, by definition, cannot visit an $\upomega$-labeled node.
		\qedhere\end{itemize}
	\end{proof}
	
	In the next \lcnamecref{lem:find-good} we need an additional definition (occurring also in prior work~\cite{schemes-lY,blum-ong-safe}):
	a lambda-term $M$ is \emph{almost safe} if for its every subterm of the form $K\,L_1\,\dots\,L_k$, where $K$ is not an application and $k\geq 1$,
	all subterms $K,L_1,\dots,L_k$ are superficially safe
	(the difference, when compared to the definition of a safe lambda-term, is that we do not require the whole $M$ to be superficially safe).
	
	In \cref{lem:find-good} we show how to find a redex for which one can use \cref{lem:compl-1red-multi-term}:
	
	\begin{lemma}\label{lem:find-good}
		Let $\ell\geq 0$, and let $M$ be a finite, homogeneous, almost safe lambda-term of complexity $\ell+1$ and order at most $\ell$,
		with all free variables of order at most $\ell-1$.
		In such a situation, $M$ has a subterm of the form $(\lambda x.K)\,L$, where $L$ is closed, has order $\ell$, and does not use any variables of order at least $\ell$.
	\end{lemma}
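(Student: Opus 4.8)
The plan is to first locate \emph{some} redex whose operator has order $\ell+1$, and then refine the choice, using homogeneity and almost safety, so that its argument is closed and contains no high-order variables. The starting observation is a bookkeeping fact about orders: since $M$ has complexity $\ell+1$, every subterm of $M$ that is not of the form $a\,M_1\,\dots\,M_k$ has order at most $\ell+1$, and every subterm of the form $a\,M_1\,\dots\,M_k$ has order at most $1\le\ell+1$; hence \emph{every} subterm of $M$ has order at most $\ell+1$. Consequently every variable occurring in $M$ has order at most $\ell$: a free variable by hypothesis (order $\le\ell-1$), a bound variable $x$ because the binder $\lambda x.(\cdots)$ is a subterm of order at least $1+\ord(x)$.

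Next I would produce a \emph{candidate}, meaning a subterm of $M$ of the form $(\lambda x.K)\,L$ with $\ord(\lambda x.K)=\ell+1$. Because the complexity of $M$ is $\ell+1>\ell\ge\ord(M)$, there is a proper subterm of $M$ of order $\ell+1$; take one, $N$, that is outermost among subterms of order $\ell+1$. By the previous paragraph $N$ is neither a variable nor constant-headed, so $N$ is either a $\lambda$-abstraction or an application $H\,L_1\,\dots\,L_m$ with $m\ge1$ whose head $H$ is a $\lambda$-abstraction (a variable head would force $\ord(N)\le\ord(H)\le\ell$). In the application case $\ord(H)=\ell+1$ (it is at least $\ord(N)$ and, being a subterm of $M$, at most $\ell+1$), so $H\,L_1$ is a candidate. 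In the $\lambda$-abstraction case $N=\lambda x.K$ is not under a $\lambda$-binder (the binder would be a larger subterm of order $\ell+1$), not in operand position (the operator would then have order at least $\ell+2$), and not all of $M$; so its parent is an application $N\,L'$, again a candidate. Finally, for any candidate $(\lambda x.K)\,L$, homogeneity of the sort of $\lambda x.K$ forces the first argument to carry the maximal order $\ell$, i.e.\ $\ord(x)=\ell$, hence $\ord(L)=\ell$.

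The third step, which I expect to be the main obstacle, is to select a candidate $R=(\lambda x.K)\,L$ satisfying simultaneously: (i) $R$ is not contained in the operator part of any candidate, and (ii) no candidate is contained in $L$. Existence is by iterated extremal choice. First obtain a candidate satisfying (i) by repeatedly replacing the current candidate by one whose operator part properly contains it; this strictly enlarges the operator part, so it stops. Then, as long as (ii) fails, replace the current candidate by a suitable candidate inside its argument, and verify --- this is the delicate, case-heavy part --- that (i) is preserved: a would-be violating candidate is either contained in the previous argument, contradicting how the new candidate was chosen there, or it properly contains the previous candidate, which reduces to the previous candidate's property (i) and ultimately to outermostness in $M$, using that the argument of a redex is disjoint from the operator part of the same redex. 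Since the argument strictly shrinks, this terminates at a candidate with (i) and (ii); the accompanying order arithmetic that pins down, via homogeneity, the orders of operators, arguments, and enclosing contexts is the other fiddly ingredient.

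Finally I would read off the conclusion. By almost safety, $L$ is a component of the maximal application that contains it, hence superficially safe; so for $\ell\ge1$ every free variable of $L$ has order at least $\ord(L)=\ell$, and by the first paragraph every variable of $M$ has order at most $\ell$, so a free variable of $L$ would have order exactly $\ell$ (for $\ell=0$ the bound is automatic and there are no lower-order variables). Such a free variable $y$ would be bound by some $\lambda y.E$ with $R$ inside $E$, and then $\lambda y.E$ is a $\lambda$-abstraction of order $\ell+1$; walking up through $\lambda$-binders to a maximal $\lambda$-abstraction $C$ still containing $R$, its parent cannot be an operand (order $\ell+2$), nor $M$ (order at most $\ell$), nor a further $\lambda$-binder (by maximality), so it is an application $C\,Q$ --- a candidate whose operator part $C$ contains $R$, contradicting (i). Hence $L$ is closed. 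Likewise, a variable of order $\ell$ bound \emph{inside} $L$ would have a binder that is a $\lambda$-abstraction of order $\ell+1$ inside $L$; taking such a $\lambda$-abstraction outermost in $L$, its parent in $L$ must be an application (the other positions are excluded as above), which is a candidate contained in $L$, contradicting (ii). So $L$ uses no variable of order at least $\ell$; together with $\ord(L)=\ell$, the redex $R=(\lambda x.K)\,L$ is as required.
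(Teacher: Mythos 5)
Your proof is correct in substance but follows a genuinely different route from the paper's. The paper proves the lemma by structural induction on $M$: it descends through lambda-binders and through whichever component of a maximal application still has complexity $\ell+1$, maintaining the invariants (order at most $\ell$, free variables of order at most $\ell-1$) until the head of the current application is forced to be a $\lambda$-abstraction of order $\ell+1$; its first argument is then the desired $L$, with closedness obtained from superficial safety plus the free-variable invariant, much as in your final step. You instead make a single global extremal choice among all ``candidates'' (redexes whose operator has order $\ell+1$) and derive closedness and the absence of high-order variables from the extremality conditions (i) and (ii); this is essentially a careful elaboration of the remark in Section 5 that ``among redexes with $\ord(\lambda x.K)=\ell+1$ it is enough to choose the rightmost one,'' which the appendix proof deliberately sidesteps. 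The induction buys brevity, since the invariants do the bookkeeping that your conditions (i)/(ii) and the nesting analysis of subterm occurrences do by hand, while your argument identifies the chosen redex explicitly. Two small points. First, in your second step, for $\ell=0$ a subterm of the form $a\,M_1\,\dots\,M_k$ can have order $1=\ell+1$ (e.g.\ the constant $\letb$), so ``outermost among subterms of order $\ell+1$'' may land on a constant-headed term, for which your subsequent claim fails; you should take $N$ outermost among subterms of order $\ell+1$ that are \emph{not} of that form --- such a subterm exists by the very definition of complexity, and the two later uses of outermostness (no enclosing binder, since that binder is itself not constant-headed and has order $\ell+1$; and the operand-position exclusion) go through unchanged. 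Second, the preservation of (i) across your third step is only sketched; the sketch is sound (the operator part of an offending candidate contains a subterm occurrence lying inside the old argument $L$, so by nestedness it is either contained in $L$, which your choice inside $L$ excludes, or it contains $L$ and hence the old redex, contradicting the old (i)), but a full write-up would need this occurrence-level case analysis spelled out.
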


	\begin{proof}
		The proof is by induction on the size of $M$.
		Note that every subterm of a finite, almost safe, homogeneous lambda-term is again finite, almost safe, and homogeneous.
		Moreover, $M$ cannot be a variable nor a constant ($M$ should be of order at most $\ell$, but should have a subterm of order $\ell+1$).
		
		If $M=\lambda y.P$, then $\ord(P)\leq\ord(M)\leq\ell$ and $\ord(y)\leq\ord(M)-1\leq\ell-1$ (all free variables of $P$ are again of order at most $\ell-1$);
		we can conclude using the induction hypothesis for $P$ (its complexity is again $\ell+1)$.
		
		It remains to consider the case when $M$ is an application;
		let $M=P\,Q_1\,Q_2\,\dots\,Q_k$, where $P$ is not an application and $k\geq 1$.
		Of course all free variables of the subterms are again of order at most $\ell-1$.
		Moreover, $\ord(Q_i)\leq\ell$ for each $i\in\set{1,\dots,k}$:
		if $P$ is a constant then clearly $\ord(Q_i)=0\leq\ell$, and otherwise the order of $P$ is included to the complexity of $M$, so $\ord(Q_i)\leq\ord(P)-1\leq\ell+1-1$.
		If, for some $i\in\set{1,\dots,k}$, the complexity of $Q_i$ is $\ell+1$, then we can use the induction hypothesis for $Q_i$.
		Thus suppose conversely: for each $i\in\set{1,\dots,k}$ the complexity of $Q_i$ is at most $\ell$.
		This implies that the complexity of $P$ is $\ell+1$ (a subterm causing that the complexity of $M$ is $\ell+1$ is either located inside $P$,
		or it is $P\,Q_1\,\dots\,Q_i$ for some $i\in\set{1,\dots,k}$, but in the latter case also $P$ itself has order $\ell+1$,
		which is included to the complexity of $P$).
		If $\ord(P)\leq\ell$, we can use the induction hypothesis for $P$, and we are done.
		We can thus assume that $\ord(P)=\ell+1$.
		Note that $P$ is not a constant (by definition the order of a constant is not included to its complexity;
		the complexity of a constant is $0$, while the complexity of $P$ is $\ell+1\geq 1$),
		nor a variable ($M$ has no free variables of order $\ell+1$),
		so $P$ is of the form $\lambda x.K$.
		By homogeneity, $\ord(Q_1)=\ell$ (the first argument of $P$ has the highest order).
		Because $M$ is almost safe, $Q_1$ is superficially safe, so all its free variables are of order at least $\ord(Q_1)$, that is, at least $\ell$.
		But, by assumption, all free variables of $M$ (hence also of $Q_1$) are of order at most $\ell-1$,
		which implies that $Q_1$ is closed.
		Observe also that $Q_1$ does not use any variables of order at least $\ell$.
		Indeed, every variable $y$ used in $Q_1$ is introduced in a subterm of the form $\lambda y.S$ with $\ord(\lambda y.S)\geq\ord(y)+1$;
		the order of $\lambda y.S$ does not exceed the complexity of $Q_1$, which is at most $\ell$, so $\ord(y)\leq\ell-1$.
		We thus have the thesis of the \lcnamecref{lem:find-good} with $L=Q_1$.
	\end{proof}

	Next, we observe that the reductions we consider preserve safety (we remark that this is not true for arbitrary beta-reductions):
	
	\begin{lemma}\label{lem:preserve-safety}
		Let $M,N$ be lambda-terms such that $N$ is obtained from $M$ by reducing a subterm of the form $(\lambda x.K)\,L$ in which $L$ is closed.
		If $M$ is safe, then $N$ is safe as well.
	\end{lemma}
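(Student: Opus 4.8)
The plan is to write $M$ as $C[(\lambda x.K)\,L]$, where $C$ is the surrounding one-hole context, so that $N=C[K[L/x]]$, and then to verify directly the two clauses defining safety of $N$: that $N$ is superficially safe, and that for every subterm of $N$ of the form $P\,R_1\,\dots\,R_k$ with $P$ not an application and $k\geq 1$, all of $P,R_1,\dots,R_k$ are superficially safe. First I would record two easy facts. Every subterm of a safe lambda-term is \emph{almost safe}, since the second clause of the definition of safety is inherited by subterms. And $L$ itself is safe: it is vacuously superficially safe because it is closed, and it is almost safe as a subterm of the safe term $M$; in particular, the head and all arguments of every subterm of $L$ of the above applicative form are superficially safe.

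The technical core is an observation that superficial safety transfers along the obvious correspondence between subterms of $N$ and subterms of $M$. Beta-reduction preserves sorts, hence orders; and since $L$ is closed, the free variables of $K'[L/x]$ are exactly those of $K'$ other than $x$, so performing the substitution never enlarges the free-variable set of a subterm relative to its pre-image. Hence, if $S$ is a subterm of $N$ with pre-image $\bar S$ in $M$ — that is, $S=\bar S$ in the untouched part of $N$; or $S=C'[K[L/x]]$ and $\bar S=C'[(\lambda x.K)\,L]$ for a sub-context $C'$; or $S$ lies inside the inserted copy of $K[L/x]$, being either a subterm of $L$ (hence of $M$) or $K'[L/x]$ with $\bar S=K'$ a subterm of $K$ — then $\ord(S)=\ord(\bar S)$ and every free variable of $S$ is a free variable of $\bar S$, so $\bar S$ superficially safe implies $S$ superficially safe.

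Superficial safety of $N$ is then the instance $S=N$, $\bar S=M$. For the subterm clause, fix a subterm $S=P\,R_1\,\dots\,R_k$ of $N$ ($P$ not an application, $k\geq 1$) and argue that each of its pieces is superficially safe according to where the head $P$ comes from. If $P$ and all the $R_i$ are images of the head and arguments of an applicative subterm of $M$ (this covers, in particular, every $P$ lying in the untouched part of $N$), then those $M$-side pieces are superficially safe by safety of $M$ and superficial safety transfers. The remaining possibilities are exposed by the reduction: (i) $P$ is the head of a copy of $L$ — with $L$ possibly an application whose spine concatenates with arguments supplied inside $K$ — in which case $P$ and the leading $R_i$ are superficially safe because $L$ is safe, and the later $R_i$ are of the form $K_j[L/x]$, hence superficially safe by transfer from $K_j$, which is superficially safe because $x\,K_1\,\dots\,K_m$ is a subterm of the safe term $M$ of applicative form; and (ii) $P$ is $K[L/x]$ itself, appearing as a head once the binder $\lambda x$ has been erased — here, since $(\lambda x.K)\,L$ is a subterm of $M$ with $\lambda x.K$ not an application, safety of $M$ gives that $\lambda x.K$ is superficially safe, and because $\ord(K)\leq\ord(\lambda x.K)$ and the free variables of $K[L/x]$ coincide with those of $\lambda x.K$, this forces $K[L/x]$ to be superficially safe.

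The mechanical parts — preservation of sorts, and the stability of free-variable sets under substitution of a closed term — cause no trouble. The main obstacle is the bookkeeping in the subterm clause: one has to pin down exactly which applicative subterms of $N$ have no shape-preserving counterpart in $M$ (precisely those in which the reduct $K[L/x]$, or a copy of $L$ inside it, is an application whose spine merges with surrounding arguments, or in which erasing the $\lambda x$ layer turns $K[L/x]$ into a head), and then check that in each such case every newly exposed head and argument is superficially safe, invoking safety of $L$ for the pieces coming from $L$ and safety of $M$ — together with $\ord(K)\leq\ord(\lambda x.K)$ — for the rest. I expect this classification, although shallow, to be the bulk of the formal proof.
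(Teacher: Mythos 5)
Your proposal is correct and follows essentially the same route as the paper's proof: both rest on the facts that substituting the closed term $L$ does not enlarge free-variable sets (so superficial safety transfers from subterms of $M$ to their images in $N$), that $L$ and the components of its applicative spine are superficially safe by their position in the safe term $M$, and that $K[L/x]$ is superficially safe because $\lambda x.K$ is, has the same free variables, and has order at least $\ord(K[L/x])$. Your write-up is merely more explicit about the case analysis (in particular the spine-merging cases) that the paper's proof compresses into a few sentences.
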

	
	\begin{proof}
		The lambda-terms $(\lambda x.K)\,L$ and $K[L/x]$ have the same free variables, so every superficially safe lambda-term containing the occurrence of $(\lambda x.K)\,L$ remains
		superficially safe after the reduction.
		Likewise, every subterm of $K$ that is superficially safe remains superficially safe after substituting $L$ for $x$ in it,
		because this substitution does not add any new free variables (here we use the assumption that $L$ is closed).
		Next, safety of $N$ may require that $L$ is superficially safe (if $L$ is substituted into some application), but $L$ in $M$ is an argument of an application,
		so it is indeed superficially safe.
		Finally, it may be also needed that $K[L/x]$ is superficially safe.
		But we know that $\lambda x.K$ is superficially safe, that $\lambda x.K$ and $K[L/x]$ have the same free variables (here we use again the fact that $L$ is closed),
		and $\ord(K[L/x])\leq\ord(\lambda x.K)$; this implies that indeed $K[L/x]$ is superficially safe, and finishes the proof.
	\end{proof}
	
	Having the above lemmata, we can now restate \cref{lem:compl-1red-multi-term} in the style of \cref{lem:soundness-red-multi}:
	
	\begin{lemma}\label{lem:completeness-step}
		If $M$ is a finite, homogeneous, safe, closed lambda-term of sort $\otyp$ and complexity $\ell+1$, where $\ell\geq 0$,
		then there exists a safe lambda-term $N$ such that $M\redb N$
		and such that whenever we can derive $\emptyset\vdash N:(u\oplus_\ell w,\r)$ then there exist $u',w'$ such that we can derive $\emptyset\vdash M:(u'\oplus_\ell w',\r)$
		and $2^{u(a)}\cdot w(a)\leq 2^{u'(a)}\cdot w'(a)$.
	\end{lemma}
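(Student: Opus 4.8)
The plan is to produce $N$ by reducing a single, carefully chosen redex inside $M$, and then to transfer any type derivation for $N$ back to $M$ through \cref{lem:compl-1red-multi-term}. First I would check that $M$ meets the hypotheses of \cref{lem:find-good} for the given $\ell$: $M$ is finite and homogeneous by assumption, and it is safe, hence in particular almost safe; its complexity is $\ell+1$ by assumption; since $M$ is closed of sort $\otyp$ its order is $0\le\ell$, and having no free variables it vacuously has all free variables of order at most $\ell-1$. Thus \cref{lem:find-good} applies and supplies a subterm of $M$ of the form $(\lambda x.K)\,L$, where $L$ is closed, has order $\ell$, and uses no variables of order at least $\ell$.

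I would then let $N$ be obtained from $M$ by reducing exactly this occurrence of $(\lambda x.K)\,L$, i.e.\ replacing it with $K[L/x]$; so $M\redb N$. Since $L$ is closed, \cref{lem:preserve-safety} gives that $N$ is safe, which is the first assertion of the lemma. For the second assertion, suppose we can derive $\emptyset\vdash N:(u\oplus_\ell w,\r)$. The pair $(M,N)$ is exactly in the situation of \cref{lem:compl-1red-multi-term}: $N$ arises from $M$ by reducing the subterm $(\lambda x.K)\,L$ with $L$ closed, of order $\ell$, and not using variables of order at least $\ell$. Applying that lemma with empty environment and target type $\r$ yields $u',w'$ with $\emptyset\vdash M:(u'\oplus_\ell w',\r)$ derivable and, for every $a\in\setP$, the bound $2^{u(a)}\cdot w(a)\le 2^{u'(a)}\cdot w'(a)$ (together with $u(a)\le u'(a)$ and the implication $u(a)+w(a)=0\Rightarrow u'(a)+w'(a)=0$, which are simply not needed here). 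This is precisely the inequality demanded in \cref{lem:completeness-step}, finishing the argument.

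There is no real obstacle in this lemma: all the substance — locating a redex amenable to the subderivation-reorganisation argument, checking that such reductions preserve safety, and performing the reorganisation with the exponential bookkeeping on productivity values — has already been isolated in \cref{lem:find-good}, \cref{lem:preserve-safety}, and \cref{lem:compl-1red-multi-term}. The only point requiring a moment's attention is matching the interfaces: verifying that $M$ satisfies the order and free-variable side conditions of \cref{lem:find-good} (immediate from $M$ being closed of sort $\otyp$), and noting that the redex it produces already has exactly the closedness and order properties that \cref{lem:compl-1red-multi-term} takes as input.
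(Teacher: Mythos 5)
Your proposal is correct and follows exactly the paper's own proof: invoke \cref{lem:find-good} to locate a redex $(\lambda x.K)\,L$ with $L$ closed, of order $\ell$, and free of variables of order at least $\ell$; reduce it to obtain $N$; use \cref{lem:preserve-safety} for safety of $N$; and transfer the derivation via \cref{lem:compl-1red-multi-term}. Your explicit verification of the hypotheses of \cref{lem:find-good} (order $0\leq\ell$ and the vacuous free-variable condition for a closed term of sort $\otyp$) is a detail the paper leaves implicit, but the argument is the same.
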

	
	\begin{proof}
		First, using \cref{lem:find-good} we find in $M$ a redex $(\lambda x.K)\,L$ such that $L$ is closed, has order $\ell$, and does not use any variables of order at least $\ell$.
		As $N$ we take a lambda-term in which some occurrence of this redex is reduced, that is, replaced by $K[L/x]$.
		Clearly $M\redb N$.
		Moreover, by \cref{lem:preserve-safety}, the resulting lambda-term is safe.
		Finally, a derivation of $\emptyset\vdash N:(u\oplus_\ell w,\r)$ can be converted into a derivation of $\emptyset\vdash M:(u'\oplus_\ell w',\r)$ for appropriate $u',w'$
		by \cref{lem:compl-1red-multi-term}.
	\end{proof}

	Finally, we prove the following \lcnamecref{lem:completeness-fin-multi}, being a multi-letter counterpart of \cref{lem:completeness-fin}:

	\begin{lemma}\label{lem:completeness-fin-multi}
		For every $m\in\Nat$ there exists a function $H_m\colon\Nat\to\Nat$ such that if $M$ is a closed safe lambda-term of sort $\otyp$ and complexity at most $m$,
		and $v\colon\setP\to\Nat$ is a function such that in $\BT(M)$ there is a finite branch having, for every $a\in\setP$, at least $v(a)$ occurrences of $a$,
		then there exists $v'$ such that we can derive $\emptyset\vdash M:(v',\r)$ and $v(a)\leq H_m(v'(a))$ for all $a\in\setP$.
	\end{lemma}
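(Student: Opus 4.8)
The plan is to mirror the single-letter argument of \cref{sec:completeness}, replacing each auxiliary result by its multi-letter counterpart established above. I would fix $M$, the function $v\colon\setP\to\Nat$, and a finite branch $\beta$ of $\BT(M)$ witnessing the hypothesis, and define $H_0(v)=v$ and $H_{\ell+1}(v)=H_\ell(2^v)$. A one-line induction then shows each $H_m$ is nondecreasing (a composition of nondecreasing maps), which is needed at the end.

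The first steps are bookkeeping reductions of the data. Since $\beta$ is finite, only a finite part of $M$ is used to generate the prefix of $\BT(M)$ containing $\beta$; as in \cref{sec:soundness} I would ``cut off'' the remainder, replacing irrelevant subterms by lambda-terms $\lambda x_1\lamdots\lambda x_k.\upomega$ of matching sort. Cutting keeps the term closed, safe, of sort $\otyp$, and of complexity at most $m$ (a cut replaces a subterm by one of the same sort, so it cannot raise the complexity), and it keeps $\beta$ a branch of the resulting finite $M''$. Since no type judgment can be derived for $\upomega$---hence none for any subterm $\lambda x_1\lamdots\lambda x_k.\upomega$---every derivation of $\emptyset\vdash M'':(v',\r)$ is verbatim a derivation of $\emptyset\vdash M:(v',\r)$, so it suffices to treat finite $M$. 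Next, \cref{lem:completeness-hom} lets me further assume $M$ is homogeneous: it supplies a homogeneous $M'$ with the same Böhm tree and complexity and transfers derivations back. Finally I would set up, exactly as in \cref{sec:completeness}, the reduction sequence $M=M_m,M_{m-1},\dots,M_0=\BT(M)$ in which, whenever the current term has complexity $\ell+1$, one contracts an $\ell$-good redex; \cref{lem:find-good} guarantees such a redex exists, homogeneity is preserved by every beta-reduction, and \cref{lem:preserve-safety} shows the $\ell$-good reductions preserve safety. Consequently each $M_\ell$ is homogeneous and safe, each $M_\ell$ has complexity at most $\ell$, and each redex $(\lambda x.K)\,L$ contracted between $M_{\ell+1}$ and $M_\ell$ has $L$ closed, of order $\ell$ (by homogeneity), using no variables of order $\geq\ell$ (by $\ell$-goodness)---precisely the hypotheses of \cref{lem:compl-1red-multi-term}.

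The heart of the proof is an induction on $\ell\in\set{0,\dots,m}$ yielding, at level $\ell$, a function $v_\ell$ with $\emptyset\vdash M_\ell:(v_\ell,\r)$ derivable and $v(a)\leq H_\ell(v_\ell(a))$ for all $a\in\setP$. For $\ell=0$, $M_0=\BT(M)$ has complexity $0$; taking $v_0(a)$ to be the exact number of occurrences of $a$ on $\beta$, \cref{lem:completeness-fin-stop} supplies the derivation and $v(a)\leq v_0(a)=H_0(v_0(a))$. For the step, starting from $\emptyset\vdash M_\ell:(v_\ell,\r)$ I would pass to $\emptyset\vdash M_\ell:(\zero\oplus_\ell v_\ell,\r)$ via \cref{lem:std-eqv-ext} (valid because $M_\ell$ is closed of complexity $\leq\ell$), then apply \cref{lem:compl-1red-multi-term} once for each $\ell$-good reduction between $M_{\ell+1}$ and $M_\ell$, taken in reverse. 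Chaining the inequalities $2^{u(a)}w(a)\leq 2^{u'(a)}w'(a)$ yields a derivation of $\emptyset\vdash M_{\ell+1}:(u_{\ell+1}\oplus_\ell w_{\ell+1},\r)$ with $v_\ell(a)\leq 2^{u_{\ell+1}(a)}\cdot w_{\ell+1}(a)$ for every $a$. Putting $v_{\ell+1}(a)=u_{\ell+1}(a)+w_{\ell+1}(a)$ and invoking \cref{lem:std-eqv-ext} again gives $\emptyset\vdash M_{\ell+1}:(v_{\ell+1},\r)$, and $v_\ell(a)\leq 2^{u_{\ell+1}(a)}\cdot w_{\ell+1}(a)\leq 2^{v_{\ell+1}(a)}$; hence, by monotonicity of $H_\ell$, $v(a)\leq H_\ell(v_\ell(a))\leq H_\ell(2^{v_{\ell+1}(a)})=H_{\ell+1}(v_{\ell+1}(a))$. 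Taking $\ell=m$ and $v'=v_m$ finishes the proof.

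I do not anticipate a real obstacle, since the genuinely delicate work---the ``keep the best subderivation'' argument and its exponential blow-up---is already packaged inside \cref{lem:compl-sub} and \cref{lem:compl-1red-multi-term}, and the homogenization inside \cref{lem:completeness-hom}. The only points needing attention are that everything must hold simultaneously for every $a\in\setP$ (but all the cited lemmas are stated coordinatewise, so this is automatic) and that the $\ell$-good redexes really do meet the closedness-and-order hypotheses of \cref{lem:compl-1red-multi-term}, which is exactly the homogeneity-plus-safety discussion of \cref{sec:completeness} together with \cref{lem:preserve-safety} and \cref{lem:find-good}.
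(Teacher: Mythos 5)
Your proposal is correct and follows essentially the same route as the paper's own proof: cut to a finite term, homogenize via \cref{lem:completeness-hom}, then climb the reduction sequence $M_0,\dots,M_m$ of $\ell$-good reductions, chaining the inequality $2^{u(a)}\cdot w(a)\leq 2^{u'(a)}\cdot w'(a)$ from \cref{lem:compl-1red-multi-term} and converting between the plain and $\oplus_\ell$ judgments with \cref{lem:std-eqv-ext}. The only difference is presentational: the paper packages the per-reduction step into \cref{lem:completeness-step} and organizes the argument as an external induction on $m$ with an internal induction on the maximal reduction length (which is also where it makes the finiteness of your reduction chains explicit via strong normalization), whereas you chain the reductions directly as in the main-body sketch.
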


	\begin{proof}
		We define $H_m$ as a tower of powers of $2$ of height $m$;
		formally, $H_0(n)=n$ and $H_m(n)=2^{H_{m-1}(n)}$ for $m\geq 1$.
		Fix now a closed safe lambda-term $M$ of sort $\otyp$ and complexity at most $m$, fix a finite branch in $\BT(M)$,
		and fix a function $v\colon\setP\to\Nat$ such that, for every $a\in\setP$, on the fixed branch there are at least $v(a)$ occurrences of $a$.
		
		Suppose first that $M$ is finite and homogeneous.
		In this case we prove the thesis by induction on $m$.
		If $m=0$, as $v'$ we take the function such that, for every $a\in\setP$, on the fixed branch there are exactly $v'(a)$ occurrences of $a$;
		clearly $v(a)\leq v'(a)=H_m(v'(a))$, so the thesis follows from \cref{lem:completeness-fin-stop}.
		
		For $m\geq 1$ we define $\ell=m-1$ and we prove a stronger thesis:
		there exist $u,w$ such that we can derive $\emptyset\vdash M:(u\oplus_\ell w,\r)$ and $v(a)\leq H_\ell(2^{u(a)}\cdot w(a))$ for all $a\in\setP$.
		This thesis is indeed stronger, since by \cref{lem:std-eqv-ext}(a) we can then derive $\emptyset\vdash M:(u+w,\r)$,
		and $H_\ell(2^{u(a)}\cdot w(a))\leq H_\ell(2^{u(a)}\cdot 2^{w(a)})=H_m(u(a)+w(a))$.
		We prove this stronger thesis by an internal induction on the maximal number $u$ such that there exists a sequence of $n$ beta-reductions from $M$
		(recall hat $M$, hence also this number, is finite).
		One possibility is that $M$ is actually of complexity at most $\ell$ (this includes the base case of $n=0$).
		Then we simply use the induction hypothesis of the external induction saying that we can derive $\emptyset\vdash M:(v',\r)$
		with $v(a)\leq H_\ell(v'(a))=H_\ell(2^{\zero(a)}\cdot v'(a))$ for all $a\in\setP$;
		\cref{lem:std-eqv-ext}(b) implies that we can then derive $\emptyset\vdash M:(\zero\oplus_\ell v',\r)$, as needed.
		
		The other possibility is that $M$ has complexity exactly $m$.
		We then apply \cref{lem:completeness-step}, which gives us a safe lambda-term $N$ such that $M\redb N$.
		Clearly $N$ is again finite, closed, homogeneous, of sort $\otyp$, and of complexity at most $m$,
		and the maximal length of a sequence of beta-reductions starting in $N$ is smaller than for $M$.
		We can thus apply the induction hypothesis (of the internal induction) to $N$, and obtain a derivation of $\emptyset\vdash N:(u\oplus_\ell w,\r)$,
		where $v(a)\leq H_\ell(2^{u(a)}\cdot w(a))$ for all $a\in\setP$.
		Based on it, from \cref{lem:completeness-step} we obtain a derivation of $\emptyset\vdash M:(u'\oplus_\ell w',\r)$
		where $2^{u(a)}\cdot w(a)\leq 2^{u'(a)}\cdot w'(a)$
		(hence also $H_\ell(2^{u(a)}\cdot w(a))\leq H_\ell(2^{u'(a)}\cdot w'(a))$) for all $a\in\setP$, as needed.
		This finishes the proof in the case when $M$ is finite and homogeneous.

		Next, suppose that $M$ is finite, but not homogeneous.
		In this case we use \cref{lem:completeness-hom} to obtain a finite closed safe homogeneous lambda-term $M'$ of sort $\otyp$, of complexity at most $m$,
		and such that $\BT(M')=\BT(M)$.
		By the already considered case we obtain a derivation of $\emptyset\vdash M':(v',\r)$ with $v(a)\leq H_m(v'(a))$ for all $a\in\setP$,
		and \cref{lem:completeness-hom} gives us a derivation of $\emptyset\vdash M:(v',\r)$, as required.

		Finally, suppose that $M$ is infinite.
		Recall the definition of a cut, introduces in the proof of \cref{lem:soundness-fin-multi}.
		It is known from prior work~\cite[Lemma 4.2]{diagonal-types} that having a lambda-term $M$ and a finite branch of $\BT(M)$ one can find a finite cut $M'$ of $M$
		such that the same branch exists also in $\BT(M')$.
		From the previous part of our proof we obtain a derivation of $\emptyset\vdash M':(v',\r)$ with $v(a)\leq H_m(v'(a))$ for all $a\in\setP$.
		It is easy to change it into a derivation of $\emptyset\vdash M:(v',\r)$.
		Indeed, in order to obtain $M$ from $M'$ one has to replace some subterms of the form $\lambda x_1\lamdots\lambda x_k.\upomega$ with some other lambda-terms.
		We can replace the corresponding subterms also in lambda-terms occurring in all type judgments of our derivation.
		The key point is that in the derivation we have cannot have a type judgment concerning a subterm of the form $\lambda x_1\lamdots\lambda x_k.\upomega$
		(because there is no rule for $\upomega$), so after replacing those subterms the derivation is still correct.
	\end{proof}

\end{document}